%% file: main.tex
\documentclass[runningheads]{llncs}
\usepackage[T1]{fontenc}
\usepackage{graphicx}
\input{packages}

\input{macros}
\usepackage{color}

\begin{document}
\title{Towards Actionable Strategy Certificates in Stochastic Parity Games}
\author{
	Christel Baier\inst{1}
	\and
	Diane Cauquil\inst{2}
	\and
	Calvin Chau\inst{1}
	\and
	Sascha Klüppelholz\inst{1}
	\and
	Anne-Kathrin Schmuck\inst{3}
}
\institute{
	$^1$ Technische Universität Dresden, Dresden, Germany\\
	\email{\{christel.baier,calvin.chau,sascha.klueppelholz\}@tu-dresden.de} \\
	$^2$ ENS Paris-Saclay, Paris, France\\
	\email{diane.cauquil@ens-paris-saclay.fr}\\
	$^3$ Max Planck Institute for Software Systems, Kaiserslautern, Germany\\
	\email{akschmuck@mpi-sws.org}\\
}
\authorrunning{Baier et al.}
\maketitle              %
\begin{abstract}
We propose a new approach for synthesizing large sets of winning strategies in stochastic parity games (2.5-player games) with quantitative objectives. Instead of computing a single, fully specified winning strategy, we introduce \ASCertsLong (\ASCerts) as a local and permissive representation of a large class of system player winning strategies. To this end, we extend known certificates for \emph{stochastic invariants} to the setting of games. Our certificates prove that synthesized strategies remain within a safe region of the game with probability at least $\lambda \in [0,1]$. As such, the certificates enhance the trustworthiness of synthesized strategies. The crux of our approach is to reinterpret and leverage the certificates as concise, local, and permissive representation of (possibly infinitely many) strategies. By carefully combining our certificates for stochastic invariants with strategy templates for almost-sure winning, we obtain a novel local representation of quantitatively winning strategies in stochastic parity games. This enables efficient synthesis, adaptation, and runtime strategy extraction, making \ASCerts well suited for logical control in uncertain and adversarial environments. We provide a proof-of-concept implementation and demonstrate the potential of applying \ASCerts in runtime adaptation on a case study.

\keywords{Stochastic games, Certificates, Permissiveness, Adaptation}
\end{abstract}

\section{Introduction}
\input{sections/introduction}

\section{Preliminaries}
\label{section:prelims}
\input{sections/preliminaries}

\section{Overview and Problem Statement}
\label{section:overview}
\input{sections/overview}

 \section{Certificates for Stochastic Invariants}
 \label{section:certificates}
 \input{sections/certificates}

 \section{\ASCertsLong (\ASCerts)}
 \label{section:stochastic-template}
 \input{sections/stochastic_templates}

 \section{Experimental Evaluation}
 \label{section:experiments}
 \input{sections/evaluation}

 \section{Conclusion}
 \label{section:conclusion}
 \input{sections/conclusion}

 \input{credits}

\bibliographystyle{splncs04}
\bibliography{refs.bib}
 \newpage
 \appendix
 \section{Proofs for \Cref{section:certificates}}
 \input{sections/appendix-certificates}

 \section{Proofs for \Cref{section:stochastic-template}}
 \input{sections/appendix-stochastic-template}

 \section{Supplementary Material for \Cref{section:experiments}}
 \label{section:appendix-experiments}
 \input{sections/appendix-experiments}

\end{document}

%% file: packages.tex
\usepackage[english]{babel}

\usepackage{multirow}
\usepackage{booktabs}

\usepackage{amsmath}
\usepackage{graphicx}
\usepackage[colorlinks=true, allcolors=blue]{hyperref}
\usepackage{wasysym}
\usepackage{amssymb}
\usepackage[capitalise]{cleveref}
\usepackage{todonotes}
\usepackage{mathtools}
\usepackage[inline]{enumitem}
\usepackage{tcolorbox}
\usepackage{enumitem}%
\usepackage{apptools}
\AtAppendix{\counterwithin{lemma}{section}}

\usepackage{algorithm}
\usepackage[noend]{algpseudocode}

\usepackage{mdframed}

\usepackage{subcaption}

\usepackage{tikz}
\usetikzlibrary{arrows.meta, calc}
\usetikzlibrary{fit}

    \definecolor{paleblue}{rgb}{0.49,0.71,1.0}
    \definecolor{palered}{rgb}{1.0,0.25,0.25}
    
    \tikzset{statep1/.style={circle,draw,thick,fill=palered,minimum size=.8cm}}
      \tikzset{statep0/.style={rectangle,draw,thick,fill=paleblue,minimum size=.8cm, rounded corners}}
      \tikzstyle{line}=[-, >=latex, thick]
      \tikzstyle{arrow}=[->, >=latex, thick]

\usepackage{thmtools} 
\usepackage{thm-restate}

%% file: macros.tex
\newcommand{\union}{\cup}

\newcommand{\intersection}{\cap}
\newcommand{\prob}{\mathsf{Pr}}
\newcommand{\vect}[1]{\mathbf{#1}}
\newcommand{\reals}{\mathbb{R}}
\newcommand{\realsnn}{\reals_{\geq 0}}
\DeclareMathOperator{\suppOp}{supp}
\newcommand{\supp}[1]{\suppOp(#1)}

\newcommand{\naturals}{\mathbb{N}}

\newcommand{\card}[1]{{\lvert {#1} \rvert}}
\DeclareMathOperator{\distrOp}{Distr}
\newcommand{\distr}[1]{\distrOp(#1)}

\newcommand{\win}{\mathsf{Win}}

\newcommand{\expectation}[2]{\mathbb{E}^{#1}[#2]}

\newcommand{\Conj}{\bigwedge}

\newcommand{\eventually}{\lozenge}
\newcommand{\globally}{\square}

\newcommand{\states}{S}
\newcommand{\state}{s}

\newcommand{\actions}{Act}
\newcommand{\action}{a}
\newcommand{\init}{\bar{\state}}
\newcommand{\transRel}{\delta}
\newcommand{\transMat}{\vect{P}}
\newcommand{\exit}{\bot}

\newcommand{\SA}{\mathsf{En}}

\newcommand{\scheduler}{\sigma}
\newcommand{\schedulers}{\Sigma}

\newcommand{\plays}{\Pi}
\newcommand{\playsFin}{\Pi_{\mathsf{fin}}}
\newcommand{\play}{\pi}
\newcommand{\SApref}{\rho}

\newcommand{\sg}{\mathcal{G}}
\newcommand{\playerOne}{1}
\newcommand{\playerTwo}{2}

\newcommand{\player}{i}
\newcommand{\statesOne}{\states_\playerOne}
\newcommand{\statesTwo}{\states_\playerTwo}
\newcommand{\schedulersOne}{\schedulers_\playerOne}
\newcommand{\schedulersTwo}{\schedulers_\playerTwo}

\DeclareMathOperator{\srcOp}{src}
\newcommand{\src}[1]{\srcOp(#1)}

\newcommand{\p}[1]{\ensuremath{\text{Player}~#1}}
\newcommand{\pOne}{\ensuremath{\p{\playerOne}}}
\newcommand{\pTwo}{\ensuremath{\p{\playerTwo}}}

\newcommand{\subgame}[2]{\sg[#1,#2]}

\newcommand{\valueVector}{\vect{v}^*}

\newcommand{\coloring}{\mathbb{C}}
\newcommand{\parity}[1]{\mathsf{Parity}(#1)}
\newcommand{\even}[1]{{#1}_\mathsf{even}}
\newcommand{\odd}[1]{{#1}_\mathsf{odd}}

\newcommand{\template}{\Lambda}
\newcommand{\safetyTemplate}{\template_{\mathsf{Unsafe}}}
\newcommand{\liveTemplate}{\template_{\mathsf{Live}}}
\newcommand{\coliveTemplate}{\template_{\mathsf{Colive}}}

\newcommand{\almostParityTemplate}{\textsc{AsParityTemplate}}
\newcommand{\stochasticTemplate}{T}
\newcommand{\certs}{C}

\usepackage{xspace}
\newcommand{\ASCertLong}{Actionable Strategy Certificate\xspace}
\newcommand{\ASCertsLong}{Actionable Strategy Certificates\xspace}
\newcommand{\ASCert}{\textsf{ASCert}\xspace}
\newcommand{\ASCerts}{\textsf{ASCerts}\xspace}

\newcommand{\changes}[1]{#1}

\newcommand{\puddle}{%
	\tikz[baseline=-0.7ex]{%
		\fill[blue!75] (0,0) ellipse (0.18 and 0.10);%
		\fill[white, opacity=0.45] (-0.04,0.02) ellipse (0.06 and 0.03);%
	}%
}

\newcommand{\machine}{%
	\tikz[baseline=2.6ex]{%
\draw[gray, rounded corners=2pt, thick, fill=gray!60] (0.32,0.32) rectangle (0.68,0.68);
\node at (0.5, 0.5) {\includegraphics[scale=0.1]{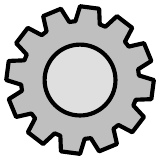}};
	}%
}

\newcommand{\station}{%
	\tikz[baseline=2.6ex, inner sep=0pt]{%
		\draw[green!50!black, rounded corners=2pt, thick, fill=green!70!black] (0.32,0.32) rectangle (0.68,0.68);
		\node at (0.5, 0.5) {\textsf{C}};
	}%
}

%% file: sections/introduction.tex
\emph{Games on graphs} provide a powerful and unifying abstraction for the analysis and synthesis of logical controllers in adversarial and uncertain environments. In this setting, the evolution of a system is modeled as a game between a system player, representing the controller to be synthesized, and one or more environment players, capturing uncontrollable or adversarial influences. The logical correctness of the system's evolution is then captured via temporal objectives -- such as safety, liveness, or fairness -- which are naturally expressed via $\omega$-regular conditions, leading to comprehensively studied classes of games on graphs \cite{fijalkow2025gamesgraphslogicautomata} such as parity games. Well-established algorithmic techniques, such as reactive synthesis \cite{finkbeiner2016synthesis}, are able to automatically compute a winning strategy for the system player in such games, which acts as a correct-by-construction logical controller for the modeled system evolution. As a result, games on graphs have become a cornerstone of formal methods for logical controller synthesis and verification of cyber-physical systems (CPS) \cite{KressGazitFainekosPappas2009,AlthoffBeltaReviewFMCEforAutonomousDriving,tabuada-2009-verification,belta2017formal,YIN2024100940,lindemann2025formal}.

However, while classical game-solving techniques yield strategies that are correct-by-design, they often fall short in providing suitable means for adaptability and refinement needed  in many CPS design settings, e.g., if new information about the environment becomes available or concrete action choices need to be deferred to other layers of the control stack. To address this limitation, \emph{permissive strategies} -- i.e., data structures that capture not only a single strategy but large sets of locally admissible actions -- have been advocated in various sub-fields of logical control design, e.g., supervisory control \cite{ramadge1989control}, runtime monitoring \cite{lee1999runtime,havelund2002synthesizing} or shielding \cite{konighofer2022correct,seshia2022toward}. Yet, almost all of these approaches are targeted towards safety certification, i.e., to ensure that something bad never happens. While safety is of utmost importance in CPS design, most systems also require liveness guarantees. For instance, warehouse robots need to eventually deliver packages, maintenance should be completed and customers should be served in a fair manner. Recent work therefore extends permissiveness to the full class of $\omega$-regular objectives via permissive strategy templates (PeSTels) \cite{anand_permissive_templates_2023}. Similar to their safety-predecessors, PeSTels emphasize a concise local representation which allows for adaptability and refinement both during synthesis and at runtime.  

While PeSTels have been applied successfully in the CPS context, e.g., for $\omega$-regular shielding of learned policies \cite{anand2025followstarsdynamicomegaregular} or for multi-level robot control \cite{10221705}, their computation typically assumes a deterministic model. Yet, in CPS scenarios stochasticity is usually unavoidable. Physical environments exhibit inherent uncertainty, learned models are imperfect, and interactions with other agents may be only probabilistically predictable. These considerations naturally lead to stochastic games, in particular 2.5-player games, where probabilistic transitions coexist with adversarial choices \cite{chatterjee_quantitative_games_2004}. In such games, correctness is often expressed not only qualitatively (almost-sure winning), but also \emph{quantitatively}, requiring that objectives are satisfied with probability at least a given threshold $\lambda \in [0, 1]$. Quantitative guarantees 
can often be more adequate than almost-sure correctness which may be either unrealistic or overly conservative. 

Unfortunately, the direct extension of PeSTels to quantitative winning strategies in stochastic parity games is not straightforward. Intuitively, PeSTels localize the information over frequencies of actions (never, finitely often or infinitely often) such that liveness is attained in the limit. This reasoning nicely transfers to almost-sure winning in stochastic games \cite{phalakarn_winning_2024}, but breaks in more general settings.  In fact, quantitative objectives inherently resist a purely local characterization, as the contribution of a local decision to a global probability threshold depends on complex interactions with future choices and stochastic outcomes.

In this paper, we thus take a different route in generalizing PeSTels to quantitative winning strategies in stochastic games by emphasizing simplicity, locality and adaptability over completeness. An important observation is that quantitative parity objectives can be decomposed into \emph{stochastic invariants} and \emph{almost-sure parity objectives} \cite{abate_quantitative_2025}. More precisely, if a strategy (i) ensures that it remains in a (safe) game region with probability at least $\lambda$ and (ii) within this region satisfies the parity objective (or leaves the region) almost surely, then it can be guaranteed that the parity objective is satisfied with probability at least $\lambda$. 

Based on this insight, we develop and combine representations for (i) safe strategies and (ii) almost-sure winning strategies into a concise, local, and permissive representation for quantitatively winning strategies in stochastic parity games. First, we extend certificates for stochastic invariants in Markov decision processes (MDPs) to stochastic games and leverage them as permissive representations of safe strategies. 
This leads to a novel \emph{actionable interpretation} of such certificates: strategic choices can act upon new information available in later stages of the synthesis process, e.g., when other components are fixed in distributed settings, low-level control choices have been made in multi-level control stacks, or refined knowledge about environment actions is available at runtime. %
We then carefully combine these certificates with strategy templates for almost-sure winning from \cite{phalakarn_winning_2024}, thereby obtaining our new formalism, which we coin \emph{\ASCertsLong (\ASCerts)}. %
We present a heuristic approach to synthesize \ASCerts, putting an emphasis on permissiveness, locality and simplicity. For the important class of \emph{globally optimal} strategies, i.e., strategies that behave optimally from every state, we show that our \ASCerts provide a \emph{relatively complete} characterization. That is, all strategies captured by our \ASCerts are guaranteed to be globally optimal. Conversely, there always exists a globally optimal strategy that is captured by our \ASCerts.
As a consequence, the synthesis procedure yields \ASCerts that are uniquely characterized. Finally, our synthesis techniques are implemented in a proof-of-concept tool. %

\smallskip
\noindent\textbf{Contribution.}
In summary, our contributions are as follows: %
\begin{itemize}
\item We extend certificates for stochastic invariants to stochastic games, which were previously only considered in the context of MDPs \cite{funke_farkas_2020,jantsch_dissertation_2022,chatterjee_fixed_point_2025} and Markov chains \cite{abate_quantitative_2025}. %
We then reinterpret them as permissive representations of strategies which comply with the given quantitative safety guarantees. Lastly, a heuristic approach for synthesizing such certificates which locally maximizes permissiveness in a tunable manner is presented.
\item Based on these certificates, we introduce \ASCerts, a new class of local, actionable certificates that characterize a large class of quantitatively winning strategies in stochastic parity games. To this end, we carefully combine our certificates for stochastic invariants with strategy templates for almost-sure winning \cite{anand_permissive_templates_2023,phalakarn_winning_2024}. Crucially, this combination requires a new formalization of randomized strategies which are extracted from strategy templates.
\item We then provide a synthesis method that computes \ASCerts for quantitative parity objectives. For the important class of \emph{globally optimal} strategies we show that our \ASCerts are \emph{relatively complete}. Further, the devised synthesis procedure for such optimal \ASCerts is simplified and yields \ASCerts that are uniquely characterized.
\item We provide an implementation of our techniques in a \emph{proof-of-concept} tool based on \textsc{Storm} \cite{HenselJKQV22}, \textsc{Prism-games} \cite{kwiatkowska_prism_games_2020}, \textsc{PeSTel} \cite{anand_permissive_templates_2023}, and \textsc{Mungojerrie} \cite{hahn_mungojerrie_2023}. We then demonstrate the \emph{actionability} aspect of \ASCerts in CPS settings by showcasing an application for runtime strategy adaptation. 
\end{itemize}
All omitted proofs are provided in the appendix. The implementation and experimental data are publicly available at \cite{baier_2026_22012734}.

\smallskip
\noindent\textbf{Related work.} %
The \emph{certificates} from \cite{funke_farkas_2020,jantsch_dissertation_2022,chatterjee_fixed_point_2025} for reachability probabilities and expected rewards in finite MDPs have been studied through the lens of \emph{certifying algorithms} \cite{certifying_algorithms_2011}, intended to increase the \emph{trustworthiness} of MDP verification. \emph{Martingale-based} certificates (e.g., \cite{ChakarovS13,TakisakaOUH18,TakisakaOUH21,ChatterjeeGMZ22,zikelic_rasm_2023,majumdar_supermatingale_2024,abate_quantitative_2025,HenzingerMSZ25}) have emerged as powerful tools for the verification and control of stochastic systems (with infinite state spaces). The decomposition of quantitative objectives into stochastic invariants and almost-sure objectives has been considered, e.g., in \cite{ChatterjeeNZ17,KretinskyM19,ChatterjeeGMZ22,abate_quantitative_2025,MajumdarS25}. The work in \cite{abate_quantitative_2025} showed that certificates for prefix-independent objectives in stochastic systems with general state spaces can be composed from certificates for stochastic invariants and certificates for almost-sure satisfaction. However, none of the aforementioned works consider certificates in the context of stochastic games and their usage as permissive strategy representations.

\emph{Strategy templates} have been introduced in \cite{anand_permissive_templates_2023,anand_templates_2024} for non-stochastic two-player games and have been extended to concurrent games (which require randomized strategies) \cite{concurrent_templates2026} and stochastic games \cite{phalakarn_winning_2024,phalakarn_templates_2025}. We note that \cite{phalakarn_winning_2024,phalakarn_templates_2025} do not provide an implementation of their techniques. Overall, only almost-sure, positive, or sure winning have been considered. Strategy templates for quantitative objectives have been studied in the setting of non-stochastic energy- and mean-payoff games \cite{anand_quantitative_2025}, but without any consideration of randomized strategies or optimality guarantees. The work in \cite{DBLP:conf/tacas/DragerFKPU14} presents multi-strategies for stochastic games that can describe multiple winning strategies. However, multi-strategies only capture memoryless strategies and do not consider $\omega$-regular objectives. With the same limitations, the work in \cite{DBLP:journals/corr/abs-2510-03481} considers multi-strategies for interval MDPs.

The model checker \textsc{Prism-games} \cite{kwiatkowska_prism_games_2020} supports the verification of safety and reachability properties in stochastic games, but parity and $\omega$-regular objectives, in general, are currently not supported. \textsc{Mungojerrie} \cite{hahn_mungojerrie_2023} can solve stochastic parity games via reinforcement learning, and provides optimality guarantees in the limit. \textsc{Gist} \cite{chatterjee_gist_2010} supports stochastic games, but does not appear to be available anymore. The \textsc{Storm} \cite{HenselJKQV22} model checker currently supports parsing but not solving stochastic games. We refer to \cite[Sec.\ 12]{qcomp2024} for a tool comparison.
Building on this limited tool support, we develop a synthesis chain for stochastic parity games which integrates the \textsc{PeSTel} tool \cite{anand_permissive_templates_2023} for strategy template computation in non-stochastic parity games with \textsc{Prism-games}, \textsc{Mungojerrie} and \textsc{Storm}. %

%% file: sections/preliminaries.tex
\noindent\textbf{Notation.}
We define $[i;j] = \{n \in \naturals \mid i \leq n \leq j \}$ and $[j] = [1;j]$ for $i,j \in \naturals$. We denote the set of even (odd) numbers in $[i;j]$ by $\even{[i;j]}$ ($\odd{[i;j]}$). Vectors are written in boldface, e.g., $\vect{x}$. Let $\states = \{\state_0, \dots, \state_n \}$ be a finite set. We write $\vect{x} \in \reals^\states$ instead of $\vect{x} \in \reals^\card{\states}$ and $\vect{x}(\state_i)$ to denote its $i$th entry. For $A \subseteq \states$, we define $\vect{1}_{A} \in \{0, 1\}^{\states}$ by $\vect{1}_{A}(s) = 1 \iff s \in A$. The \emph{support} of $\vect{x} \in \realsnn^\states$ is defined by $\supp{\vect{x}} = \{\state \in \states \mid \vect{x}(\state) > 0\}$. We call $\vect{x} \in [0,1]^\states$ a \emph{probability distribution} over $\states$ if $\sum_{\state \in \states} \vect{x}(\state) = 1$. The set of probability distributions over $S$ is denoted by $\distr{\states}$. We use $\states^*$, $\states^+$, and $\states^\omega$ to denote the sets of all finite, non-empty finite, and infinite words over $\states$, respectively.

\medskip

\noindent\textbf{Stochastic Games.} A \emph{turn-based stochastic game} (SG) \cite[Ch.~7]{fijalkow2025gamesgraphslogicautomata} $\sg$ is a tuple $(\states, \statesOne, \statesTwo, \init, \actions, \transRel)$ where $\states$ is a finite set of states, $(\statesOne, \statesTwo)$ a partition of $\states$, $\statesOne$ and $\statesTwo$ the states of $\pOne$ and $\pTwo$, respectively, $\init$ the initial state, $\actions$ a finite set of actions and $\transRel \colon \states \times \actions \rightharpoonup \distr{\states}$ a partial transition function. We omit the initial state $\init$ from the definition when it is not relevant. We often write $\transRel(\state, \action, \state')$ instead of $\transRel(\state, \action)(\state')$. For $I \subseteq \states$, we define $\transRel(\state, \action, I) = \sum_{\state' \in I} \transRel(\state, \action, s')$. An action~$\action$ is \emph{enabled} in state~$\state$ if $\transRel(\state, \action)$ is defined. We denote the enabled actions in $\state$ by $\actions(\state)$ and assume that $\actions(\state) \neq \emptyset$ for all states. The \emph{enabled state-action pairs} are denoted by $\SA \subseteq \states \times \actions$.
An infinite \emph{play} $\play$ in $\sg$ is a state-action sequence $\play = \state_0 \action_0 \state_1 \action_1 \ldots \in \SA^{\omega}$ such that $\action_n \in \actions(\state_n)$ and $\transRel(\state_n, \action_n, \state_{n+1}) > 0$ for all $n \geq 0$. We write $\play_{\leq n}$ for the prefix $\state_0 \action_0 \ldots \action_{n-1} \state_n$ and $\play_{=n}$ for $\state_n$. Finite plays $\play \in \SA^* \states$ are defined analogously, and we often write $\SApref \state$ to denote them, where $\SApref \in \SA^*$ and $\state \in \states$. The sets of infinite and finite plays starting in $\state$ are denoted by $\plays(\sg, \state)$ and $\playsFin(\sg, \state)$, respectively, and we omit $\state$ to denote plays starting from any state and drop $\sg$ when clear from context.

A \emph{strategy} for Player $\player \in \{\playerOne, \playerTwo\}$ is a function $\scheduler_i \colon \SA^* \states_\player \to \distr{\actions}$ such that $\supp{\scheduler_i(\SApref \state)} \subseteq \actions(\state)$ for all $\SApref \state \in \SA^* \states_\player$. A strategy $\scheduler_i$ is \emph{memoryless} if $\scheduler_i(\SApref \state) = \scheduler_i(\SApref'\state)$ for all plays $\SApref \state$ and $\SApref' \state$, in which case we treat it as a function $\scheduler_i \colon \states_i \to \distr{\actions}$. A strategy is \emph{pure} if $\scheduler_i(\play)$ is Dirac for all plays $\play$. We denote the set of all Player $\player$ strategies by $\schedulers_\player$. A play $\play = \state_0 \action_0 \ldots$ is $\scheduler_i$-\emph{compliant} if $\scheduler_i(\state_0 \ldots \state_n)(\action_n) > 0$ for all $n \geq 0$. The set of infinite and finite $\scheduler_i$-compliant plays is denoted by $\plays^{\scheduler_i}$ and $\playsFin^{\scheduler_i}$, respectively. For a strategy profile $(\scheduler_\playerOne, \scheduler_\playerTwo)$ and $\state \in \states$, we consider the standard probability measure $\prob_{\sg, \state}^{\scheduler_\playerOne, \scheduler_\playerTwo}$ on infinite plays starting from $\state$ (see, e.g., \cite[Ch. 10]{baier_principles_2008}).

We say that an SG $\sg$ is \emph{non-stochastic} if $\transRel(\state, \action)$ is \emph{Dirac} for all $(\state, \action) \in \SA$. 
A \emph{Markov Decision Process} (MDP) is an SG where $\statesTwo = \emptyset$.%

\medskip

\noindent\textbf{Objectives.} Given an SG $\sg$ with state-action pairs $\SA$, we consider \emph{objectives} $\win\subseteq \SA^\omega$ that can be expressed as \emph{LTL formulae} $\Phi$ (see, e.g., \cite[Ch.~5]{baier_principles_2008}) whose atomic propositions are subsets of $\SA$, i.e., $\play\in\win$ iff $\play\vDash\Phi$. 
We write $(\sg, \Phi)$ to denote an SG $\sg$ with objective\footnote{We often slightly abuse notation and use $\Phi$ to refer to both $\win$ and $\Phi$. } $\Phi$. For a given probability threshold $\lambda \in [0, 1]$, a strategy $\scheduler_1$ for $\pOne$ is \emph{quantitatively winning} in $(\sg, \Phi)$ if $\inf_{\scheduler_2 \in \schedulers_2} \prob_{\sg, \init}^{\scheduler_1, \scheduler_2}(\Phi) \geq \lambda$, i.e., the probability of a $\scheduler_1$-compliant play to satisfy $\Phi$ is at least $\lambda$ independently of $\scheduler_2$. A $\pOne$ strategy $\scheduler_1$ is \emph{almost-sure winning} if $\lambda = 1$ and \emph{sure winning} if $\plays^{\scheduler_1}(\sg, \init)\subseteq\win$. %
We define the almost-sure winning region $\mathcal{W}_1$ as the set of states from which $\pOne$ has an almost-sure winning strategy.
The \emph{value vector} $\valueVector_{\sg, \Phi} \in [0,1]^{\states}$ of $(\sg, \Phi)$ is defined by
$\valueVector_{\sg, \Phi}(\state) = \sup_{\scheduler_1 \in \schedulers_1} \inf_{\scheduler_2 \in \schedulers_2} \prob_{\sg, \state}^{\scheduler_1, \scheduler_2}(\Phi)$ for all $\state\in\states$.
A strategy $\scheduler_1$ for $\pOne$ is \emph{globally optimal} if
$\inf_{\scheduler_2 \in \schedulers_2} \prob_{\sg, \state}^{\scheduler_1, \scheduler_2}(\Phi) = \valueVector_{\sg, \Phi}(\state)$ for all $\state \in \states$.

We consider reachability and invariant objectives $\eventually I$ and $\globally I$, respectively, where $I \subseteq \states$, requiring that a play eventually visits and always remains in $I$, respectively. Further, \emph{parity objectives} are LTL formulae of the form:
\begin{align}
	\textstyle\parity{\coloring} \coloneqq \bigwedge_{k \in \odd{[0;d]}} \left( \globally \eventually C^k \implies \bigvee_{j \in \even{[k+1; d]}} \globally \eventually C^j \right),
\end{align}
where $\coloring \colon \states \to [0;d]$ is a \emph{coloring} and $C^j = \{(\state, \action) \in \SA \mid \mathbb{C}(\state)  = j \}$. W.l.o.g. we always assume that $d$ is even. Intuitively, parity objectives require the largest \emph{color} that is visited infinitely often to be even.

\medskip

\noindent\textbf{Strategy Templates.} Given an SG $\sg$, a \emph{strategy template} \cite{anand_permissive_templates_2023} captures a (possibly infinite) set of $\pOne$ strategies via three types of local conditions over state-action pairs. Formally, a strategy template is a tuple $(U,D,\mathcal{H})$ consisting of \emph{unsafe} state-action pairs $U \subseteq \SA$ -- which may never be taken, a set of \emph{co-live} state-action pairs $D \subseteq \SA$ -- which may be taken only finitely often, and a set of \emph{live-groups} $\mathcal{H} \subseteq 2^\SA$ -- which require that at least one action of the group is taken infinitely often if one of the source states in the group is seen infinitely often. 
A strategy template $(U,D,\mathcal{H})$ represents an LTL formula %
\begin{subequations}\label{equ:strategytemplate}
	\begin{align}
		\template \coloneqq \safetyTemplate(U) &\land \coliveTemplate(D) \land \liveTemplate(\mathcal{H})\notag\\
		\text{s.t.\ }~\safetyTemplate(U) &\textstyle\coloneqq \globally \bigwedge_{(\state, \action) \in U} \neg (\state, \action),\\
		\coliveTemplate(D) &\textstyle\coloneqq \bigwedge_{(\state, \action) \in D} \eventually \globally \neg (\state, \action),\text{ and}\\
		\liveTemplate(\mathcal{H}) &\textstyle\coloneqq \Conj_{H \in \mathcal{H}} \globally \eventually \src{H} \Rightarrow \globally \eventually H, 
	\end{align}
\end{subequations}
where $\src{H} = \{\state \in \states \mid \exists \action \centerdot (\state, \action) \in H\}$.
As in \cite{anand_permissive_templates_2023,phalakarn_winning_2024,phalakarn_templates_2025}, a $\pOne$ strategy is said to \emph{follow} a strategy template\footnote{We slightly abuse notation and use $\template$ to refer both to the tuple $(U,D,\mathcal{H})$ and \cref{equ:strategytemplate}.} $\template$ almost surely if $\pOne$ is almost surely winning in $(\sg,\template)$ from every state. A strategy template $\template$ is almost surely winning in $(\sg,\Phi)$ from state $s$ if every $\pOne$ strategy following $\template$ is almost surely winning in $(\sg,\Phi)$ from state $s$. A template is almost surely winning if it is almost surely winning from every state in the almost-sure winning region $\mathcal{W}_1$.%

Existing algorithms \cite{anand_permissive_templates_2023,phalakarn_winning_2024,phalakarn_templates_2025} compute an almost surely winning strategy
template for a stochastic parity game with the same time complexity as classical strategy synthesis. We denote the algorithm from \cite{phalakarn_winning_2024,phalakarn_templates_2025} which computes an almost-sure winning
strategy template in a stochastic parity game by\linebreak $\almostParityTemplate(\sg, \Phi)$. 
The work in \cite{phalakarn_templates_2025} does not provide an implementation of this algorithm but suggests to reduce the stochastic parity game to a non-stochastic parity game (using the gadgets from \cite{chatterjee_quantitative_games_2004}) and use the \textsc{PeSTel} tool \cite{anand_permissive_templates_2023} to solve the resulting game. We adopt this approach in our experiments.

%% file: sections/overview.tex
\begin{figure}[t]
\centering
\begin{subfigure}[t]{.4\textwidth}
  \centering
  \includegraphics[scale=0.5]{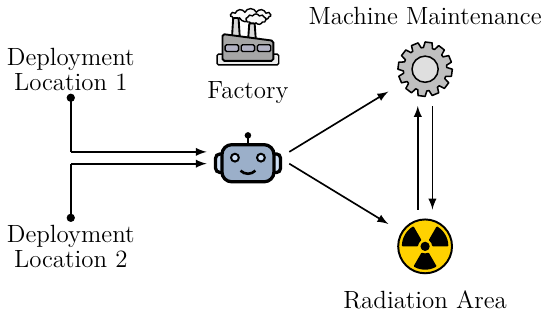}

  \label{fig:sub1}
\end{subfigure}%
\hfill
\begin{subfigure}[t]{.59\textwidth}
  \centering
  \includegraphics[scale=0.7]{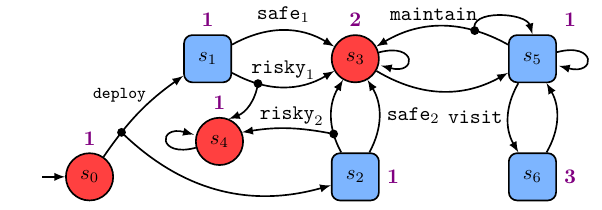}
  \label{fig:sub2}
\end{subfigure}
\vspace{-0.2cm}
\caption{Illustration of the motivating example (left) and the underlying stochastic game (right). $\pOne$ ($\pTwo$) states are shown as boxes (circles). The coloring $\coloring$ is shown next to each state. Transition probabilities are omitted and probabilistic transitions (indicated by a dot) are uniformly distributed.}
\vspace{-0.4cm}
\label{fig:running-example}
\end{figure}
Before explaining our contributions in more detail, we illustrate the synthesis problem under consideration by a simple motivating example.

\smallskip
\noindent\textbf{A Motivating Example.} We consider a robot that needs to maintain machines at a factory. The setting is illustrated in \cref{fig:running-example} (left). Initially, the robot is randomly deployed at location $1$ or $2$, and first needs to get to the factory. It can choose between the safe path, guaranteeing that it reaches the factory, and the risky path, where it might get permanently stuck. Once at the factory, the robot needs to maintain the machines infinitely often and is either kept there or sent away. Then, the robot may decide to visit an area where radiation levels are high and which it may only visit finitely often (otherwise it sustains permanent damage). It can return from the radiation area, but might get stuck.

The \emph{stochastic parity game} $(\sg, \Phi)$ shown in \cref{fig:running-example} (right) models the example. The robot is controlled by $\pOne$ and the environment corresponds to $\pTwo$. The deployment phase is modeled by the states $\state_0$, $\state_1$, $\state_2$, and $\state_4$. In $\state_1$ (or $\state_2$) the robot can decide to choose the safe or risky path leading to the factory at state $\state_3$. When taking the risky path, with probability $0.5$ the robot gets permanently stuck at $\state_4$. If it arrives at $\state_3$, it needs to ensure that $\state_3$ is visited infinitely often (i.e., maintaining the machines), modeled by assigning $\state_3$ an even color, i.e., $\coloring(\state_3) = 2$. The radiation area is modeled by state $\state_6$, which the robot may only visit finitely often, hence it has an odd color, i.e., $\coloring(\state_6) = 3$. The parity objective $\Phi = \parity{\coloring}$ then requires that the largest color that is seen infinitely often is \emph{even}. Observe that $\pOne$ can satisfy the objective \emph{almost surely}, but not \emph{surely}, since there are plays that remain in $\state_5$ (which have measure $0$). %

Given the stochastic parity game $(\sg,\Phi)$ in \cref{fig:running-example} (right), one can (via standard techniques) compute a strategy $\sigma_1$ which is quantitatively winning in $(\sg,\Phi)$ for some probability threshold $\lambda\in[0,1]$. In this example, an optimal (almost-sure winning) strategy is, for instance, given by a pure memoryless strategy that always picks $\texttt{safe}$ in $\state_1$ and $\state_2$, and chooses $\texttt{maintain}$ in $\state_5$. Instead, if we can tolerate that the machines are only maintained with probability $\geq 0.75$, we can also consider strategies that choose $\texttt{risky}$ with probability $0.5$ in $\state_1$ and $\state_2$.

Crucially, however, these single (fully-specified) strategies might not always be applicable, as the choice of actions available to the robot might change at runtime, e.g., the safe path might be blocked, or a low-level control layer enforces a different choice, e.g., guides the robot to the radiation area to avoid a human operator entering the maintenance zone. Further, the risky path could be shorter and at runtime the robot might need to arrive at the factory as soon as possible.
While the effect of these influences could be modeled into the game, they are often not known at design time or increase the size of the model significantly while only occurring rarely in practice. To make strategies robustly tolerate such rare influences, we would like to retain local information about what actually makes a strategy quantitatively winning, to adapt local action choices dynamically.

\smallskip
\noindent\textbf{Challenges and Problem Statement.} We propose \ASCertsLong (\ASCerts) as a solution to this problem. The challenge in the design of \ASCerts is to strike a careful balance between completeness and simplicity. On the one hand, we want to capture rich classes of strategies, so as not to exclude possible solutions in the design of CPS. On the other hand, we need \ASCerts to be simple and local to allow for a straightforward extraction of strategies that enables runtime adaptations. Lastly, we also require a practical synthesis approach for \ASCerts. Our problem statement can thus be summarized as follows. %
\begin{mdframed}
\textbf{Problem Statement:} Given a stochastic parity game $(\sg, \Phi)$, capture strategies $\scheduler_1$ for $\pOne$ (possibly infinite memory and/or randomized) that are quantitatively winning with probability threshold $\lambda \in [0, 1]$ in a \emph{concise}, \emph{local}, and \emph{permissive} formalism.
\end{mdframed}

Within this paper we solve the stated problem via a combination of quantitative certificates and qualitative strategy templates as explained next.

\smallskip
\noindent\textbf{Stochastic Invariants.} The main insight that guides our development of such `robust strategies' is the use of certificates for stochastic invariants.
Suppose we want to describe strategies $\scheduler_1$ for $\pOne$ that enable the robot to fulfill its task with probability $\geq 0.75$. Note that once the robot reaches $\state_4$ it can no longer satisfy its objective. Thus, a winning strategy $\scheduler_1$ needs to ensure that $\inf_{\scheduler_2 \in \schedulersTwo}\prob_{\sg, \init}^{\scheduler_1, \scheduler_2}(\globally I) \geq 0.75$, where $I = \states \setminus \{\state_4\}$. %
On its own such invariants $I$ are not yet \emph{actionable}, as it is not clear how a strategy that remains in $I$ can be extracted. Such strategic choices cannot be made for each state locally and independently, since, e.g., taking the risky path both in $\state_1$ and $\state_2$ with probability $1$ leads to failure (when $\lambda = 0.75$). To tackle the complex interplay of such decisions, we make use of \emph{certificates} for stochastic invariants. We formally present them in \cref{section:certificates}, but for now it is helpful to think of a certificate as a mapping that assigns a real-valued \emph{rank} to each state. We then require that the strategy $\scheduler_1$ ensures that this rank \emph{does not increase in expectation}. For instance, a certificate for the stochastic invariant $I$ in our example is given by:
\begin{align}
\vect{x} = (\state_0 \mapsto 0.25,\; \state_1 \mapsto 0.25,\; \state_2 \mapsto 0.25, \state_3 \mapsto 0,\; \state_4 \mapsto 1,\; \state_5 \mapsto 0,\; \state_6 \mapsto 0). \label{eq:example-certificate}
\end{align}
Then, it is sufficient that whenever $\scheduler_1$ arrives at $\state_1$ (or $\state_2$) the following holds:
\begin{equation}
\begin{aligned}
\vect{x}(\state_1) = 0.25 \geq p_\texttt{safe} \cdot \vect{x}(\state_3) + p_\texttt{risky} \cdot \left(0.5 {\cdot} \vect{x}(\state_3) + 0.5 {\cdot} \vect{x}(\state_4) \right) = p_\texttt{risky} \cdot 0.5 \label{eq:example-local-constraint}
\end{aligned}
\end{equation}
where $p_{\texttt{risky}}$ ($p_{\texttt{safe}}$) is the probability with which $\scheduler_1$ chooses the risky (safe) path. We also say that $\scheduler_1$ \emph{follows} the certificate $\vect{x}$. In fact, $\pOne$ may choose \emph{any strategy} that follows $\vect{x}$ (i.e., satisfies these \emph{local constraints}), e.g., pick any $p_\texttt{risky} \in [0, 0.5]$ in our example. Notably, such certificates allow us to reason about arbitrary strategies, e.g., even infinite-memory strategies.

\smallskip
\noindent\textbf{\ASCertsLong (\ASCerts).} Provided that we have an invariant $I$ and certificate $\vect{x}$ as before, we know that any strategy $\scheduler_1$ that follows $\vect{x}$ stays in $I$ with probability $\geq 0.75$. However, this is not enough to ensure that $\Phi$ is satisfied. For instance, a strategy following $\vect{x}$ is still allowed to choose $\texttt{visit}$ infinitely often. Hence, we need to ensure that certain actions are only taken finitely often and that progress towards the objective is made. Here, the strategy templates for almost-sure winning come into play. In this example, the template will prescribe that $\scheduler_1$ chooses \texttt{maintain} infinitely often whenever it visits $\state_5$ infinitely often. Additionally, it requires that $\texttt{visit}$ is only chosen finitely often. Our \ASCerts then combine the certificate and strategy templates to capture winning strategies (\Cref{def:stochastic-template-follow}). As a key technical contribution, we provide an instructive characterization with which strategies can be directly extracted from \ASCerts (\Cref{sec:extraction}), enabling, e.g., adaptation at runtime. %

\smallskip
\noindent\textbf{\ASCerts Synthesis.}
So far, we have assumed to be given an invariant $I$ and certificate $\vect{x}$. In \cref{section:certificates} we describe an approach for computing permissive certificates and \cref{section:stochastic-template} discusses how invariants can be determined. This provides the basis for the synthesis of our \ASCerts, shown in \cref{fig:ascert-synthesis-overview}. Starting from a quantitatively winning strategy $\scheduler_1$ obtained via strategy synthesis, we show that both the invariant $I$ and certificate $\vect{x}$ can be computed. Then, both $I$ and $\vect{x}$ are used to construct a subgame, for which we invoke \textsc{PeSTel} \cite{anand_permissive_templates_2023} to compute the almost-sure winning strategy template w.r.t.\ a slightly modified parity objective. For the important class of globally optimal strategies we show that the \ASCert synthesis is simplified as it does not depend on a specific strategy $\scheduler_1$.

\begin{figure}[t]
\centering
\includegraphics[scale=0.7]{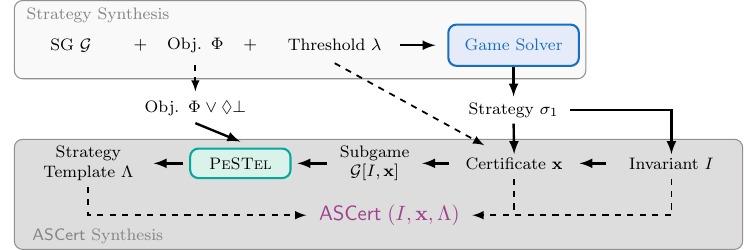}
\caption{Overview of our \ASCert synthesis approach.}
\label{fig:ascert-synthesis-overview}
\vspace{-0.4cm}
\end{figure}

%% file: sections/certificates.tex
This section presents \emph{certificates} for \emph{stochastic invariants} in SGs that will pave the way for \ASCerts, formally introduced in \Cref{section:stochastic-template}. The crux of our approach is to reinterpret certificates as \emph{concise strategy representations}. We then also discuss a heuristic approach for synthesizing \emph{permissive} certificates.

\subsection{Actionable Certificates in Stochastic Games}
In the following, we provide a new characterization of certificates for stochastic invariants in SGs. Intuitively, certificates are solutions to constraint systems and prove that a strategy remains in a region $I \subseteq \states$ with a desired probability $\lambda$.
\begin{restatable}{theorem}{stochasticInvariantCertificates}
Let $\sg = (\states, \statesOne, \statesTwo, \init, \actions, \transRel)$ be an SG, $I \subseteq \states$, $\lambda \in [0, 1]$ and $\scheduler_1 \in \schedulersOne$ be a $\pOne$ strategy. Suppose there exists $\vect{x} \in [0,1]^{\states}$ such that:
\begin{subequations}\label{eq:certificates}
\begin{align}
 \vect{x}(\init) &\leq 1 - \lambda, && \label{eq:threshold} \\
\vect{x}(\state) &\geq \sum_{\state' \in \states} \transRel(\state, \action, \state') \cdot \vect{x}(\state') && \forall \state \in I \intersection \statesTwo, \forall \action \in \actions(\state), \label{eq:opponent}\\
\vect{x}(\state) &= 1 && \forall \state \in \states \setminus I, \label{eq:losing-states}\\
 \vect{x}(\state) &\geq \sum_{\action \in \actions(\state)} \sum_{\state' \in \states} \scheduler_1(\SApref \state)(\action) {\cdot} \transRel(\state, \action, \state') {\cdot} \vect{x}(\state') && \forall \state \in I \intersection \states_1,\; \forall \SApref \state \in \playsFin^{\scheduler_1}.\label{eq:strategy}
\end{align}
\end{subequations}
Then, we have $\inf_{\scheduler_2 \in \schedulersTwo} \prob_{\sg, \init}^{\scheduler_1, \scheduler_2}(\globally I) \geq \lambda$, i.e., $\scheduler_1$ is quantitatively winning in the safety game $(\sg,\globally I)$ with probability at least $\lambda$.
\label{theorem:cert-invariant}
\end{restatable}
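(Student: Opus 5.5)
Fix an arbitrary $\pTwo$ strategy $\scheduler_2$ and write $\prob = \prob_{\sg, \init}^{\scheduler_1, \scheduler_2}$. We will show $\prob(\eventually (\states \setminus I)) \leq \vect{x}(\init)$. By \eqref{eq:threshold} this gives $\prob(\globally I) \geq \lambda$, and since $\scheduler_2$ is arbitrary the infimum bound follows.

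The main tool is a stopped supermartingale. Let $X_n$ denote the state at step $n$ and let $T = \inf\{n \mid X_n \notin I\}$ be the first exit time from $I$. Define $Y_n = \vect{x}(X_{\min(n,T)})$, which takes values in $[0,1]$.

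We first show that $(Y_n)$ is a supermartingale with respect to the filtration generated by the finite prefixes. Condition on a prefix $\SApref \state$ of length $n$ with positive probability.
\begin{itemize}
\item If $T \leq n$, then $Y_{n+1} = Y_n$ and there is nothing to show.
\item If $\state \in I \intersection \statesTwo$, the conditional expectation of $Y_{n+1}$ is $\sum_{\action} \scheduler_2(\SApref\state)(\action) \sum_{\state'} \transRel(\state,\action,\state') \vect{x}(\state')$. This is a convex combination of the right-hand sides of \eqref{eq:opponent}, so it is at most $\vect{x}(\state)$.
\item If $\state \in I \intersection \statesOne$, the prefix is $\scheduler_1$-compliant because it has positive probability. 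Hence \eqref{eq:strategy} applies directly and gives the same bound.
\end{itemize}
So $\expectation{}{Y_{n+1} \mid \SApref\state} \leq Y_n$ in every case.

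Iterating the supermartingale inequality gives $\expectation{}{Y_n} \leq Y_0 = \vect{x}(\init)$ for all $n$. By \eqref{eq:losing-states}, on the event $\{T \leq n\}$ we have $Y_n = 1$, and $Y_n \geq 0$ everywhere. Therefore $\prob(T \leq n) \leq \expectation{}{Y_n} \leq \vect{x}(\init)$. The events $\{T \leq n\}$ increase to $\eventually(\states\setminus I)$, so continuity of measure yields $\prob(\eventually(\states \setminus I)) \leq \vect{x}(\init)$. No optional stopping theorem is needed, because we only use the finite-horizon bounds.

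The main obstacle is measure-theoretic care rather than any real difficulty. The conditional expectations must be taken over cylinder sets of finite plays, and \eqref{eq:strategy} may only be used on compliant prefixes. The second point is why the condition is quantified over $\playsFin^{\scheduler_1}$: prefixes that are not compliant have probability zero and can be ignored. A further detail is the case $\init \notin I$. There \eqref{eq:losing-states} gives $\vect{x}(\init) = 1$, so \eqref{eq:threshold} forces $\lambda = 0$, and the claim holds trivially.
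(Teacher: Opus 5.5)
Your proof is correct and is essentially the paper's. The paper uses the same process: it sets $Y_t=\vect{x}(\state_t)$ while the play has stayed in $I$ and $Y_t=1$ afterwards, which by \eqref{eq:losing-states} is exactly your stopped process $\vect{x}(X_{\min(t,T)})$. It verifies the supermartingale property by the same case split on the owner of the current state. The only difference is the final step: the paper invokes a maximal inequality for nonnegative supermartingales from \cite{zikelic_rasm_2023} to bound $\prob(\sup_t Y_t\geq 1)$ by $\vect{x}(\init)$. You instead use absorption at $1$ to get $\prob(T\leq n)\leq\expectation{}{Y_n}\leq\vect{x}(\init)$ and pass to the limit, which is more elementary and self-contained.
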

\noindent A vector $\vect{x}$ that satisfies the constraints in \eqref{eq:certificates} bounds the probability of reaching $\states \setminus I$ under $\scheduler_1$ (and all $\pTwo$ strategies) from above and \emph{certifies} that $\prob_{\sg, \init}^{\scheduler_1, \scheduler_2}(\globally I) \geq \lambda$ for all $\scheduler_2 \in \schedulersTwo$ holds. Additionally, it can be interpreted as a \emph{ranking function} that is required to be non-increasing in expectation.

We call a vector $\vect{x}$ that satisfies \eqref{eq:threshold}--\eqref{eq:losing-states} a \emph{certificate} for $I$ and $\lambda$ and denote the set of certificates by $\certs_{\geq \lambda}(\sg, I)$. Further, $\vect{x} \in \certs_{\geq \lambda}(\sg, I)$ is a certificate for a $\pOne$ strategy $\scheduler_1$ if it additionally satisfies \eqref{eq:strategy}. Note that the definition of $\certs_{\geq \lambda}(\sg, I)$ does not depend on a $\pOne$ strategy. We now present a shift in perspective and consider the strategies that are \emph{represented} by a certificate.%
\begin{definition}\label{def:follow-cert}
	Let $\vect{x} \in \certs_{\geq \lambda}(\sg, I)$. A $\pOne$ strategy $\scheduler_1$ is said to \emph{follow} $\vect{x}$ if
	\begin{align}
		&\vect{x}(\state) \geq \sum_{\action \in \actions(\state)} \sum_{\state' \in \states} \scheduler_1(\SApref \state)(\action) \cdot \transRel(\state, \action, \state') \cdot \vect{x}(\state')  & \forall \state \in I \intersection \states_1,\; \forall \SApref \state \in \playsFin^{\scheduler_1}. \label{eq:follow}
	\end{align}
\end{definition}
Intuitively, a strategy \emph{follows} a certificate $\vect{x} \in \certs_{\geq \lambda}(\sg, I)$ if it can ensure that $\vect{x}$ is non-increasing in expectation. We note that the constraints \eqref{eq:follow} and \eqref{eq:strategy} coincide and that $\scheduler_1$ follows $\vect{x}$ if and only if $\vect{x}$ is a certificate for $\scheduler_1$. Thanks to \Cref{theorem:cert-invariant} we then directly obtain soundness.
\begin{corollary}
	Let $\vect{x} \in \certs_{\geq \lambda}(\sg, I)$. If $\scheduler_1$ \emph{follows} $\vect{x}$ then $\inf_{\scheduler_2 \in \schedulersTwo}\prob_{\sg, \init}^{\scheduler_1, \scheduler_2}(\globally I) \geq \lambda$.
\end{corollary}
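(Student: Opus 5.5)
The statement immediately above is the corollary, and it is a direct consequence of \Cref{theorem:cert-invariant}. Let $\vect{x} \in \certs_{\geq \lambda}(\sg, I)$ and let $\scheduler_1$ follow $\vect{x}$.

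First, by the definition of $\certs_{\geq \lambda}(\sg, I)$, the vector $\vect{x}$ lies in $[0,1]^{\states}$ and satisfies the constraints \eqref{eq:threshold}, \eqref{eq:opponent} and \eqref{eq:losing-states}.

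Second, the condition \eqref{eq:follow} in \Cref{def:follow-cert} is, syntactically, the same constraint as \eqref{eq:strategy}. It is quantified over the same states $\state \in I \intersection \statesOne$ and over the same $\scheduler_1$-compliant finite plays $\SApref\state \in \playsFin^{\scheduler_1}$. So $\vect{x}$ also satisfies \eqref{eq:strategy} for $\scheduler_1$.

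Hence all hypotheses of \Cref{theorem:cert-invariant} hold for $\sg$, $I$, $\lambda$, $\scheduler_1$ and $\vect{x}$. The theorem then gives $\inf_{\scheduler_2 \in \schedulersTwo}\prob_{\sg, \init}^{\scheduler_1, \scheduler_2}(\globally I) \geq \lambda$.

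The only point that needs care is that the quantification domains in \eqref{eq:follow} and \eqref{eq:strategy} match exactly. They do: both range over finite plays compliant with $\scheduler_1$ that end in a $\pOne$ state of $I$. No genuine obstacle remains. All of the probabilistic work is in \Cref{theorem:cert-invariant}, where one argues that $\vect{x}$ evaluated along the play is a supermartingale under every $\scheduler_2$ that upper-bounds the probability of reaching $\states \setminus I$.
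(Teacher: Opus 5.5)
Your proposal is correct and follows the paper's argument: the paper notes that \eqref{eq:follow} and \eqref{eq:strategy} coincide, so $\scheduler_1$ follows $\vect{x}$ exactly when $\vect{x}$ is a certificate for $\scheduler_1$. Soundness then follows directly from \Cref{theorem:cert-invariant}, as in your proof.
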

A certificate represents a (potentially infinite) set of $\pOne$ strategies that ensure that the play stays in $I$ with probability at least $\lambda$. A useful property we will exploit is that condition \eqref{eq:follow} is \emph{local}, i.e., a strategy can follow a certificate in state $\state$ by only considering $\state$ and its immediate successors. Thus, the decision of $\scheduler_1$ in state $\state$ does not depend on its decisions in the other states.

For memoryless strategies the certificates are complete, i.e., for a memoryless strategy $\scheduler_1$ there always exists a certificate $\vect{x}$ that is followed by $\scheduler_1$.
\begin{lemma}[Completeness for Memoryless Strategies] \label{lemma:complete-mr}
If a memoryless strategy $\scheduler_1$ satisfies $\inf_{\scheduler_2 \in \schedulersTwo} \prob_{\sg, \init}^{\scheduler_1, \scheduler_2}(\globally I) \geq \lambda$, then there exists a certificate $\vect{x}$ for $I$ and $\lambda$ (i.e., $\vect{x} \in \certs_{\geq \lambda}(\sg, I)$) such that $\scheduler_1$ follows $\vect{x}$.
\end{lemma}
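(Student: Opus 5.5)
The natural candidate is the vector of worst-case probabilities of leaving $I$ under the fixed memoryless strategy $\scheduler_1$. Fixing $\scheduler_1$ turns $\sg$ into an MDP $\mathcal{M}^{\scheduler_1}$ in which only $\pTwo$ makes choices. In every $\pOne$ state $\state$, the unique "action" is the distribution $\sum_{\action} \scheduler_1(\state)(\action)\cdot\transRel(\state,\action)$. Because $\scheduler_1$ is memoryless, plays of $\sg$ under $(\scheduler_1,\scheduler_2)$ correspond to plays of $\mathcal{M}^{\scheduler_1}$ under the same $\scheduler_2$, with the same measures. We then define
\[
\vect{x}(\state) \coloneqq \sup_{\scheduler_2 \in \schedulersTwo} \prob_{\sg,\state}^{\scheduler_1,\scheduler_2}(\eventually (\states\setminus I)) = 1 - \inf_{\scheduler_2 \in \schedulersTwo} \prob_{\sg,\state}^{\scheduler_1,\scheduler_2}(\globally I),
\]
and verify that this $\vect{x} \in [0,1]^\states$ satisfies \eqref{eq:threshold}--\eqref{eq:losing-states} as well as \eqref{eq:follow}.

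Condition \eqref{eq:threshold} is exactly the hypothesis rewritten with $\eventually(\states\setminus I)$ as the complement of $\globally I$. Condition \eqref{eq:losing-states} holds trivially, since from $\state \notin I$ the event $\eventually(\states\setminus I)$ occurs at step $0$.

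The remaining inequalities come from the Bellman equations for maximal reachability probabilities in the finite MDP $\mathcal{M}^{\scheduler_1}$. For $\state \in I$ these are
\[
\vect{x}(\state)=\max_{\action}\textstyle\sum_{\state'} \transRel(\state,\action,\state')\vect{x}(\state')
\]
at $\pTwo$ states, and
\[
\vect{x}(\state)=\textstyle\sum_{\action}\scheduler_1(\state)(\action)\sum_{\state'}\transRel(\state,\action,\state')\vect{x}(\state')
\]
at $\pOne$ states; see, e.g., \cite[Ch.~10]{baier_principles_2008}. The maximum gives \eqref{eq:opponent} for every enabled action. The equality at $\pOne$ states gives \eqref{eq:follow} for every $\scheduler_1$-compliant prefix $\SApref\state$, because $\scheduler_1(\SApref\state)=\scheduler_1(\state)$ by memorylessness. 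The same fact shows that the constraint is identical across all prefixes, so no history-dependent reasoning is needed.

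The only subtle point is justifying that the supremum over arbitrary, possibly history-dependent and randomized, $\pTwo$ strategies in $\sg$ satisfies these Bellman equations. Two facts handle this. First, memoryless $\scheduler_1$ makes the induced MDP finite, with the history-dependent $\pTwo$ strategies of $\sg$ coinciding with its schedulers. Second, the maximal reachability probabilities in a finite MDP are the least fixed point of the Bellman operator and satisfy the optimality equations. If one prefers to avoid citing this, one can prove only the needed inequality directly: for any action $\action$ at a $\pTwo$ state, take $\pTwo$ strategies that are $\varepsilon$-optimal from each successor, prefix them with $\action$, and let $\varepsilon\to 0$. This one-step argument also yields equality at $\pOne$ states by conditioning on the first step. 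With this in place, all constraints hold and $\scheduler_1$ follows $\vect{x}$.
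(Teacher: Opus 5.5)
Your proposal is correct and follows essentially the same route as the paper. The paper fixes the memoryless $\scheduler_1$, passes to the induced finite MDP, cites the known completeness of certificates there, and notes that \eqref{eq:strategy} then only quantifies over states in $I \cap \states_1$; you make that cited result explicit by taking $\vect{x}$ to be the maximal probability of reaching $\states\setminus I$ and checking the Bellman inequalities (your one-step $\varepsilon$-optimal construction works directly in $\sg$, so it also sidesteps the fact that \pTwo{} in $\sg$ can observe the actions sampled by \pOne{}).
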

\begin{proof}
    Applying $\scheduler_1$ to $\sg$ yields a finite MDP, for which the completeness of certificates is well-known, see, e.g., \cite{jantsch_dissertation_2022} and \cite{chatterjee_fixed_point_2025}. Particularly, constraint \eqref{eq:strategy} simplifies to a quantification over $\state \in I \intersection \states_1$. \qed
\end{proof}
In the context of certifying algorithms \cite{certifying_algorithms_2011}, certificates are returned as part of an algorithm's output to increase its trustworthiness. That is, the certificate serves as an easy-to-check proof of correctness for the algorithm's result.
In the setting of strategy synthesis procedures, a synthesized (finite-memory) strategy $\scheduler_1$ can be augmented with a certificate $\vect{x} \in \certs_{\geq \lambda}(\sg, I)$ that is followed by $\scheduler_1$ to certify that $\scheduler_1$ remains in some (safe) region $I$ with probability at least $\lambda$. Checking a certificate then just amounts to checking whether constraints \eqref{eq:certificates} %
are satisfied.
\begin{remark}
	In the setting of MDPs, the work in \cite{chatterjee_fixed_point_2025} also represents strategies via certificates. While their certificates only describe pure strategies, ours also represent \emph{randomized} strategies. \changes{Further, since we capture strategies that potentially use infinite memory, we need to reason via \emph{supermartingales} in the proof of \Cref{theorem:cert-invariant}, while \cite{chatterjee_fixed_point_2025,funke_farkas_2020} can rely on well-known fixed-point characterizations.}
\end{remark}
\begin{remark}
	Our certificates for $I$, $\lambda$ and $\scheduler_1$ are similar to the certificates for Markov chains with general state spaces from \cite[Theorem 2]{abate_quantitative_2025}. The key difference lies in \eqref{eq:opponent} where we need to treat the nondeterminism arising from $\pTwo$. For finite MDPs similar certificates were already considered in \cite{jantsch_dissertation_2022,funke_farkas_2020} and \cite{chatterjee_fixed_point_2025}.
\end{remark}

\subsection{Permissive Certificate Synthesis}
\label{subsection:permissive-certificates}
We now discuss the synthesis of our certificates. For this purpose, we assume that we are given $I \subseteq \states$, $\lambda \in [0,1]$ and a memoryless strategy $\scheduler_1$ for $\pOne$. As we shall see later, these assumptions will be compatible with the overall synthesis approach of our \ASCerts (see \Cref{section:stochastic-template} and \Cref{fig:ascert-synthesis-overview}). First note that for a fixed set $I \subseteq \states$ and threshold $\lambda$ the certificate is generally not uniquely determined and we have a certain level of freedom.
\begin{example}\label{example:not-permissive}
We revisit our motivating example from \Cref{section:overview} involving the maintenance robot. Recall that we have considered the certificate in \eqref{eq:example-certificate}, resulting in the constraint \eqref{eq:example-local-constraint} which allowed $\pOne$ to choose the \emph{risky} path with probability $\leq 0.5$. Now consider the following certificate:
\begin{align}
\vect{x}' = (\state_0 \mapsto 0,\; \state_1 \mapsto 0,\; \state_2 \mapsto 0, \state_3 \mapsto 0,\; \state_4 \mapsto 1,\; \state_5 \mapsto 0,\; \state_6 \mapsto 0).
\end{align}
Since $\vect{x}'(\state_1) = 0$ and $\pOne$ needs to ensure that $\vect{x}'$ does not increase in expectation, $\vect{x}'$ only enables $\pOne$ to choose the \emph{safe} path from $\state_1$.
\end{example}
The previous example shows that the provided flexibility depends on the choice of the certificate. Towards a procedure for synthesizing certificates that provide ``much'' flexibility, we define a notion of \emph{permissiveness} for our certificates.
\begin{definition}[$\varepsilon$-Permissiveness]
	Let $\vect{x} \in \certs_{\geq \lambda}(\sg, I)$ be a certificate, $\state \in I \intersection \states_1$, and $\varepsilon \in [0, 1]$. We say that $\vect{x}$ is $\varepsilon$-\emph{permissive} in $\state$ if 
	\begin{align}
		\textstyle\vect{x}(\state) \geq \left( \sum_{\state' \in \states} \transRel(\state, \action, \state') \cdot \vect{x}(\state') \right) - (1 - \varepsilon) \qquad \forall \action \in \actions(\state). \label{eq:most-permissive}
	\end{align}
	We say that $\vect{x}$ is \emph{most permissive} in $\state$ if it is $1$-permissive in $\state$.
\end{definition}
Intuitively, a certificate $\vect{x}$ that is most permissive in state~$\state$ enables $\pOne$ to choose any action (with arbitrary probability), as it is guaranteed that $\vect{x}$ will not increase in expectation under any action.
Based on $\varepsilon$-permissiveness, we propose the following heuristic approach for finding permissive certificates. We construct a linear program (LP) with the constraints \eqref{eq:threshold}--\eqref{eq:strategy} w.r.t. $I$, $\lambda$ and $\scheduler_1$. Additionally, we introduce a variable $\varepsilon_{\state}$ and the corresponding constraint \eqref{eq:most-permissive} for each state $\state \in \states_1$. Then we determine a certificate $\vect{x}$ that maximizes $\sum_{\state \in \states_1} \varepsilon_\state$. Thereby, we try to maximize the \emph{overall permissiveness} of the certificate. A natural variation of this heuristic is to instead only maximize permissiveness in a subset of states where flexibility is needed the most. We describe the LP in more detail in \Cref{subsection:permissive-lp}. In \Cref{section:experiments} we demonstrate that our approach yields certificates that grant greater flexibility, enabling effective runtime adaptation. %

Finally, observe that there is an inherent trade-off between maximizing the overall permissiveness and guaranteeing high safety assurances (i.e., high probability thresholds $\lambda$). For instance, choosing $\vect{x} = \vect{1}$ maximizes the overall permissiveness and ensures that \emph{all} states are most permissive, but does not provide any safety assurances (cf.\ \cref{eq:threshold}). Conversely, as seen in \Cref{example:not-permissive}, providing high assurances reduces the permissiveness.

\begin{example}
The certificate $\vect{x}$ in \cref{eq:example-certificate} for our motivating example from \Cref{section:overview} maximizes the overall permissiveness w.r.t. $I = \states \setminus \{\state_4\}$ and $\lambda=0.75$ and is based on the optimal memoryless strategy $\scheduler_1$ that never picks a risky path (see \cref{eq:strategy}). Note that by construction, we have that $\scheduler_1$ follows $\vect{x}$.
\end{example}

%% file: sections/stochastic_templates.tex
This section presents our novel \ASCerts. We carefully combine the certificates for stochastic invariants (\Cref{section:certificates}) and the PeSTels for almost-sure winning from \cite{phalakarn_winning_2024,phalakarn_templates_2025} to obtain \ASCerts in \Cref{sec:combine} and then devise a synthesis algorithm for them in \Cref{sec:synth1}. For the important class of \emph{globally optimal} strategies, we show that \ASCerts are \emph{relatively complete} and devise a simplified synthesis procedure for such optimal \ASCerts in \Cref{subsection:global-opt}. Lastly, \Cref{sec:extraction} gives a new formalization of randomized strategies which are extracted from \ASCerts.%

\subsection{Combining Certificates with Templates}\label{sec:combine}
Conceptually, we will combine certificates $\vect{x} \in \certs_{\geq \lambda}(\sg, I)$ with strategy templates by restricting the SG $(\sg,\Phi)$ to $I$ and collapsing all states outside of $I$ into a rejecting sink $\bot$. Synthesizing an almost-sure winning strategy template $\Lambda$ over the resulting subgame and combining it with $\vect{x}$ yields \ASCerts.

Towards this goal we first define the notion of \emph{subgames}. For a certificate $\vect{x} \in [0,1]^\states$ we define $\SA_{\vect{x}} \subseteq \SA$ to contain a state-action pair $(\state, \action)$ if and only if there exists a distribution $\mu \in \distr{\actions(\state)}$ such that $\mu(\action) > 0$ and
\begin{align}
\textstyle\vect{x}(\state) \geq \sum_{b \in \actions(\state)} \sum_{\state' \in \states} \mu(b) \cdot \transRel(\state, b, \state') \cdot \vect{x}(\state'). \label{eq:admissible-distr}
\end{align}
The set $\SA_{\vect{x}}$ contains the state-action pairs that can be played with positive probability when following $\vect{x}$. Due to \cref{eq:opponent}, all actions of $\pTwo$ are kept.
\begin{definition}[Subgame]
Let $\sg = (\states, \states_1, \states_2, \init, \actions, \transRel)$ be an SG, $I \subseteq \states$, and $\vect{x} \in [0, 1]^\states$. Let $\subgame{I}{\vect{x}}=(I \union \{\bot\}, (\statesOne \cap I) \union \{\bot\}, \statesTwo \cap I, \actions, \transRel')$ be the \emph{subgame} induced by $I$ and $\vect{x}$ where $\delta'(\bot, \action, \bot) = 1$ for all $\action \in \actions$, and
\[
\transRel'(\state, \action, \state') =
\begin{cases}
\transRel(\state, \action, \states \setminus I) & \text{if } \state' = \exit \\
\transRel(\state, \action, \state') & \text{else}
\end{cases}
\]
for all $(\state, \action) \in \SA_{\vect{x}} \intersection (I \times \actions)$ and $\state' \in I \union \{ \exit \}$.
\end{definition}
Intuitively, we restrict the game to states in $I$ and state-action pairs that can be played when following the certificate $\vect{x}$. The transitions leading to states $\states \setminus I$ are redirected to a fresh absorbing state $\exit$. With this definition in place, we are now ready to present our \ASCertsLong (\ASCerts).
\begin{definition}[\ASCertsLong (\ASCerts)]
Let $(\sg, \Phi)$ be a stochastic parity game and $\lambda \in [0, 1]$.
An \ASCert for $(\sg, \Phi)$ and $\lambda$ is a tuple $\stochasticTemplate = (I, \vect{x}, \template)$ where $I \subseteq \states$, $\vect{x} \in \certs_{\geq \lambda}(\sg, I)$, and $\template$ is an \emph{almost-sure winning strategy template} for $(\subgame{I}{\vect{x}}, \Phi \lor \exit)$ from every state in $\subgame{I}{\vect{x}}$. A $\pOne$ strategy $\scheduler_1$ \emph{follows} $\stochasticTemplate$ if it $(i)$ \emph{follows} $\vect{x}$, and $(ii)$ \emph{follows} $\template$ almost surely.
\label{def:stochastic-template-follow}
\end{definition}
\ASCertsLong combine the certificates for stochastic invariants with the strategy templates for almost-sure winning for parity objectives. Note that $\Phi \lor \eventually \exit$ can be directly encoded as a parity objective since $\exit$ is absorbing. The following result shows that our \ASCertLong can be used to capture quantitatively winning strategies.
\begin{restatable}[Soundness]{theorem}{stochasticTemplateSoundness}
Let $\stochasticTemplate$ be an \ASCert for $(\sg, \Phi)$ and $\lambda \in [0, 1]$. If $\scheduler_1$ is a strategy for $\pOne$ that follows $\stochasticTemplate$, then %
$\inf_{\scheduler_2 \in \schedulersTwo}\prob_{\sg, \init}^{\scheduler_1, \scheduler_2}(\Phi) \geq \lambda.$
\label{theorem:soundness}
\end{restatable}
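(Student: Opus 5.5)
We decompose $\Phi$ into a stochastic invariant plus an almost-sure parity guarantee inside $I$. Fix a strategy $\scheduler_1$ following $\stochasticTemplate = (I, \vect{x}, \template)$ and an arbitrary $\scheduler_2 \in \schedulersTwo$. Write $\prob = \prob_{\sg,\init}^{\scheduler_1,\scheduler_2}$. The goal is
\[
\prob(\Phi) \;\geq\; \prob(\Phi \wedge \globally I) \;=\; \prob(\globally I) - \prob(\globally I \wedge \neg\Phi) \;\geq\; \lambda - 0 .
\]
Since $\scheduler_1$ follows $\vect{x}$ and $\vect{x} \in \certs_{\geq \lambda}(\sg, I)$, \Cref{theorem:cert-invariant} (via its corollary) yields $\prob(\globally I) \geq \lambda$. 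It therefore remains to show $\prob(\globally I \wedge \neg \Phi) = 0$, i.e., almost every play that stays in $I$ satisfies $\Phi$.

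For this we transfer the profile to the subgame $\subgame{I}{\vect{x}}$.
\begin{enumerate}
\item We observe that every $\scheduler_1$-compliant prefix staying in $I$ uses only pairs in $\SA_{\vect{x}}$. At a $\pOne$ state $\state \in I$, the distribution $\mu = \scheduler_1(\SApref\state)$ satisfies \eqref{eq:admissible-distr} by \eqref{eq:follow}, so each $\action$ with $\mu(\action)>0$ lies in $\SA_{\vect{x}}$. At $\pTwo$ states all actions are kept.
\item We define strategies $\scheduler_1', \scheduler_2'$ on $\subgame{I}{\vect{x}}$. They mimic $\scheduler_1, \scheduler_2$ on histories inside $I$. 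Histories reaching $\exit$ and $\pTwo$ histories not compliant with $\scheduler_1$ are handled arbitrarily, except that $\scheduler_2'$ is restricted to subgame-enabled actions.
\item We check that the measures coincide on cylinders of finite plays inside $I$, because $\transRel'$ agrees with $\transRel$ on $I$. Hence the events $\globally I \wedge \neg\Phi$ in $\sg$ and $\globally \neg\exit \wedge \neg\Phi$ in the subgame have the same probability. This uses that $\Phi$ is evaluated on state-action sequences that are literally identical.
\end{enumerate}

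Next we use the template. Since $\scheduler_1$ follows $\template$ almost surely, $\pOne$ wins $(\sg,\template)$ almost surely. We need this to yield that the transferred $\scheduler_1'$ satisfies $\template$ almost surely against $\scheduler_2'$ in the subgame. On plays that stay in $I$ this holds by the same cylinder correspondence. Plays that reach $\exit$ satisfy $\Phi \lor \eventually\exit$ anyway, so it suffices that template satisfaction holds almost surely on $\globally\neg\exit$. Because $\template$ is almost-sure winning for $(\subgame{I}{\vect{x}}, \Phi\lor\eventually\exit)$ from every state, in particular from $\init$, we get
\[
\prob^{\scheduler_1',\scheduler_2'}(\Phi \lor \eventually\exit) = 1,
\qquad\text{hence}\qquad
\prob^{\scheduler_1',\scheduler_2'}(\globally\neg\exit \wedge \neg\Phi) = 0 .
\]
This closes the argument. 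If $\init \notin I$, then $\vect{x}(\init)=1 \leq 1-\lambda$ forces $\lambda=0$, and the claim is trivial.

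The main obstacle is the interface between ``follows $\template$ almost surely'' and the subgame definition of winning for templates. Specifically, the template's winning guarantee quantifies over \emph{all} strategies following $\template$ in the subgame. We must argue that the projected $\scheduler_1'$ is such a strategy, with template satisfaction required from every state, including after arbitrary patching on non-$I$ histories. We would handle this by defining $\scheduler_1'$ after exit as any strategy following $\template$ (trivial at the absorbing $\exit$). We would then rely on the fact that template conditions are LTL properties of the play, so satisfaction along plays inside $I$ is inherited from $\sg$.
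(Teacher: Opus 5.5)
Your proof is correct and follows essentially the same route as the paper: the certificate (via \Cref{theorem:cert-invariant}) gives $\prob(\globally I)\geq\lambda$, and the template, applied to the transferred strategy in $\subgame{I}{\vect{x}}$, gives $\Phi\lor\eventually\exit$ almost surely. You combine the two bounds in $\sg$ and transfer only the null event $\globally I\wedge\neg\Phi$, whereas the paper bounds $\prob(\eventually\exit)\leq 1-\lambda$ inside the subgame; this spares you the paper's implicit convention that plays absorbed in $\exit$ violate $\Phi$, and your treatment of the case $\init\notin I$ and of the strategy transfer is more careful than the paper's one-line claim that $\scheduler_1$ ``also follows $\template$ in the subgame'' (though, like the paper, you tacitly assume $\template$ imposes satisfiable constraints at $\exit$).
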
%
\noindent The certificate $\vect{x}$ proves that $\scheduler_1$ keeps the plays within $I$ with probability $\geq \lambda$. Since $\scheduler_1$ follows $\template$, we know that $\scheduler_1$ satisfies $\Phi$ or leaves $I$ almost surely. We combine these two insights to conclude that $\scheduler_1$ satisfies $\Phi$ with probability $\geq \lambda$.
\begin{remark}[Upper Bound]
Since parity objectives are closed under complementation, we can also capture strategies $\scheduler_1$ that satisfy
$\sup_{\scheduler_2 \in \schedulers_2} \prob_{\sg, \init}^{\scheduler_1, \scheduler_2}(\Phi) \leq \lambda$ via an \ASCert for $(\sg, \neg \Phi)$ and $1 - \lambda$.
\end{remark}
\subsection{\ASCert Synthesis}\label{sec:synth1}
So far we assumed the invariant $I$ to be given. We now address the construction of such an appropriate invariant. Conceptually, we follow the workflow presented in \cref{fig:ascert-synthesis-overview} and extract an invariant from a computed winning strategy as follows.
\begin{restatable}{lemma}{prefixInvariant} \label{lemma:prefix-indep-invariants}
Let $(\sg, \Phi)$ be an SG with a parity objective $\Phi$ and $\scheduler_1$ a memoryless strategy for Player 1. We define $I \subseteq \states$ by
\begin{align}
I = \{\state \in \states \mid \inf_{\scheduler_2 \in \schedulersTwo} \prob_{\sg, \state}^{\scheduler_1, \scheduler_2}(\Phi) > 0\}.\label{eq:invariant-def}
\end{align}
Then for all $\state \in \states$ we have
\begin{align}
&\inf_{\scheduler_2 \in \schedulersTwo}\prob_{\sg, \state}^{\scheduler_1, \scheduler_2}(\globally I) = \inf_{\scheduler_2 \in \schedulersTwo}\prob_{\sg, \state}^{\scheduler_1, \scheduler_2}(\Phi), \label{eq:inv-prob}\\
&\inf_{\scheduler_2 \in \schedulersTwo}\prob_{\sg, \state}^{\scheduler_1, \scheduler_2}\bigl( \Phi \lor \eventually (\states \setminus I) \bigr) = 1.
\end{align}
\label{lemma:prefix-invariant}
\end{restatable}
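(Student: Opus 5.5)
We fix the memoryless strategy $\scheduler_1$ and work in the induced structure $\sg^{\scheduler_1}$. Because $\scheduler_1$ is memoryless, this is a finite MDP in which only $\pTwo$ makes choices, and $\pTwo$ tries to minimize the probability of $\Phi$. Write $v(\state) = \inf_{\scheduler_2} \prob_{\sg, \state}^{\scheduler_1, \scheduler_2}(\Phi)$, so that $I = \{\state \mid v(\state) > 0\}$.

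We use two standard facts about finite MDPs with parity objectives.
\begin{enumerate}
\item The infimum defining $v(\state)$ is attained by a pure memoryless $\pTwo$ strategy. This holds because $\neg\Phi$ is again a parity objective, and in finite MDPs maximizing a parity objective admits optimal pure memoryless strategies.
\item Since $\states$ is finite, $\eta \coloneqq \min_{\state \in I} v(\state) > 0$.
\end{enumerate}
We also use that $\Phi$ is prefix-independent. Together with memorylessness of $\scheduler_1$, this gives, for every $\pTwo$ strategy $\scheduler_2$ and every compliant prefix $\SApref\state$,
\[
\prob^{\scheduler_1,\scheduler_2}_{\sg,\init}(\Phi \mid \SApref\state) = \prob^{\scheduler_1,\scheduler_2[\SApref]}_{\sg,\state}(\Phi) \geq v(\state),
\]
where $\scheduler_2[\SApref]$ denotes the residual strategy after $\SApref$.

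\textbf{Second claim.} We show that $\prob^{\scheduler_1,\scheduler_2}_{\sg,\state}(\neg\Phi \land \globally I) = 0$ for every $\scheduler_2$. By Lévy's 0-1 law, $\prob(\Phi \mid \play_{\leq n}) \to \mathbf{1}_\Phi(\play)$ almost surely. On the event $\globally I$, every term of this sequence is at least $\eta > 0$ by the inequality above. Hence the limit equals $1$ almost surely on $\globally I$, i.e., $\Phi$ holds almost surely on $\globally I$. This gives $\prob(\Phi \lor \eventually(\states\setminus I)) = 1$ for every $\scheduler_2$, and therefore also for the infimum.

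\textbf{First claim, direction "$\geq$".} For every $\scheduler_2$ the second claim gives
\[
\prob(\Phi) \geq \prob(\Phi \land \globally I) = \prob(\globally I).
\]
Taking infima yields $\inf_{\scheduler_2} \prob(\Phi) \geq \inf_{\scheduler_2} \prob(\globally I)$.

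\textbf{First claim, direction "$\leq$".} Given any $\scheduler_2$, we build a modified strategy $\scheduler_2'$. It plays like $\scheduler_2$ until the play first enters $\states \setminus I$ at some state $\state$. From that point on it switches to an optimal memoryless strategy from fact 1, which ensures $\Phi$ holds with probability $v(\state) = 0$. The strategies $\scheduler_2'$ and $\scheduler_2$ induce the same measure on the event $\globally I$. By prefix-independence of $\Phi$ and the decomposition over the first exit from $I$, plays that leave $I$ satisfy $\Phi$ with probability $0$ under $\scheduler_2'$. Hence
\[
\prob^{\scheduler_1,\scheduler_2'}(\Phi) = \prob^{\scheduler_1,\scheduler_2'}(\Phi \land \globally I) \leq \prob^{\scheduler_1,\scheduler_2}(\globally I).
\]
Therefore $\inf_{\scheduler_2} \prob(\Phi) \leq \inf_{\scheduler_2} \prob(\globally I)$, which establishes \eqref{eq:inv-prob}.

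\textbf{Main obstacle.} The delicate step is the "$\leq$" direction. It requires the value $0$ to be attained exactly outside $I$, which is why we invoke fact 1 (existence of optimal, not merely $\varepsilon$-optimal, strategies in finite parity MDPs). It also requires the residual-strategy and prefix-independence bookkeeping to be stated precisely, so that the conditioning identity above holds for arbitrary history-dependent $\scheduler_2$.
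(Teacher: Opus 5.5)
Your proposal is correct and follows the same route as the paper. In both, the second claim comes from the probability of $\Phi$ being bounded away from zero while the play stays in $I$. The inequality $\inf_{\scheduler_2}\prob(\globally I) \leq \inf_{\scheduler_2}\prob(\Phi)$ is then read off from that claim. The reverse inequality comes from letting $\pTwo$ switch to a $\neg\Phi$-forcing strategy once $\states \setminus I$ is reached. Your use of L\'evy's 0-1 law and your explicit construction of $\scheduler_2'$ make rigorous steps the paper only argues informally. As a side remark, $\varepsilon$-optimal switching strategies would already suffice for that direction, so exact attainment in your fact~1 is convenient but not essential.
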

\noindent We note that a similar result for Markov chains has been presented in \cite{abate_quantitative_2025}. \Cref{lemma:prefix-indep-invariants} shows that, for a given memoryless strategy $\scheduler_1$ for $\pOne$, the states from which $\scheduler_1$ can win with positive probability correspond to the sought-after invariant. Specifically, the probability of remaining in $I$ corresponds to the probability of satisfying $\Phi$. Further, $\Phi$ is satisfied or $I$ is left almost surely. Recall that $\almostParityTemplate$ denotes the algorithm from \cite{phalakarn_templates_2025} which computes an almost-sure winning
strategy template. In total, \cref{lemma:prefix-indep-invariants} gives rise to the following approach for synthesizing an \ASCert:
\begin{enumerate}
\item Choose a memoryless $\pOne$ strategy $\scheduler_1$ such that $\inf_{\scheduler_2 \in \schedulersTwo} \prob_{\sg, \init}^{\scheduler_1, \scheduler_2}(\Phi) \geq \lambda$ and compute $I = \{\state \in \states \mid \inf_{\scheduler_2 \in \schedulersTwo} \prob_{\sg, \state}^{\scheduler_1, \scheduler_2}(\Phi) > 0\}$.
\item By \Cref{lemma:prefix-indep-invariants} we have that $\inf_{\scheduler_2 \in \schedulersTwo} \prob_{\sg, \init}^{\scheduler_1, \scheduler_2}(\globally I) \geq \lambda$. We can then synthesize a permissive certificate $\vect{x} \in \certs_{\geq \lambda}(\sg, I)$ (see \Cref{lemma:complete-mr} and \Cref{subsection:permissive-certificates}).
\item Further, from \Cref{lemma:prefix-indep-invariants} we know that $\inf_{\scheduler_2 \in \schedulers_2} \prob_{\subgame{I}{\vect{x}}, \init}^{\scheduler_1, \scheduler_2}(\Phi \lor \eventually \bot) = 1$. Thus, we can compute $\template = \almostParityTemplate(\subgame{I}{\vect{x}}, \Phi \lor \eventually \exit)$.
\item We return the \ASCert $T = (I, \vect{x}, \Lambda)$.
\end{enumerate}
We emphasize that the memoryless strategy $\scheduler_1$ is only needed as a starting point for the synthesis. The returned \ASCert $T$ captures \emph{all classes} of strategies that follow it, that is, $T$ is not restricted to the class of memoryless strategies, as we will further discuss in Sec.~\ref{sec:extraction}. The proposed synthesis approach can also be seen as a procedure for \emph{generalizing} single strategies to (possibly infinite) sets of strategies satisfying the objective, thereby enabling greater flexibility.

\changes{Let us now comment on the complexity of our synthesis procedure. In the first step, we need to determine an initial quantitatively winning strategy. The associated decision problem is in $\text{NP}{}\cap{}\text{coNP}$ \cite{chatterjee_quantitative_games_2004}. On the algorithmic side, e.g., \cite{ChatterjeeH06} solves stochastic parity games in randomized subexponential time. The second step can be performed in polynomial time via linear programming (see \Cref{subsection:permissive-lp}). Finally, in the third step, \almostParityTemplate{} first transforms the stochastic parity game into a non-stochastic one via the gadgets from \cite{chatterjee_quantitative_games_2004}, incurring a linear blowup in the number of priorities per state-action pair. The template is then computed on this game with the same complexity as Zielonka's algorithm \cite{anand_permissive_templates_2023}. Note that for reachability and (co-)Büchi objectives the template computation is more efficient, as the reduction does not depend on the number of priorities and the template can be computed in polynomial time \cite{anand_permissive_templates_2023}.}
\subsection{Globally Optimal \ASCerts}\label{subsection:global-opt}
We now consider the special case of using a globally optimal strategy as a starting point for \ASCerts synthesis from \Cref{sec:synth1}. It turns out that for \emph{globally optimal} strategies the choice of the invariant is \emph{unique}, i.e., for any strategy $\scheduler_1^*$ for $\pOne$ that is \emph{globally optimal} for a stochastic parity game $(\sg, \Phi)$ we have
\begin{align}
I^* = \{\state \in \states \mid \inf_{\scheduler_2 \in \schedulers_2} \prob_{\sg, \state}^{\scheduler_1^*, \scheduler_2}(\Phi) > 0\} = \{\state \in \states \mid \valueVector_{\sg,\Phi}(\state) > 0\}.
\label{eq:optimal-invariant}
\end{align}
For all globally optimal strategies, the invariants coincide and are inclusion-maximal. By \cref{eq:inv-prob} in \Cref{lemma:prefix-indep-invariants}, the certificate can be chosen as follows:
\begin{align}
\vect{x}^*(\state) = 1 - \valueVector_{\sg, \Phi}(\state) & \qquad \text{for all } \state \in \states.
\label{eq:optimal-cert}
\end{align}%
We now construct an \ASCert that only captures globally optimal strategies. Conversely, there always exists a globally optimal strategy that is captured by such an \ASCert. We refer to this as a \emph{relatively complete} characterization of globally optimal strategies. Below we omit the threshold $\lambda$ as it is not relevant.
\begin{restatable}[Relative Completeness]{theorem}{relativeCompleteness}\label{theorem:relative-complete}
Let $(\sg, \Phi)$ be a stochastic parity game, $I^*$ and $\vect{x}^*$ defined as in \eqref{eq:optimal-invariant} and \eqref{eq:optimal-cert}. Let $\stochasticTemplate^* = (I^*, \vect{x}^*, \Lambda)$ be an \ASCert.%
\begin{itemize}
\item If a $\pOne$ strategy $\scheduler_1^*$ follows $T^*$, then it is \emph{globally} optimal for $(\sg, \Phi)$.
\item Every memoryless \emph{globally} optimal $\pOne$ strategy for $(\sg, \Phi)$ follows $\vect{x}^*$. Further, there exists a globally optimal strategy for $\pOne$ that follows $T^*$.
\end{itemize}
\end{restatable}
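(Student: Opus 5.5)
The plan is to handle the two bullets separately. The first bullet follows from \Cref{theorem:soundness} by moving the initial state. The second bullet uses the Bellman equations for the value vector of the parity game.

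\emph{First bullet.} Fix $\state \in \states$ and let $\sg_\state$ be $\sg$ with initial state $\state$. I will check that $\vect{x}^* \in \certs_{\geq \valueVector_{\sg,\Phi}(\state)}(\sg_\state, I^*)$.
\begin{itemize}
\item Constraint \eqref{eq:threshold} holds with equality, since $\vect{x}^*(\state) = 1 - \valueVector_{\sg,\Phi}(\state)$.
\item Constraint \eqref{eq:losing-states} holds because every state outside $I^*$ has value $0$.
\item For \eqref{eq:opponent}, I use the standard one-step characterisation of values in stochastic parity games, which relies on prefix-independence of $\Phi$. At a $\pTwo$ state it gives $\valueVector_{\sg,\Phi}(\state) = \min_{\action} \sum_{\state'} \transRel(\state,\action,\state')\valueVector_{\sg,\Phi}(\state')$. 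So every single action satisfies $\valueVector_{\sg,\Phi}(\state) \leq \sum_{\state'}\transRel(\state,\action,\state')\valueVector_{\sg,\Phi}(\state')$, which is \eqref{eq:opponent} after substituting $\vect{x}^* = 1-\valueVector_{\sg,\Phi}$.
\end{itemize}
The conditions ``follows $\vect{x}^*$'' and ``follows $\template$ almost surely'' do not depend on the initial state. The first quantifies over all compliant histories, and the second is required from every state. Hence $(I^*,\vect{x}^*,\template)$ is also an \ASCert for $\sg_\state$, and \Cref{theorem:soundness} gives $\inf_{\scheduler_2}\prob^{\scheduler_1^*,\scheduler_2}_{\sg,\state}(\Phi) \geq \valueVector_{\sg,\Phi}(\state)$. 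The reverse inequality holds by the definition of the value, so $\scheduler_1^*$ is globally optimal.

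\emph{Second bullet, first part.} Let $\scheduler_1$ be memoryless and globally optimal, and let $\state \in I^* \cap \statesOne$. I need $\valueVector_{\sg,\Phi}(\state) \leq \sum_{\action}\scheduler_1(\state)(\action)\sum_{\state'}\transRel(\state,\action,\state')\valueVector_{\sg,\Phi}(\state')$. Fixing $\scheduler_1$ gives a finite MDP for $\pTwo$ with the prefix-independent objective $\Phi$. In this MDP, the minimal satisfaction probability at a state controlled by $\scheduler_1$ equals the $\scheduler_1$-weighted average of the successors' minimal probabilities. This is a one-step unfolding, using that $\pTwo$ may choose its continuation independently after each successor. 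By global optimality these minimal probabilities equal $\valueVector_{\sg,\Phi}$, so the inequality holds, in fact with equality.

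\emph{Second bullet, existence.} This is the main obstacle, because $\template$ is an arbitrary almost-sure template for the subgame and need not be followed by a given optimal strategy. My plan is to avoid picking $\scheduler_1^*$ in advance and to show instead that following $\vect{x}^*$ is automatic inside $\subgame{I^*}{\vect{x}^*}$.
\begin{itemize}
\item At $\pOne$ states the value satisfies the Bellman optimality equation $\valueVector_{\sg,\Phi}(\state) = \max_\action \sum_{\state'}\transRel(\state,\action,\state')\valueVector_{\sg,\Phi}(\state')$.
\item Therefore no action has expected successor value above $\valueVector_{\sg,\Phi}(\state)$.
\item Consequently, a distribution $\mu$ satisfies \eqref{eq:admissible-distr} for $\vect{x}^*$ if and only if its support consists of value-preserving actions, that is, actions attaining the maximum.
\item So $\SA_{\vect{x}^*}$ at $\state$ is exactly the set of value-preserving actions, and any strategy that only plays actions of $\subgame{I^*}{\vect{x}^*}$ at states of $I^*$ follows $\vect{x}^*$.
\end{itemize}
It then remains to produce a strategy that follows $\template$ almost surely in the subgame. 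For this I would invoke the guarantee of \cite{phalakarn_winning_2024,phalakarn_templates_2025} that such strategies exist for almost-sure winning templates, or the extraction described in \Cref{sec:extraction}. Outside $I^*$ the strategy is extended arbitrarily. This strategy follows $T^*$, and by the first bullet it is globally optimal.

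The step needing the most care is the equivalence of $\SA_{\vect{x}^*}$ with the value-preserving actions. It rests on the Bellman equalities for stochastic parity games, which I would cite from \cite{chatterjee_quantitative_games_2004} together with the existence of memoryless optimal strategies.
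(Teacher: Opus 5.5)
Your first bullet and the first half of the second bullet follow the paper's proof. The paper likewise reruns \Cref{theorem:soundness} from an arbitrary initial state, using that the template does not depend on $\init$. It also derives \eqref{eq:strategy} for a memoryless globally optimal strategy from prefix-independence; it argues by contradiction, where you use the one-step equality in the induced MDP.

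The genuine difference is the existence claim. The paper obtains it generically from \Cref{lemma:conflict-free} and \Cref{theorem:correctness-compliant-strat}, building a memoryless compliant strategy for an arbitrary synthesized \ASCert. That construction needs \Cref{lemma:colive-state-action-pairs-cert} (strictly $\vect{x}$-decreasing actions may be assumed not co-live). It also needs a mixing step $\mu' = (1-\gamma)\vect{1}_{\action^*} + \gamma\mu$ to reconcile the live/co-live requirements with the certificate inequality. You instead use the Bellman inequality $\valueVector_{\sg,\Phi}(\state) \geq \sum_{\state'}\transRel(\state,\action,\state')\valueVector_{\sg,\Phi}(\state')$ at $\pOne$ states. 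It shows that, for $\vect{x}^* = 1-\valueVector_{\sg,\Phi}$, every $\pOne$ action surviving in $\subgame{I^*}{\vect{x}^*}$ preserves $\vect{x}^*$ exactly in expectation. Hence \eqref{eq:follow} holds for \emph{every} distribution over subgame actions, and the tension between certificate and template disappears. This is correct and avoids the mixing construction. It also gives a cleaner picture for optimal \ASCerts: any strategy that stays within subgame actions at $I^*$-states and follows $\template$ is globally optimal. The paper's route, in exchange, also covers non-optimal certificates, where subgame actions can increase $\vect{x}$ in expectation and must be mixed with strictly decreasing ones.

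You should make two points explicit.
\begin{itemize}
\item The definition of an \ASCert does not guarantee that \emph{some} strategy follows $\template$. For example, a template declaring every action at $\exit$ unsafe is vacuously almost-sure winning, yet no strategy follows it. So your appeal to existence needs $\template = \almostParityTemplate(\subgame{I^*}{\vect{x}^*}, \Phi \lor \eventually \exit)$. The paper makes the same assumption implicitly when it invokes \Cref{lemma:conflict-free}.
\item A play can leave $I^*$ through a $\pTwo$ state of value $0$ and later re-enter $I^*$, while \eqref{eq:follow} quantifies over all compliant histories. So ``extend arbitrarily outside $I^*$'' must still prescribe subgame actions at $I^*$-states after such histories. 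This is automatic if you take the memoryless strategy extracted as in \Cref{sec:extraction}.
\end{itemize}
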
%
\noindent If we choose $I^*$ and $\vect{x}^*$ as in \eqref{eq:optimal-invariant} and \eqref{eq:optimal-cert}, respectively, the strategies that are captured by our \ASCerts are guaranteed to be globally optimal. We also call $\stochasticTemplate^*$ an \emph{optimal} \ASCert. Conversely, each memoryless globally optimal strategy follows the certificate $\vect{x}^*$ and there exists a globally optimal strategy that follows $\stochasticTemplate^*$. We can only prove relative completeness, in part because the templates from \cite{anand_permissive_templates_2023} are incomplete, i.e., a globally optimal strategy might not follow $\Lambda$. %

\subsection{Strategy Extraction from \ASCerts}\label{sec:extraction}
Previous work on strategy templates focused on the extraction of memoryless strategies \cite{anand_permissive_templates_2023} or a simple, counter-based strategy extraction \cite{phalakarn_winning_2024}. Given the context of quantitative winning, we now consider the extraction of general randomized strategies, by providing a more instructive characterization of what it means for strategies to follow an \ASCert. As an important by-product, we obtain a new characterization for strategies that follow a template almost surely.
For a $\pOne$ strategy $\scheduler_1$, finite play $\SApref \state \in \SA^* \states_1$, and set $A \subseteq \SA$, we write $\scheduler_1(\SApref \state)(A)$ for $\sum_{\action \in \actions(\state)} \scheduler_1(\SApref \state)(\action) \cdot \vect{1}_A(\state, \action)$, i.e., the probability of $\scheduler_1$ choosing state-action pairs from $A$ given $\SApref \state$. With this, we say that a (randomized) $\pOne$ strategy $\scheduler_1$ \emph{complies} with an \ASCert~$(I, \vect{x}, \template)$, where $\template=(U, D, \mathcal{H})$, if

\begin{subequations}\label{eq:strat_compliance}
 \begin{align}
&\vect{x}(\state) \geq \sum_{\action \in \actions(\state)} \sum_{\state' \in \states} \scheduler_1(\SApref \state)(\action) \cdot \transRel(\state, \action, \state') \cdot \vect{x}(\state') & \forall \SApref \state \in \playsFin^{\scheduler_1} \intersection \SA^{*} \states_1, \label{eq:cert-follow} \\
& \scheduler_1(\play)(U) = 0 & \forall \play \in \playsFin^{\scheduler_1} \intersection \SA^{*}\states_1, \label{eq:unsafe-sum} \\
& \sum_{\{n \mid \play_{=n} \in \src{D}\}} \scheduler_1(\play_{\leq n})(D) < \infty & \forall \play \in \plays^{\scheduler_1}, \label{eq:colive-sum}\\
& \play {\models} \globally \eventually \src{H} \Rightarrow \sum_{\{n \mid \play_{{=}n} \in \src{H} \}} \scheduler_1(\play_{\leq n})(H) = \infty& \forall H \in \mathcal{H},\; \forall \play \in \plays^{\scheduler_1}. \label{eq:live-sum}
\end{align}
\end{subequations}
Recall that \cref{eq:cert-follow} is the defining condition for following a certificate in \cref{def:follow-cert}. The conditions \eqref{eq:unsafe-sum}--\eqref{eq:live-sum} provide the novel characterization for following a strategy template $\template$ almost surely. Intuitively, \eqref{eq:unsafe-sum} requires that $\scheduler_1$ never plays any unsafe state-action pair. Further, the probability of playing a co-live pair must converge to a finite value (\cref{eq:colive-sum}), reflecting that such pairs are only chosen finitely often. Whenever a live-group $H \in \mathcal{H}$ is enabled infinitely often in a play $\play$, the total probability of choosing a pair from $H$ has to go to infinity (\cref{eq:live-sum}). This ensures that the probability of choosing $H$ does not vanish over time. The following result shows soundness of the characterization.
\begin{restatable}{theorem}{almostTemplateSoundness}\label{theorem:correctness-compliant-strat}
Let $(\sg, \Phi)$ be a stochastic parity game and $\lambda \in [0, 1]$. Further, let $\stochasticTemplate = (I, \vect{x}, \template)$ be an \ASCert for $(\sg, \Phi)$ and $\lambda$. Let $\scheduler_1$ be a $\pOne$ strategy that is compliant with $\stochasticTemplate$. Then, $\scheduler_1$ follows $\stochasticTemplate$ and we have $\inf_{\scheduler_2 \in \schedulers_2} \prob_{\sg, \init}^{\scheduler_1, \scheduler_2}(\Phi) \geq \lambda$.
\end{restatable}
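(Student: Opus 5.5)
Since \eqref{eq:cert-follow} coincides with \eqref{eq:follow}, a compliant $\scheduler_1$ follows $\vect{x}$ directly. Once we show that $\scheduler_1$ also follows $\template$ almost surely, the probability bound follows from \Cref{theorem:soundness}. So the whole task is to show that conditions \eqref{eq:unsafe-sum}--\eqref{eq:live-sum} imply $\inf_{\scheduler_2}\prob^{\scheduler_1,\scheduler_2}_{\cdot,\state}(\template)=1$ from every state. I read this on the subgame $\subgame{I}{\vect{x}}$, where $\scheduler_1$ induces a strategy by playing as in $\sg$ until a state outside $I$ would be reached, then moving to $\exit$.

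First, check that the induced strategy is legal in $\subgame{I}{\vect{x}}$, i.e.\ it only puts mass on pairs in $\SA_{\vect{x}}$. For a history $\SApref\state$ with $\state\in I\cap\states_1$, the distribution $\mu=\scheduler_1(\SApref\state)$ satisfies \eqref{eq:admissible-distr} by \eqref{eq:cert-follow}. Hence every action in its support belongs to $\SA_{\vect{x}}$. Histories in $\sg$ and the subgame correspond one-to-one up to the first exit, so the probability measures agree on events concerning the prefix inside $I$.

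Next, fix an arbitrary $\scheduler_2$ and a start state, and handle the three template components separately; a finite intersection of almost-sure events is almost sure.

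\emph{Safety.} By \eqref{eq:unsafe-sum}, every cylinder of a finite play taking a pair in $U$ has probability zero. There are countably many such cylinders, so $\safetyTemplate(U)$ holds almost surely.

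\emph{Co-liveness.} Fix $(\state,\action)\in D$, or apply the argument to $D$ as a whole. Let $E_n$ be the event that at step $n$ the play is in $\src{D}$ and a pair of $D$ is chosen, and let $\mathcal{F}_n$ be the history $\sigma$-algebra. Then $\prob(E_n\mid\mathcal{F}_n)=\scheduler_1(\play_{\le n})(D)\cdot\vect{1}[\play_{=n}\in\src{D}]$. By Lévy's extension of the Borel--Cantelli lemma (the conditional Borel--Cantelli lemma), almost surely $\{E_n \text{ i.o.}\}=\{\sum_n \prob(E_n\mid\mathcal{F}_n)=\infty\}$. Condition \eqref{eq:colive-sum} makes this sum finite on every compliant play, and compliant plays have full measure. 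Hence $E_n$ happens only finitely often almost surely, i.e.\ $\coliveTemplate(D)$ holds.

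\emph{Liveness.} Fix $H\in\mathcal{H}$ and define $E_n$ as the event of choosing a pair from $H$ at step $n$ while in $\src{H}$. On plays satisfying $\globally\eventually\src{H}$, condition \eqref{eq:live-sum} makes the conditional sum diverge. By the same Lévy theorem, $E_n$ then occurs infinitely often almost surely on that event, which gives $\globally\eventually\src{H}\Rightarrow\globally\eventually H$ almost surely.

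The main technical obstacle is applying the conditional Borel--Cantelli lemma correctly, because the conditions are stated only on $\scheduler_1$-compliant plays. I need the conditional probabilities with respect to the natural filtration, where $E_n$ is measurable w.r.t.\ $\mathcal{F}_{n+1}$ and the state at step $n$ is $\mathcal{F}_n$-measurable. I also need to check that the exceptional null sets do not depend on $\scheduler_2$. Since $\scheduler_2$ was arbitrary and the start state too, $\scheduler_1$ follows $\template$ almost surely. \Cref{theorem:soundness} then yields $\inf_{\scheduler_2}\prob_{\sg,\init}^{\scheduler_1,\scheduler_2}(\Phi)\ge\lambda$.
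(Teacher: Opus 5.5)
Your proposal is correct and follows essentially the same route as the paper: following $\vect{x}$ is immediate from \eqref{eq:cert-follow}, safety holds because unsafe pairs are chosen with probability zero, and co-liveness and liveness come from the conditional (second) Borel--Cantelli lemma applied to the indicators of choosing a pair from $D$ resp.\ $H$; the paper's citation of Durrett's Theorem~4.3.4 is exactly L\'evy's extension, and the bound then follows from \Cref{theorem:soundness}. The differences are minor: you argue directly in $\subgame{I}{\vect{x}}$ after checking that $\scheduler_1$ only uses pairs in $\SA_{\vect{x}}$, whereas the paper argues in $\sg$ and makes this transfer inside the proof of \Cref{theorem:soundness}; and your worry that the null sets must not depend on $\scheduler_2$ is unnecessary, since each fixed $\scheduler_2$ already yields probability one.
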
%
\noindent The characterization in \eqref{eq:strat_compliance} provides us with a recipe for \emph{extracting strategies}. When a state $\state$ is entered, we collect the actions $A \subseteq \actions(\state)$ that can always be played (according to $\template$), i.e., $\action \in A$ iff $(\state, \action) \notin U \union D$. Then, we consider the actions $A_{\mathcal{H}} \subseteq \actions(\state)$ that belong to a live-group and simply choose a distribution $\mu$ over $A \union A_{\mathcal{H}}$ such that the condition \eqref{eq:cert-follow} is satisfied and $\mu(A_{\mathcal{H}}) > 0$ if $A_{\mathcal{H}} \neq \emptyset$. To ensure that the probability of playing live-actions does not vanish over time, we determine a lower bound $p \in (0, 1]$ a priori and require the probability of playing such actions to be at least $p$. Note that the choice of the distribution $\mu$ is fully local and \emph{does not} depend on choices in other states.
For the special case of memoryless strategies, the quantification in \cref{eq:cert-follow} and \cref{eq:unsafe-sum} simplifies to a quantification over $\states_1$ and allows us to choose a fixed compliant distribution for each $\state\in\states_1$. The following result states that for \ASCerts that are synthesized as in \Cref{sec:synth1}, it is always possible to extract a compliant strategy.
\begin{restatable}{lemma}{conflictFree}\label{lemma:conflict-free}
	Let $\stochasticTemplate$ be an \ASCert synthesized as described in \Cref{sec:synth1}. Then, there exists a strategy $\scheduler_1$ which is compliant with $\stochasticTemplate$.
\end{restatable}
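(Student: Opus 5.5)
The plan is to build an explicit, possibly history-dependent, randomized strategy $\scheduler_1$ and check \eqref{eq:cert-follow}--\eqref{eq:live-sum} one by one. Outside $I$ and at $\exit$ nothing is constrained, so we let $\scheduler_1$ agree there with the memoryless strategy used as the starting point of \Cref{sec:synth1}; call it $\hat\scheduler_1$. Fix $\state \in I \intersection \statesOne$ and write $e(\action) = \sum_{\state'} \transRel(\state,\action,\state')\vect{x}(\state') - \vect{x}(\state)$ for the \emph{excess} of action $\action$. Condition \eqref{eq:admissible-distr} then reads $\sum_\action \mu(\action) e(\action) \le 0$. This constraint is linear in $\mu$, so the admissible distributions at $\state$ form a convex polytope $M_\state$. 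By definition of $\SA_{\vect{x}}$, the union of their supports is exactly the set of actions of $\state$ in $\subgame{I}{\vect{x}}$.

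\textbf{Step 1 (convexity).} Averaging, for each action $\action$ of the subgame at $\state$, one witness $\mu_\action\in M_\state$ with $\mu_\action(\action)>0$ gives a distribution $\nu_\state \in M_\state$ whose support is all of $\SA_{\vect{x}}(\state)$. Step 2 needs a second distribution $\mu^{\mathsf{good}}_\state \in M_\state$ with support contained in $\SA_{\vect{x}}(\state)\setminus(U\union D)$ that puts positive mass on every live-group pair at $\state$.

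\textbf{Step 2 (the strategy).} On the $n$-th visit to $\state$, play the mixture $(1-\eta_n)\mu^{\mathsf{good}}_\state + \eta_n\,\nu'_\state$ with $\eta_n = 2^{-n}$. Here $\nu'_\state$ is $\nu_\state$ with its $U$-mass moved to $\mu^{\mathsf{good}}_\state$; this keeps it in $M_\state$ whenever the moved actions had non-negative excess, and otherwise we simply drop $\nu'$ and take $\eta_n=0$. The four conditions then follow as below.
\begin{itemize}
\item \eqref{eq:cert-follow} holds because $M_\state$ is convex.
\item \eqref{eq:unsafe-sum} holds by construction of $\mu^{\mathsf{good}}_\state$ and $\nu'_\state$.
\item \eqref{eq:colive-sum} holds because the $D$-mass at the $n$-th visit is at most $\eta_n$, and $\sum_n \eta_n<\infty$ with finitely many states.
\item \eqref{eq:live-sum} holds because every live-group pair at $\state$ receives probability at least $\tfrac12\min_\action \mu^{\mathsf{good}}_\state(\action)>0$ on every visit, so the sum diverges whenever $\src{H}$ is visited infinitely often.
\end{itemize}
In fact the $\eta_n$-part is only cosmetic; the whole argument reduces to the existence of $\mu^{\mathsf{good}}_\state$.

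\textbf{Step 3 (main obstacle).} It remains to show that for every $\state\in I\intersection\statesOne$ there is a distribution in $M_\state$ that avoids $U\union D$ and covers the live pairs. By the excess formulation, it suffices to find one action $b\in\SA_{\vect{x}}(\state)\setminus(U\union D)$ with $e(b)\le 0$. Indeed, mixing $b$ with small weight onto every live pair $\action$ keeps the total excess non-positive whenever $e(b)<0$ or every such $e(\action)\le 0$. The degenerate case, where $e(b)=0$ for all allowed $b$ but some live pair has $e(\action)>0$, cannot occur: such a pair is played only together with a negative-excess action, and that action must then itself be allowed.

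To find $b$ I would use two facts about \almostParityTemplate{} from \cite{phalakarn_templates_2025,anand_permissive_templates_2023}. First, its templates are \emph{conflict-free}: every state of the (almost-sure winning) subgame has an action outside $U\union D$, and live-group pairs are never unsafe or co-live. Second, the subgame contains only pairs of $\SA_{\vect{x}}$, and $\hat\scheduler_1$ (which follows $\vect{x}$) has support in $\SA_{\vect{x}}$ and is almost-sure winning there by \Cref{lemma:prefix-indep-invariants}. The remaining worry is an allowed action $b$ with $e(b)>0$ whose admissibility relies on co-live or unsafe partners. If this happens, I would run $\almostParityTemplate$ on the refined subgame keeping only pairs with $e\le 0$ plus their convex partners. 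Alternatively, I would argue that, because $\vect{x}$ was computed for $\hat\scheduler_1$ and $\hat\scheduler_1$'s support survives in the template's non-$U$ actions, some non-positive-excess action of $\hat\scheduler_1$'s support is neither unsafe nor co-live. This is the step I expect to need the most care.
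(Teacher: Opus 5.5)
Your reduction matches the paper's. It suffices to exhibit, for each $\state\in I\cap\states_1$, one distribution in $M_\state$ that avoids $U\cup D$ and puts positive mass on the live groups at $\state$; the memoryless strategy playing it is compliant. So your $\eta_n$-mixture is indeed superfluous, and $U=\emptyset$ anyway, since every state of $\subgame{I}{\vect{x}}$ is almost-sure winning for $\Phi\lor\eventually\exit$ by \Cref{lemma:prefix-indep-invariants}.

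The gap is Step~3, which you leave open. The sentence ``that action must then itself be allowed'' is precisely what has to be proved, and nothing about \almostParityTemplate{} yields it. The template is computed from the graph and colouring of the subgame without any reference to $\vect{x}$, so an action $a^*$ with $e(a^*)<0$ may well be co-live. Suppose it is, and every allowed action at $\state$ (or a live pair you must play with non-summable probability) has positive excess. Then \eqref{eq:cert-follow} forces a proportional, hence non-summable, mass on $a^*$, and \eqref{eq:colive-sum} fails on plays revisiting $\state$.

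Your fallbacks do not repair this. Re-running \almostParityTemplate{} on a refined subgame changes the \ASCert instead of proving the claim about the given one. There is also no reason why the support of $\hat{\scheduler}_1$ at $\state$ should contain an allowed action of non-positive excess: a winning strategy may use co-live actions at states it visits only finitely often, and its allowed actions may have positive excess. Finally, a zero-excess allowed action would not even help in the degenerate case you identified.

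The missing ingredient is a separate property of certificates, proved in the paper as \Cref{lemma:colive-state-action-pairs-cert}. If $\vect{x}(t)>\sum_{\state'}\transRel(t,b,\state')\vect{x}(\state')$, then every strategy following $\vect{x}$ plays $(t,b)$ infinitely often with probability $0$, against every opponent. The proof passes to a memoryless $\vect{x}$-following strategy and its induced MDP, in which $\vect{x}$ is non-increasing in expectation under every action. If $(t,b)$ recurred, $t$ would lie in an end component $(T,A)$. The states of $T$ minimising $\vect{x}$ would then form a closed set excluding $t$, since $\vect{x}(t)>\sum_{\state'}\transRel(t,b,\state')\vect{x}(\state')\geq\min_{\state'\in T}\vect{x}(\state')$. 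This contradicts strong connectivity.

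Consequently, all strictly decreasing pairs can be removed from $D$ without changing which $\vect{x}$-following strategies follow $\template$ almost surely. This is the paper's ``w.l.o.g.'' step, and afterwards every action with $e<0$ is allowed. Step~3 then becomes short:
\begin{itemize}
\item Take any $\mu$ on allowed actions covering the live groups at $\state$ (conflict-freeness).
\item If its excess is positive, some action in its support has positive excess.
\item That action lies in $\SA_{\vect{x}}$, so its admissible witness must contain an action $a^*$ of the subgame with $e(a^*)<0$, and $a^*$ is now allowed.
\item The mixture $(1-\gamma)\mathbf{1}_{a^*}+\gamma\mu$ has non-positive excess for small $\gamma>0$ while still charging the live pairs.
\end{itemize}
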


\begin{example}
We revisit our example from \Cref{section:overview} (see \cref{fig:running-example}) with objective $\Phi$ and threshold $\lambda = 0.75$. Combining the invariant $I = \states \setminus \{\state_4\}$, certificate $\vect{x}$ from \cref{eq:example-certificate} and template $\template = (U, D, \mathcal{H}) = \almostParityTemplate(\subgame{I}{\vect{x}}, \Phi \lor \eventually \exit)$, where $U = \emptyset$, $D = \{ \texttt{visit} \}$, and $\mathcal{H} = \{\{\texttt{maintain}\}\}$, yields the $\ASCert$ $\stochasticTemplate = (I, \vect{x}, \template)$. We can then use $\stochasticTemplate$ to easily extract strategic choices. For example, in $\state_5$ the certificate $\vect{x}$ enables us to pick any action. The template $\template$ informs us that we should play \texttt{maintain} with positive probability and avoid choosing \texttt{visit}. %
\end{example}

%% file: sections/evaluation.tex
We provide a \emph{proof-of-concept} Python tool that implements the synthesis of \ASCerts (\cref{fig:ascert-synthesis-overview} and \cref{sec:synth1}) and accepts SGs specified in the \textsc{Prism} language \cite{KNP11}.
Recall that the synthesis of \ASCerts for a stochastic parity game $(\sg, \Phi)$ requires an initial strategy. %
For the special case of reachability, we use \textsc{Prism-games} (via the \textsc{UMB} \cite{andriushchenko2026umbunifiedmarkovbinary} format) to compute globally optimal strategies. For general parity objectives, the tool landscape is scarce, and we use \textsc{Mungojerrie} \cite{hahn_mungojerrie_2023} to obtain a strategy via reinforcement learning\footnote{\textsc{Mungojerrie} contains a model checker for stochastic parity games. Unfortunately, we encountered instances where it was inconsistent with the learning algorithm.}. Since this does not yield optimal strategies in general, the \ASCerts synthesized by our tool are only \emph{globally optimal} (\Cref{subsection:global-opt}) for reachability objectives. Once a strategy $\scheduler_1$ is obtained, the invariant~$I$ (\cref{eq:invariant-def}) is computed by solving the induced MDP $\sg^{\scheduler_1}$ via \textsc{Storm} \cite{HenselJKQV22}. The synthesis of a permissive certificate $\vect{x}$ (\Cref{subsection:permissive-certificates}) is implemented using \textsc{Gurobi} \cite{gurobi}. 
Afterwards, the subgame $\sg[I,\vect{x}]$ is constructed and an almost-sure winning strategy template $\Lambda$ is computed over this subgame w.r.t.\ the objective $ \Phi \lor \eventually \exit$ by using the gadgets from \cite{chatterjee_quantitative_games_2004} and the \textsc{PeSTel} tool \cite{anand_permissive_templates_2023}. Finally, combining $\Lambda$, $I$ and $\vect{x}$ as in \cref{def:stochastic-template-follow} results in the desired \ASCert.

All experiments were performed on a machine with an Apple M2 CPU and 24GB memory. Our prototype and the experimental data are available at \cite{baier_2026_22012734}. %

\smallskip

\noindent\textbf{Performance.} We evaluated the performance of our prototype on six case studies (with between $257$ and $34645$ states) from the literature \cite{10.1007/978-3-030-53291-8_21,KwiatkowskaNP12,kwiatkowska_prism_games_2020,KretinskyRSW22,BalsEKW24,ChenFKPS13}. In \Cref{tab:parity-small} we summarize the results for parity objectives, while the results for reachability objectives are reported in \Cref{subsection:performance}. Each row in \Cref{tab:parity-small} corresponds to an SG $\sg$ with parity objective $\Phi$ (the number of colors in $\Phi$ is given by \emph{\#Colors}). The columns \emph{Invariant}, \emph{Certificate} and \emph{Template} show the runtime of each corresponding computation step and \emph{Total} describes the total \ASCert synthesis time.
For the considered models, the synthesis times for our \ASCerts ranged from a few seconds to roughly a minute\footnote{This excludes the time for synthesizing an initial strategy, since the runtime of \textsc{Mungojerrie} depends on the choice of (manually-tuned) hyperparameters.}.
Computing the certificate and strategy template are the most time-consuming steps, since they involve the setup of an LP and the construction of a non-stochastic parity game, respectively. Since both steps are currently implemented in Python, there is considerable room for improvement. We also observe that the number of colors affects the computation time of the template. The other steps seem mostly unaffected. In summary, our prototype can synthesize \ASCerts for (moderately-sized) models in a reasonable amount of time. We refer to \Cref{subsection:performance} for more details.
\input{tables/parity-small.tex}

\smallskip

\changes{\noindent\textbf{Runtime Adaptation.} We now consider the use of \ASCerts for adapting to unforeseen conditions at \emph{runtime} on a modified version of the \emph{hallway} model from \cite{10.1007/978-3-030-53291-8_21}. The model represents the interaction between a \emph{robot} ($\pOne$) and a \emph{human} ($\pTwo$) on a grid, as shown in \cref{fig:grid}. The robot can move north, south, west, or east, where for each move there is a non-zero probability that the robot instead ends up in a different direction. Similarly, the human can move on the grid. The robot's objective is to reach the gray area and maintain (i.e., visit) the machine~{\machine} infinitely often if it is used infinitely often by the human. This objective can be expressed as a parity objective $\Phi$. To reach the gray area, the robot must avoid the puddles~\puddle, or it will get stuck. Further, after maintenance, the robot needs to recharge at the station~{\station} before it can maintain the machine again. For simplicity, we assume that the human only moves when the robot is in the gray area, and that the gray area, once reached, cannot be left.

Suppose at \emph{runtime} the robot is equipped with sensors that measure the \emph{mechanical wear} on its components resulting from choosing the different actions. For simplicity, we assume that at each step and for each action a measurement $w$ is sampled uniformly from $\{n \in \reals \mid 1 \leq n \leq 10\}$, where higher values of $w$ correspond to higher wear. Using our \ASCerts, the robot can minimize the mechanical wear while still completing its objective $\Phi$ with sufficiently high probability.
To this end, when arriving in a grid location, the robot extracts a distribution $\mu$ over the available actions from the \ASCert that minimizes wear. This minimization respects the \ASCert and ensures that $\Phi$ will still be satisfied with the desired probability. We emphasize that the extraction and minimization are \emph{fully local}, i.e., they do not depend on decisions in other states.

We first obtained a strategy $\scheduler_1$ for which $\inf_{\scheduler_2 \in \schedulersTwo}\prob_{\sg}^{\scheduler_1, \scheduler_2}(\Phi) = \lambda \approx 0.86$ holds. Using $\scheduler_1$ as the initial strategy, we computed \ASCerts for different probability thresholds. That is, for thresholds $\lambda_{\gamma} = \lambda \cdot \gamma$ where $\gamma \in \{1, 0.75, 0.5, 0.25, 0\}$, we computed \ASCerts $\stochasticTemplate_\gamma$ where the certificate maximizes permissiveness in grid locations around the first puddle {\puddle} (as described in \Cref{subsection:permissive-certificates} and \Cref{subsection:permissive-lp}). Intuitively, in these locations in particular, the robot can perform riskier moves (as in the example in \Cref{section:overview}), so we provide the robot with more flexibility if higher failure probabilities can be tolerated. We then compared our approach against the (non-adaptive) initial strategy $\scheduler_1$. For both approaches, we ran $2000$ simulation runs, where each run consisted of $1000$ steps. The actions of the human ($\pTwo$) were chosen uniformly at random. The results in \cref{fig:runtime-adaptation} show that the \ASCerts enable the robot to reduce the total wear substantially. Notably, even for $\gamma = 1$, the wear can already be reduced by $40\%$ compared to the fixed strategy $\scheduler_1$. Note that $\stochasticTemplate_{1}$ is an \ASCert for all thresholds $\lambda_\gamma$. However, when the probability threshold is relaxed, i.e., $\gamma<1$, our synthesis approach yields \ASCerts $\stochasticTemplate_\gamma$ which are even more permissive than $\stochasticTemplate_{1}$. In particular, as shown in \cref{fig:runtime-adaptation}, the more the probability threshold is relaxed, the greater the reduction in mechanical wear, indicating the usefulness of the permissive certificate synthesis. Altogether, this demonstrates the effectiveness and potential of \ASCerts for runtime adaptation. More details are provided in \Cref{subsection:adaptation}.
}

\begin{figure}[!t]
	\centering
	\hfill
	\begin{minipage}[b]{0.45\textwidth}
		\centering
		\includegraphics[height=3.55cm]{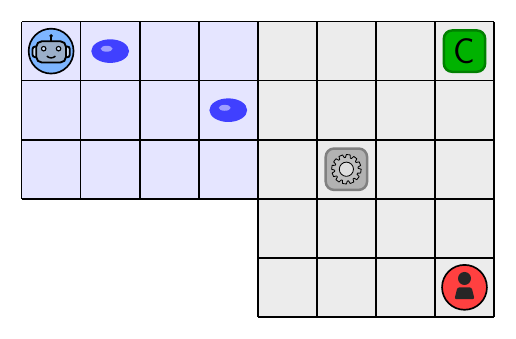}
		\vspace{0cm}
		\caption{Illustration of the grid.}
		\label{fig:grid}
	\end{minipage}
	\hfill
	\begin{minipage}[b]{0.49\textwidth}
		\centering
		\includegraphics[height=4cm]{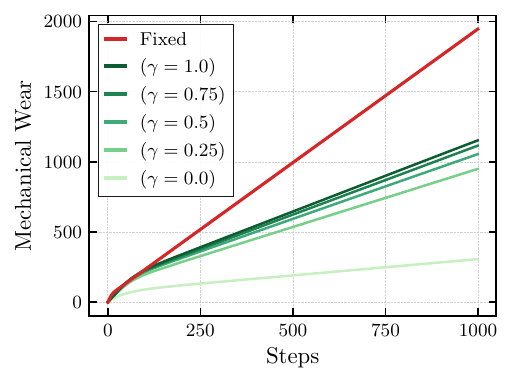}
		\vspace{-0.2cm}
		\caption{Average mechanical wear.}
		\label{fig:runtime-adaptation}
	\end{minipage}
\end{figure}

%% file: tables/parity-small.tex
{
\renewcommand{\arraystretch}{0.7} %
\begin{table}[t] \scriptsize
	\centering
	\setlength{\tabcolsep}{4pt}  %
	\caption{Runtimes for parity (wall-clock times and specified in seconds).}
	\label{tab:parity-small}
	\begin{tabular}{l @{\hspace{3em}} ccc @{\hspace{3em}} rrrrr}
		\toprule
		Case Study & \#Colors & $\card{\states}$ & $\card{\SA}$ & Invariant & Certificate & Template & Total \\
\midrule
\multirow{9}{*}{\texttt{bigmec} \cite{KretinskyRSW22}}& \multirow{3}{*}{3} & 10003 & 21432 & 1.133 & 11.378 & 1.613 & 14.303 \\
&  & 20003 & 42860 & 4.301 & 28.648 & 4.297 & 37.351 \\
&  & 30003 & 64290 & 9.740 & 49.046 & 8.155 & 67.077 \\
\cmidrule{2-8}
& \multirow{3}{*}{5} & 10003 & 21432 & 1.217 & 11.398 & 1.650 & 14.269 \\
&  & 20003 & 42860 & 4.566 & 28.390 & 4.467 & 37.430 \\
&  & 30003 & 64290 & 10.128 & 48.760 & 8.559 & 67.459 \\
\midrule
\multirow{3}{*}{\texttt{grid} \cite{BalsEKW24}}& \multirow{3}{*}{3} & 257 & 1029 & 0.011 & 0.577 & 0.129 & 0.949 \\
&  & 1025 & 4100 & 0.051 & 2.307 & 0.490 & 2.850 \\
&  & 4097 & 16388 & 0.340 & 9.228 & 2.193 & 11.767 \\
\midrule
\multirow{4}{*}{\texttt{hallway} \cite{10.1007/978-3-030-53291-8_21}}& \multirow{2}{*}{3} & 10240 & 26624 & 1.308 & 9.978 & 2.553 & 13.839 \\
&  & 25000 & 65000 & 6.937 & 24.857 & 9.291 & 41.136 \\
\cmidrule{2-8}
& \multirow{2}{*}{5} & 10240 & 26624 & 1.331 & 9.936 & 3.687 & 14.962 \\
&  & 25000 & 65000 & 7.166 & 24.941 & 13.363 & 45.488 \\
\end{tabular}
\end{table}
}

%% file: sections/conclusion.tex
\changes{This work introduced \ASCertsLong (\ASCerts) as a local and permissive representation of quantitatively winning strategies in stochastic parity games. 
	We first presented certificates for stochastic invariants in SGs and showed their usage as a permissive representation of strategies. Our \ASCerts then carefully combined these certificates with strategy templates \cite{phalakarn_templates_2025,phalakarn_winning_2024,anand_computing_2023} to capture quantitatively winning strategies.
	The proposed \ASCert synthesis starts from a winning strategy, synthesizes a permissive certificate and computes the template via an induced subgame. For globally optimal strategies the synthesis can be simplified and \ASCerts provide a relatively complete characterization for such strategies. Finally, we presented the extraction of strategies from \ASCerts and empirically demonstrated their usefulness for runtime adaptation.

We plan on using \ASCerts for \emph{shielding} \cite{AlshiekhBEKNT18,bloem_shield_2015} in reinforcement learning to enforce that learned strategies satisfy an $\omega$-regular objective with a guaranteed probability. While \cite{anand2025followstarsdynamicomegaregular} uses PeSTels for shielding, it can only provide qualitative guarantees. Lastly, our tool would benefit from engineering improvements and an integrated stochastic parity game solver.
}

%% file: credits.tex
\begin{credits}
	\changes{\subsubsection{\ackname}
	This work was funded by the German Research Foundation (DFG) under 
	Germany's Excellence Strategy: EXC 2050/2, 390696704 -- Cluster of Excellence ``Centre for Tactile Internet with Human-in-the-Loop'' (CeTI) of TU Dresden;
	by DFG grant 389792660 as part of TRR~248 (see \url{https://perspicuous-computing.science}); 
	by the German Federal Ministry of Research, Technology and Space (BMFTR) within the project SEMECO Q1 (03ZU1210AG); by the DFG project SCHM 3541/1-1.
}
\subsubsection{\discintname}
The authors have no competing interests to declare that are relevant to the content of this article.
\end{credits}

%% file: sections/appendix-certificates.tex
We now provide a soundness proof for our certificates (\Cref{theorem:cert-invariant}). Further, we present the LP for synthesizing permissive certificates (\Cref{subsection:permissive-lp}).

\stochasticInvariantCertificates*
\begin{proof}
Let $\scheduler_2$ be an arbitrarily chosen $\pTwo$ strategy.
Let $X = \{X_t\}_{t \in \naturals}$ be the Markov chain induced by $\scheduler_1$ and $\scheduler_2$. We write $\prob(\cdot)$ to denote the probability measure of the associated probability space. Further, $\expectation{}{\cdot}$ denotes the expectation. Now let $Y = \{Y_t\}_{t \in \naturals}$ be a stochastic process  where $Y_t \colon \plays(\sg^{\scheduler_1, \scheduler_2}, \init) \to \realsnn$ is defined by 
\[
Y_t(\play) = 
\begin{cases}
\vect{x}(\state_t) & \text{if } \forall i \in \{0, \ldots, t\} \centerdot \state_i \notin \states \setminus I \\
1 & \text{else}
\end{cases}
\]
for all $\play \in \plays(\sg^{\scheduler_1, \scheduler_2}, \init)$ and $t \in \naturals$. For all $t \in \naturals$ we have:
\begin{itemize}
\item $\expectation{}{\card{Y_t}} < \infty$ 
\item $\expectation{}{Y_{t+1} \mid X_0, \ldots, X_t} \leq Y_t$ (holds almost surely)
\end{itemize}
To see that the latter holds, let $t \in \naturals$ and $\play = \state_0 \action_0 \state_1 \ldots \in \plays(\sg^{\scheduler_1, \scheduler_2}, \init)$ be arbitrarily chosen. We then make the following case distinction.
\begin{description}
\item[Case 1: $\state_i \notin \states \setminus I$ for all $0 \leq i \leq t$.]
\begin{align*}
&\phantom{{}={}}\expectation{}{Y_{t+1} \mid X_0, \ldots, X_t}(\play) \\
&= 
\begin{cases}
\sum_{\action \in \actions(\state_t)} (\scheduler_1(\state_0 \ldots \state_t)(\action) \cdot \sum_{\state' \in \states} \transRel(\state_t, \action, \state') \cdot \vect{x}(\state')) & \text{if } \state_t \in \statesOne\\
\sum_{\action \in \actions(\state_t)} (\scheduler_2(\state_0 \ldots \state_t)(\action) \cdot \sum_{\state' \in \states} \transRel(\state_t, \action, \state') \cdot \vect{x}(\state')) & \text{if } \state_t \in \statesTwo
\end{cases} \\
&\leq \vect{x}(\state_t) = Y_t(\play)
\end{align*}
\item[Case 2: $\state_i \in \states \setminus I$ for some $0 \leq i \leq t$.]
\begin{align*}
\phantom{{}={}}\expectation{}{Y_{t+1} \mid X_0, \ldots, X_t}(\play)
&= 1 = Y_t(\play)
\end{align*}
\end{description}
Thus, $Y$ is a \emph{supermartingale} w.r.t. $X$. By \cite[Theorem 5]{zikelic_rasm_2023} we have that
\begin{align*}
 \expectation{}{Y_0} \geq \prob\left(\sup_{t \geq 0} Y_t \geq 1\right) \geq \prob(\eventually (\states \setminus I)).
\end{align*}
By construction we have $\expectation{}{Y_0} = \vect{x}(\init) \leq 1 - \lambda$.
We can thus conclude that $\prob_{\sg, \init}^{\scheduler_1, \scheduler_2}(\globally I) \geq \lambda$. Since the choice of $\scheduler_2$ was arbitrary, we can conclude that $\inf_{\scheduler_2'} \prob_{\sg, \init}^{\scheduler_1, \scheduler_2'}(\globally I) \geq \lambda$. \qed
\end{proof}
\subsection{Linear Program for Synthesizing Permissive Certificates}\label{subsection:permissive-lp}
The linear program below is a heuristic approach for synthesizing permissive certificates as discussed in \Cref{subsection:permissive-certificates}.
\[
\begin{aligned}
	\max & \sum_{\state \in \states_1} \varepsilon_{\state} &&\\
	\text{s.t.}& \qquad \vect{x} \in [0,1]^{\states},\; \varepsilon_{\state} \in [0, 1]\; \forall \state \in \states_1 &&\\
	\vect{x}(\init) &\leq 1 - \lambda, && \\
	\vect{x}(\state) &\geq \sum_{\state' \in \states} \transRel(\state, \action, \state') \cdot \vect{x}(\state') && \forall \state \in I \intersection \statesTwo, \forall \action \in \actions(\state),\\
	\vect{x}(\state) &= 1 && \forall \state \in \states \setminus I, \\
	\vect{x}(\state) &\geq \sum_{\action \in \actions(\state)} \sum_{\state' \in \states} \scheduler_1(\state)(\action) {\cdot} \transRel(\state, \action, \state') {\cdot} \vect{x}(\state') && \forall \state \in I \intersection \states_1, \\
	\textstyle\vect{x}(\state) &\geq \left( \sum_{\state' \in \states} \transRel(\state, \action, \state') \cdot \vect{x}(\state') \right) - (1 - \varepsilon_{\state}) && \forall \state \in \states_1,\; \forall \action \in \actions(\state). &&
\end{aligned}
\]
A natural variation of the linear program is to instead consider the objective $\sum_{\state \in \states_1'} \varepsilon_{\state}$ where $\states_1' \subseteq \states_1$ is a designated set of states where permissiveness (or flexibility) is needed the most.

%% file: sections/appendix-stochastic-template.tex
We first provide a proof for the soundness of our \ASCerts (\Cref{theorem:soundness}) and for \Cref{lemma:prefix-invariant} that enables the synthesis of \ASCerts. Before we prove the relative completeness result, we prove the correctness of our characterization of compliant strategies (\Cref{theorem:correctness-compliant-strat}) and show that it is always possible to extract a strategy from our \ASCerts when following the synthesis workflow from \Cref{sec:synth1}. Finally, we prove the relative completeness of \ASCerts w.r.t. globally optimal strategies (\Cref{theorem:relative-complete}).

\subsection{Soundness of \ASCerts}
\stochasticTemplateSoundness*
\begin{proof}
	Since $\scheduler_1$ follows $\vect{x}$ we know that
	\[
	\inf_{\scheduler_2 \in \schedulersTwo}\prob_{\sg, \init}^{\scheduler_1, \scheduler_2}(\globally I) \geq \lambda.
	\]
	Further, we know that the subgame $\subgame{I}{\vect{x}}$ preserves all state-action pairs that can be chosen by a strategy following $\vect{x}$. Hence, we can apply $\scheduler_1$ to the subgame $\subgame{I}{\vect{x}}$ and have that
	\[
	\sup_{\scheduler_2 \in \schedulersTwo}\prob_{\subgame{I}{\vect{x}}, \init}^{\scheduler_1, \scheduler_2}(\eventually \exit) \leq 1 - \lambda.
	\]
	Since $\scheduler_1$ follows $\Lambda$ almost surely, it also does so in the subgame $\sg[I,\vect{x}]$. By \cite[Theorem 6]{phalakarn_winning_2024} we have
	\[
	\inf_{\scheduler_2 \in \schedulersTwo}\prob_{\subgame{I}{\vect{x}}, \init}^{\scheduler_1, \scheduler_2}(\Phi \lor \eventually \bot) = 1.
	\]
	Thus, we can conclude that
	\[
	\inf_{\scheduler_2 \in \schedulersTwo}\prob_{\subgame{I}{\vect{x}}, \init}^{\scheduler_1, \scheduler_2}(\Phi) \geq \lambda.
	\]
	Lastly, every play in $\subgame{I}{\vect{x}}$ that reaches $\exit$ cannot satisfy the parity objective $\Phi$. On the other hand, all plays remaining in $I$ are also plays in $\sg$. Hence we have that $\inf_{\scheduler_2 \in \schedulers_2} \prob_{\subgame{I}{\vect{x}}, \init}^{\scheduler_1, \scheduler_2}(\Phi) \leq \inf_{\scheduler_2 \in \schedulers_2} \prob_{\sg, \init}^{\scheduler_1, \scheduler_2}(\Phi)$.
	\qed
\end{proof}

\prefixInvariant*
\begin{proof}
	Let $\state \in \states$ be arbitrarily chosen. We show $\inf_{\scheduler_2 \in \schedulersTwo}\prob_{\sg, \state}^{\scheduler_1, \scheduler_2}\bigl( \Phi \lor \eventually (\states \setminus I) \bigr) = 1$ first. Let $\scheduler_2$ be an arbitrary Player 2 strategy. By law of total probability:
	\begin{align*}
		\prob_{\sg, \state}^{\scheduler_1, \scheduler_2}\bigl( \Phi \lor \eventually (\states \setminus I) \bigr) &= \prob_{\sg, \state}^{\scheduler_1, \scheduler_2}\bigl( \Phi \lor \eventually (\states \setminus I) \mid \neg \globally I \bigr) \cdot \prob_{\sg,\state}^{\scheduler_1, \scheduler_2}(\neg \globally I) {}+{} \\
		&{}\phantom{{}={}} \prob_{\sg, \state}^{\scheduler_1, \scheduler_2}\bigl( \Phi \lor \eventually (\states \setminus I) \mid \globally I \bigr) \cdot \prob_{\sg,\state}^{\scheduler_1, \scheduler_2}(\globally I)\\
		& = \prob_{\sg,\state}^{\scheduler_1, \scheduler_2}(\neg \globally I) + \prob_{\sg, \state}^{\scheduler_1, \scheduler_2}\bigl( \Phi \mid \globally I \bigr) \cdot \prob_{\sg,\state}^{\scheduler_1, \scheduler_2}(\globally I)
	\end{align*}
	Recall that $I = \{\state \in \states \mid \inf_{\scheduler_2 \in \schedulersTwo} \prob_{\sg, \state}^{\scheduler_1, \scheduler_2}(\Phi) > 0\}$. Thus, there exists $\gamma > 0$ s.t. each time a state $\state \in I$ is visited, there is a probability $\geq \gamma$ to satisfy $\Phi$. Hence the probability to never satisfy $\Phi$ when staying in $I$ converges to zero and we have:
	\[
	\prob_{\sg, \state}^{\scheduler_1, \scheduler_2}\bigl( \Phi \mid \globally I \bigr) = 1.
	\]
	In total this gives us $\prob_{\sg, \state}^{\scheduler_1, \scheduler_2}\bigl( \Phi \lor \eventually (\states \setminus I) \bigr) = \prob_{\sg,\state}^{\scheduler_1, \scheduler_2}(\neg \globally I) + \prob_{\sg,\state}^{\scheduler_1, \scheduler_2}(\globally I) = 1$.
	
	We now prove $\inf_{\scheduler_2 \in \schedulersTwo}\prob_{\sg, \state}^{\scheduler_1, \scheduler_2}(\globally I) = \inf_{\scheduler_2 \in \schedulersTwo}\prob_{\sg, \state}^{\scheduler_1, \scheduler_2}(\Phi)$.
	Since $\Phi$ is a parity objective, its complement $\neg \Phi$ is also a parity objective.
	By definition, for all $\state \in \states \setminus I$ we have $\inf_{\scheduler_2 \in \schedulersTwo} \prob_{\sg, \state}^{\scheduler_1, \scheduler_2}(\Phi) = 0$, i.e. $\sup_{\scheduler_2 \in \schedulersTwo} \prob_{\sg, \state}^{\scheduler_1, \scheduler_2}(\neg \Phi) = 1$. Hence, for all $\state \in \states$ we have:
	\begin{align*}
		\inf_{\scheduler_2 \in \schedulersTwo}\prob_{\sg, \state}^{\scheduler_1, \scheduler_2}(\globally I) 
		&= 1 - \sup_{\scheduler_2 \in \schedulersTwo} \prob_{\sg, \state}^{\scheduler_1, \scheduler_2}(\eventually (\states \setminus I)) \\
		& \geq 1 - \sup_{\scheduler_2 \in \schedulersTwo} \prob_{\sg, \state}^{\scheduler_1, \scheduler_2}(\neg \Phi) \\
		& = \inf_{\scheduler_2 \in \schedulersTwo} \prob_{\sg, \state}^{\scheduler_1, \scheduler_2}(\Phi)
	\end{align*}
	Equality follows from $\inf_{\scheduler_2 \in \schedulersTwo}\prob_{\sg, \state}^{\scheduler_1, \scheduler_2}\bigl( \Phi \lor \eventually (\states \setminus I) \bigr) = 1$, which is equivalent to:
	\[
	\sup_{\scheduler_2 \in \schedulersTwo}\prob_{\sg, \state}^{\scheduler_1, \scheduler_2}\bigl( \neg \Phi \land \globally I \bigr) = 0.
	\] \qed
\end{proof}

\subsection{Compliant Strategies}
\almostTemplateSoundness*
\begin{proof}
	By assumption $\scheduler_1$ satisfies conditions \eqref{eq:cert-follow}--\eqref{eq:live-sum}. It is clear that $\scheduler_1$ follows $\vect{x}$ and we only need to prove that $\inf_{\scheduler_2 \in \schedulers_2} \prob_{\sg, \state}^{\scheduler_1, \scheduler_2}(\Lambda) = 1$ for every state $\state \in \states$. Let $\state \in \states$ be arbitrarily chosen. According to \eqref{eq:unsafe-sum}, $\scheduler_1$ never plays an unsafe action and we have
	\begin{align}
		\inf_{\scheduler_2 \in \schedulers_2} \prob_{\sg, \state}^{\scheduler_1, \scheduler_2}(\safetyTemplate(U)) = 1. 
	\end{align}
	Now suppose there exists $\scheduler_2$ such that $\prob_{\sg, \state}^{\scheduler_1, \scheduler_2}(\liveTemplate(\mathcal{H})) < 1$. Hence there exists $H \in \mathcal{H}$ such that $\prob_{\sg, \state}^{\scheduler_1, \scheduler_2}(\globally \eventually \src{H} \land \eventually \globally \neg H) > 0$. Let us define the indicator variable $X_n \colon \SA^\omega \to \{0, 1\}$ to be $1$ iff $\play_n \in H$, for all $n \geq 0$. Then, for all $n \geq 0$ and plays $\play$ compliant with $(\scheduler_1, \scheduler_2)$ we have 
	\[
	\prob_{\sg, \state}^{\scheduler_1, \scheduler_2}(X_n = 1 \mid \play_{\leq n}) = \scheduler_1(\play_{\leq n})(H).
	\]
	By \eqref{eq:live-sum}, we have that for any compliant play $\play$ with $\play \models \globally \eventually \src{H}$ we have
	\[
	\sum_{n \geq 0} \prob_{\sg, \state}^{\scheduler_1, \scheduler_2}(X_n = 1 \mid \play_{\leq n}) = \infty
	\]
	and it follows that
	\[\prob_{\sg, \state}^{\scheduler_1, \scheduler_2}\left(\globally \eventually \src{H} \implies \sum_{n \geq 0} \prob_{\sg, \state}^{\scheduler_1, \scheduler_2}(X_n = 1 \mid \play_{\leq n}) = \infty\right) = 1.
	\]
	By the \emph{Second Borel-Cantelli} lemma (see, e.g., \cite[Theorem~4.3.4]{Durret2019}), we can conclude that
	\begin{align*}
	\prob_{\sg, \state}^{\scheduler_1, \scheduler_2}\left(\globally \eventually \src{H} \implies \sum_{n\geq0} X_n = \infty \right) = 1.
	\end{align*}
	However, this contradicts our assumption that $\prob_{\sg, \state}^{\scheduler_1, \scheduler_2}(\globally \eventually \src{H} \land \eventually \globally \neg H) > 0$. Thus such $H$ cannot exist and in total we have:
	\begin{align}
		\inf_{\scheduler_2 \in \schedulers_2} \prob_{\sg, \state}^{\scheduler_1, \scheduler_2}(\liveTemplate(\mathcal{H})) = 1. 
	\end{align}
	Lastly, by \eqref{eq:colive-sum}, for all plays $\play \in \plays^{\scheduler_1}$ we have
	\begin{align}
		\sum_{\{n \mid \play_{=n} \in \src{D}\}} \scheduler_1(\play_{\leq n})(D) < \infty. \label{eq:proof-colive}
	\end{align}
	Let $\scheduler_2 \in \schedulers_2$ be arbitrarily chosen. Further, let us define the indicator variable $Y_n \colon \SA^\omega \to \{0, 1\}$ to be $1$ iff $\play_n \in D$, for all $n \geq 0$. Then, for all plays $\play$ that are compliant with $(\scheduler_1, \scheduler_2)$ and $n \geq 0$ we have
	\begin{align}
		\prob_{\sg, \state}^{\scheduler_1, \scheduler_2}(Y_n = 1 \mid \play_{\leq n}) = \scheduler_1(\play_{\leq n})(D). \label{eq:proof-colive-strat}
	\end{align}
	Combining \eqref{eq:proof-colive} and \eqref{eq:proof-colive-strat}, we have that for all plays $\play$ compliant with $(\scheduler_1, \scheduler_2)$:
	\[
	\sum_{n \geq 0} \prob_{\sg,\state}^{\scheduler_1,\scheduler_2}(Y_n = 1 \mid \play_{\leq n})< \infty.
	\]
	In particular, we thus have:
	\[
	\prob_{\sg, \state}^{\scheduler_1, \scheduler_2} \left( \sum_{n \geq 0} \prob_{\sg,\state}^{\scheduler_1,\scheduler_2}(Y_n = 1 \mid \play_{\leq n}) < \infty \right) = 1.
	\]
	Using the \emph{Second Borel-Cantelli} lemma \cite[Theorem~4.3.4]{Durret2019} we can conclude that
	\[
	1 = \prob_{\sg, \state}^{\scheduler_1, \scheduler_2} \left( \sum_{n \geq 0} Y_n < \infty \right) = \prob_{\sg, \state}^{\scheduler_1, \scheduler_2}(\coliveTemplate(D)).
	\]
	Since the choice of $\scheduler_2$ was arbitrary, we can conclude that:
	\[
	\inf_{\scheduler_2 \in \schedulers_2} \prob_{\sg, \state}^{\scheduler_1, \scheduler_2}(\coliveTemplate(D)) = 1. 
	\]
	Since $\state$ was arbitrary chosen, we can conclude that $\inf_{\scheduler_2 \in \schedulers_2} \prob_{\sg, \state}^{\scheduler_1, \scheduler_2}(\Lambda) = 1$ for all $\state \in \states$.
	Hence, $\scheduler_1$ follows $T$ and by \Cref{theorem:soundness} we get $\inf_{\scheduler_2 \in \schedulersTwo} \prob_{\sg, \init}^{\scheduler_1, \scheduler_2}(\Phi) \geq \lambda$.\qed
\end{proof}

\begin{lemma}\label{lemma:colive-state-action-pairs-cert}
	Let $\vect{x} \in \certs_{\geq \lambda}(\sg, I)$ be a certificate. Let $t \in \states_1$ and $b \in \actions(t)$ such that
	\[
	\vect{x}(t) > \sum_{\state' \in \states} \transRel(t, b, \state') \cdot \vect{x}(\state').
	\]
	If $\scheduler_1$ is a $\pOne$ strategy that follows $\vect{x}$, then $\sup_{\scheduler_2 \in \schedulersTwo}\prob_{\sg, \state}^{\scheduler_1, \scheduler_2}\left(\globally \eventually (t, b) \right) = 0$ for every $\state \in \states$.
\end{lemma}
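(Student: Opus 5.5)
The plan is to reuse the supermartingale from the proof of \Cref{theorem:cert-invariant} and make its ``expected decrease'' quantitative via the Doob decomposition. Then a conditional Borel--Cantelli argument, as in the proof of \Cref{theorem:correctness-compliant-strat}, turns this into a bound on how often $(t,b)$ is played. Throughout I assume $t \in I \intersection \statesOne$. For $t \notin I$ we have $\vect{x}(t)=1$, and \eqref{eq:follow} places no constraint on $\scheduler_1$ outside $I$, so the statement can only be meant for $t \in I$, which is also how it is used for the subgame. Fix an arbitrary start state $\state$ and an arbitrary $\scheduler_2$, and let $\gamma \coloneqq \vect{x}(t) - \sum_{\state'} \transRel(t,b,\state')\vect{x}(\state') > 0$.

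\textbf{Step 1 (supermartingale).} Define $Y_n$ exactly as in the proof of \Cref{theorem:cert-invariant}, starting from $\state$ instead of $\init$: $Y_n = \vect{x}(\state_n)$ while the play has stayed in $I$, and $Y_n = 1$ afterwards. The case analysis there uses only \eqref{eq:opponent}, \eqref{eq:losing-states} and \eqref{eq:follow}, so $Y$ is a supermartingale with values in $[0,1]$. Now consider a prefix $\play_{\leq n}$ ending in $t$ that has stayed in $I$. The one-step conditional expectation is a convex combination over actions, and the summand for $b$ falls short of $\vect{x}(t)$ by $\gamma$. Every other action $a$ is only constrained in aggregate, so I will not bound its summand separately. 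Instead, \eqref{eq:follow} itself gives
\[
Y_n - \expectation{}{Y_{n+1} \mid \play_{\leq n}} \;\geq\; \scheduler_1(\play_{\leq n})(b)\cdot\gamma \;+\; \sum_{a \neq b} \scheduler_1(\play_{\leq n})(a)\Bigl(\vect{x}(t) - \textstyle\sum_{\state'}\transRel(t,a,\state')\vect{x}(\state')\Bigr).
\]
The second sum may have negative terms, which is the crux. I expect this to be the main obstacle: \eqref{eq:follow} allows other actions to increase $\vect{x}$ in expectation, as long as the total still does not increase. So the drop attributable to $b$ alone cannot be isolated.

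\textbf{Step 2 (Doob decomposition).} Write $Y_n = M_n - A_n$, where $A_{n+1} - A_n = Y_n - \expectation{}{Y_{n+1}\mid \mathcal{F}_n} \geq 0$ is predictable. Because $0 \le Y_n \le 1$, we get $\expectation{}{A_n} = \expectation{}{Y_0} - \expectation{}{Y_n} \leq 1$ for all $n$. Monotone convergence then gives $\expectation{}{A_\infty} \leq 1$, so $A_\infty < \infty$ almost surely. Steps 1 and 2 together only yield a summable bound for the combination $\scheduler_1(\play_{\leq n})(b)\gamma + (\text{other terms})$, not for $\scheduler_1(\play_{\leq n})(b)$ alone. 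To separate them, the first thing I would try is to show that if $b$ gets non-summable mass, then so does a strict drop. Failing that, I would accept an additional hypothesis: that the remaining actions are themselves non-increasing, e.g.\ when the lemma is applied to strategies that are pure or memoryless at $t$.

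\textbf{Step 3 (Borel--Cantelli).} Suppose that, along almost every play staying in $I$, we have $\sum_{n \,:\, \play_{=n} = t} \scheduler_1(\play_{\leq n})(b) \leq A_\infty/\gamma < \infty$. Let $Z_n$ be the indicator that the $n$-th pair of the play is $(t,b)$. Then $\prob(Z_n = 1 \mid \play_{\leq n}) = \scheduler_1(\play_{\leq n})(b)$ whenever $\play_{=n} = t$, and these conditional probabilities are almost surely summable. The conditional (L\'evy) Borel--Cantelli lemma \cite[Theorem~4.3.4]{Durret2019} gives $\sum_n Z_n < \infty$ almost surely on $\globally I$. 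On the complement $\eventually(\states\setminus I)$, I would use that the lemma is applied inside the subgame, where such plays are absorbed in $\exit$, to rule out infinitely many visits to $t$. Since $\scheduler_2$ was arbitrary, taking the supremum gives the claim.
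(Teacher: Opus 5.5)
Your proposal has a genuine gap, and it sits at the obstacle you identified yourself. Neither of your proposed ways around it works. Step~3 rests on the bound $\sum_{n:\,\play_{=n}=t}\scheduler_1(\play_{\leq n})(b)\leq A_\infty/\gamma$, and this bound is false in general.

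A counterexample: take $t\in I\cap\states_1$ with $\vect{x}(t)=\tfrac12$. Let $b$ lead surely to an absorbing state of value $0$, let $a$ lead surely to an absorbing state of value $1$, and set $\scheduler_1(t)=\tfrac12 a+\tfrac12 b$. Started in $t$, this strategy follows $\vect{x}$ with zero drift everywhere, so $A\equiv 0$, yet $b$ receives mass $\tfrac12$. The lemma's conclusion holds here only because $t$ is not recurrent, and drift bookkeeping cannot see recurrence. The implication you hope for (``non-summable mass on $b$ forces a non-summable drop'') is true only vacuously, because its antecedent has probability $0$, which is the lemma itself.

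Your fallback hypothesis changes the statement. \Cref{lemma:conflict-free} needs the lemma for every strategy following $\vect{x}$. Moreover, ``memoryless at $t$'' would not even give that hypothesis, since the mixture above is memoryless. Finally, the appeal to the subgame in Step~3 is not available. The lemma is about $\sg$, where plays may leave $I$ and return to $t$, and your stopped process $Y$ is frozen at $1$ on such plays.

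The missing idea is to use almost-sure convergence rather than the compensator.
\begin{itemize}
\item Because $\vect{x}\in[0,1]^{\states}$ and $\vect{x}=1$ on $\states\setminus I$, the \emph{unstopped} process $\vect{x}(\state_n)$ is already a $[0,1]$-valued supermartingale under every $\scheduler_2$: outside $I$ the value $1$ dominates any average.
\item Hence $\vect{x}(\state_n)$ converges almost surely, and on $\globally\eventually(t,b)$ its limit must be $\vect{x}(t)$.
\item The strict inequality gives a set $L$ of $(t,b)$-successors with $c=\transRel(t,b,L)>0$ and $\vect{x}\leq\vect{x}(t)-d$ on $L$, for some $d>0$.
\item Conditioning on the history including the chosen action, every occurrence of $(t,b)$ is followed by a step into $L$ with probability $c$. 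By the conditional Borel--Cantelli lemma \cite[Theorem~4.3.4]{Durret2019}, almost surely on $\globally\eventually(t,b)$ the play enters $L$ infinitely often.
\item Then $\vect{x}(\state_n)$ oscillates by at least $d$ infinitely often, contradicting convergence.
\end{itemize}
This argument covers randomized and infinite-memory strategies and plays that re-enter $I$. It also covers $t\notin I$, so your restriction to $t\in I$ is unnecessary, and your reason for it is mistaken.

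For comparison, the paper first reduces to a memoryless strategy, without detailed justification. It then argues in the induced MDP that $\vect{x}$ is superharmonic on every end component. So the minimum-value states of an end component are closed, and strong connectivity forces $\vect{x}$ to be constant there, contradicting the strict inequality at $(t,b)$. The convergence argument avoids that reduction entirely.
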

\begin{proof}
	Suppose $\scheduler_1'$ follows $\vect{x}$ and, for the sake of contradiction, assume that $\sup_{\scheduler_2 \in \schedulersTwo} \prob_{\sg, \state}^{\scheduler_1', \scheduler_2}(\globally \eventually (t,b)) > 0$ for some $\state \in \states$. Then, there also exists a memoryless strategy $\scheduler_1$ that follows $\vect{x}$ such that $\sup_{\scheduler_2 \in \schedulersTwo} \prob_{\sg, \state}^{\scheduler_1, \scheduler_2}(\globally \eventually (t,b)) > 0$. Let us consider the induced MDP $\sg^{\scheduler_1} = (\states, \init, \actions', \transMat)$ where $\actions' = \actions \union \{\tau\}$ and
	\begin{align*}
		\transMat(\state, \action, \state') =
		\begin{cases}
			\transRel(\state, \action, \state') & \text{if } \state \in \states_2, \action \in \actions(\state)\\
			\sum_{\action \in \actions(\state)} \scheduler_1(\state)(\action) \cdot \transRel(\state, \action, \state') & \text{if } \state \in \states_1, \action = \tau, \\
			0 & \text{else.}
		\end{cases}
	\end{align*}
	By construction of $\sg^{\scheduler_1}$, we have for all $\state \in \states$ and $\action \in \actions'(\state)$:
	\begin{align}
		\vect{x}(\state) \geq \sum_{\state' \in \states} \transMat(\state, \action, \state') \cdot \vect{x}(\state'). \label{eq:strict-proof-1}
	\end{align}
	Since, by assumption, $(t, b)$ is chosen infinitely often, it has to be part of an end component \cite{Alfaro97} (also see, e.g., \cite[Def. 10.117]{baier_principles_2008}) of $\sg^{\scheduler_1}$, that is, a set of states $T$ and function $A \colon T \to 2^{\actions'}$ such that the induced sub-MDP is strongly connected. Then, for all $\state \in T$ and $\action \in A(s)$, we have
	\begin{align}
		\vect{x}(\state) \geq \sum_{\state' \in \states} \transMat(\state, \action, \state') \cdot \vect{x}(\state') \overset{(\dag)}{=} \sum_{\state' \in T} \transMat(\state, \action, \state') \cdot \vect{x}(\state'). \label{eq:strict-proof-5}
	\end{align}
	Above, $(\dag)$ follows from the fact that $(T, A)$ is closed under $\transMat(\state, \action, \cdot)$. Recall that by assumption on $t$ and $b$, we have
	\[
	\vect{x}(t) > \sum_{\state' \in T} \transRel(t, b, \state' ) \cdot \vect{x}(\state') \geq \min_{\state' \in T} \vect{x}(\state').
	\]
	Let us define $\gamma = \min_{\state' \in T} \vect{x}(\state')$. Then, for all states $\state \in T$ with $\vect{x}(\state) = \gamma$ there does not exist an action $\action \in A(\state)$, and $\state' \in T$ such that $\transMat(\state, \action, \state') > 0$ and $\vect{x}(\state') > \gamma$, as otherwise \cref{eq:strict-proof-5} would be violated. However, this contradicts the fact that $(T,A)$ is an EC and strongly connected. Thus $(t, b)$ is not part of an EC in $\sg^{\scheduler_1}$ and the probability of choosing $(t, b)$ infinitely often in $\sg^{\scheduler_1}$ is zero (see, e.g., \cite[Theorem 10.120]{baier_principles_2008}). Altogether, this implies that $\prob_{\sg, \state}^{\scheduler_1, \scheduler_2}(\globally \eventually (t, b)) = 0$ for all $\pTwo$ strategies $\scheduler_2$, contradicting the choice of $\scheduler_1$.
	\qed
\end{proof}
\conflictFree*
\begin{proof}
	By \Cref{lemma:prefix-indep-invariants}, all states in the subgame $\subgame{I}{\vect{x}}$ are almost-sure winning for the objective $\Phi \lor \eventually \bot$. We recall that $\almostParityTemplate(\subgame{I}{\vect{x}}, \Phi \lor \eventually \bot)$ computes an almost surely winning strategy template by reducing the stochastic parity game $(\subgame{I}{\vect{x}}, \Phi \lor \eventually \bot)$ to a non-stochastic parity game via the gadgets from \cite{chatterjee_quantitative_games_2004} and then uses the techniques from \cite{anand_permissive_templates_2023}. Let $\template \coloneqq \safetyTemplate(U) \land \coliveTemplate(D) \land \liveTemplate(\mathcal{H})$ where
	\begin{align}
		\safetyTemplate(U) &\textstyle\coloneqq \globally \bigwedge_{(\state, \action) \in U} \neg (\state, \action),\\
		\coliveTemplate(D) &\textstyle\coloneqq \bigwedge_{(\state, \action) \in D} \eventually \globally \neg (\state, \action),\text{ and}\\
		\liveTemplate(\mathcal{H}) &\textstyle\coloneqq \Conj_{H \in \mathcal{H}} \globally \eventually \src{H} \Rightarrow \globally \eventually H.
	\end{align}
	Since all states in the subgame are almost sure winning, we have that $U = \emptyset$. W.l.o.g., we may assume that $D$ only contains state-action pairs that can actually be chosen infinitely often by a strategy $\scheduler_1$ that follows $\vect{x}$. Specifically, for all $\state \in \states_1$ and $\action \in \actions(\state)$, we may assume that if
	\[
	\vect{x}(\state) > \sum_{\state' \in \states} \transRel(\state, \action, \state') \cdot \vect{x}(\state')
	\]
	then $(\state, \action) \notin D$, as by \Cref{lemma:colive-state-action-pairs-cert} for such state-action pairs we have
	\[\prob_{\sg, \init}^{\scheduler_1, \scheduler_2}(\globally \eventually (\state, \action)) = 0
	\]
	for any strategy profile $(\scheduler_1, \scheduler_2)$ where $\scheduler_1$ follows $\vect{x}$.
	
	We can then construct a memoryless strategy $\scheduler_1$ as described in \Cref{sec:extraction}. Observe that in each state $\state$ it is possible to choose an action distribution $\mu$, since by construction there exists a strategy for $\pOne$ that follows $\vect{x}$. Further, by \cite[Proposition 1]{anand_permissive_templates_2023} there always exists an action in $\state$ that is not co-live (or unsafe), i.e., the set of allowed (and live) actions $A \union A_\mathcal{H} \subseteq \actions(\state)$ is non-empty. Then for each state $\state \in \states_1$, we choose $\mu$ to be a distribution with support in $A \union A_\mathcal{H}$ that satisfies $\mu(A_\mathcal{H}) > 0$ if $A_\mathcal{H} \neq \emptyset$. If we have
	\[
	\vect{x}(\state) \geq \sum_{\action \in \actions(\state)} \sum_{\state' \in \states} \mu(\action) \cdot \transRel(\state, \action, \state') \cdot \vect{x}(\state')
	\]
	then we are done and set $\scheduler_1(\state) = \mu$. Suppose we have
	\[
	\vect{x}(\state) < \sum_{\action \in \actions(\state)} \sum_{\state' \in \states} \mu(\action) \cdot \transRel(\state, \action, \state') \cdot \vect{x}(\state')
	\]
	By construction of the subgame, every action included in the subgame can be played with positive probability (cf. \cref{eq:admissible-distr}). Thus, there exists $\action^* \in \actions(\state)$ such that
	\[
	\vect{x}(\state) > \sum_{\state' \in \states} \delta(\state, \action^*, \state') \cdot \vect{x}(\state').
	\]
	As discussed above, for such state-action pair $(\state, \action^*)$, we have $(\state, \action^*) \notin D$ and thus $\action^* \in A \union A_\mathcal{H}$. Let $\varepsilon > 0$ such that
	\[
	\vect{x}(\state) = \varepsilon + \sum_{\state' \in \states} \delta(\state, \action^*, \state') \cdot \vect{x}(\state'),
	\]
	and define $\gamma = \varepsilon / (\sum_{\action \in \actions(\state)} \sum_{\state' \in \states} \mu(\action) \cdot \transRel(\state, \action, \state') \cdot \vect{x}(\state'))$.
	We then define the distribution $\mu' = \vect{1}_{a^*} \cdot (1 - \gamma) + \mu \cdot \gamma$, for which the following holds:
	\begin{align*}
		&\phantom{{}={}}\sum_{\action \in \actions(\state)} \sum_{\state' \in \states} \mu'(\action) \cdot \transRel(\state, \action, \state') \cdot \vect{x}(\state') \\
		&= \sum_{\action \in \actions(\state)} \sum_{\state' \in \states} (\vect{1}_{\action^*}(\action) \cdot (1-\gamma) + \mu(\action) \cdot \gamma) \cdot \transRel(\state, \action, \state') \cdot \vect{x}(\state')\\
		&=  \underbrace{\left((1-\gamma) \cdot \sum_{\state' \in \states} \transRel(\state, \action^*, \state') \cdot \vect{x}(\state') \right)}_{\leq \vect{x}(\state) - \varepsilon} + \underbrace{\left( \gamma \cdot \sum_{\action \in \actions(\state)} \sum_{\state' \in \states} \mu(\action)\cdot \transRel(\state, \action, \state') \cdot \vect{x}(\state') \right)}_{= \varepsilon} \\
		&\leq \vect{x}(\state)
	\end{align*}
	We can then set $\scheduler_1(\state) = \mu'$. By construction $\scheduler_1$ is then compliant with $T$ since it follows $\vect{x}$, never chooses a co-live (or unsafe) action and always plays live-actions with a (constant) positive probability.
	\qed
\end{proof}

\subsection{Relative Completeness}
\relativeCompleteness*
\begin{proof}
	\noindent We prove the first statement first. Suppose $\scheduler_1^*$ follows $T^*$. Then, we have that
	\[
	\inf_{\scheduler_2 \in \schedulers_2} \prob_{\sg, \init}^{\scheduler_1^*, \scheduler_2}(\Phi) \geq 1 - \vect{x}^*(\init) = \valueVector_{\sg, \Phi}(\init).
	\]
	Observe that the strategy template $\template$ from \cite{anand_permissive_templates_2023,phalakarn_winning_2024} \emph{does not} depend on the initial state $\init$. Hence, we can slightly generalize our results and obtain for all $\state \in \states$:
	\[
	\inf_{\scheduler_2 \in \schedulers_2} \prob_{\sg, \state}^{\scheduler_1^*, \scheduler_2}(\Phi) \geq 1 - \vect{x}^*(\state) = \valueVector_{\sg, \Phi}(\state).
	\]
	Hence, $\scheduler_1^*$ is a globally optimal strategy.
	
	\medskip
	
	\noindent Let us now prove the second statement. Assume that $\scheduler_1^*$ is a memoryless globally optimal strategy for $(\sg, \Phi)$. We need to show that $\scheduler_1^*$ follows $\vect{x}^*$ w.r.t. $I^*$. From \cite[Proposition 2.1]{chatterjee_quantitative_games_2004} we have that
	\begin{align}
		&\valueVector_{\sg, \Phi}(\state) \geq \sum_{\state' \in \states} \transRel(\state, \action, \state') \cdot \valueVector_{\sg, \Phi}(\state') & \forall \state \in \states_1,\; \forall \action \in \actions(\state),\\
		&\valueVector_{\sg, \Phi}(\state) \leq \sum_{\state' \in \states} \transRel(\state, \action, \state') \cdot \valueVector_{\sg, \Phi}(\state') & \forall \state \in \states_2,\; \forall \action \in \actions(\state).
	\end{align}
	This implies that for all $\state \in I^* \intersection \states_2$ and $\action \in \actions(\state)$ we have
	\begin{align*}
		\vect{x}^*(\state) = 1 - \valueVector_{\sg, \Phi}(\state) &\geq 1 - \sum_{\state' \in \states} \transRel(\state, \action, \state') \cdot \valueVector_{\sg, \Phi}(\state')\\
		&= \sum_{\state' \in \states} \transRel(\state, \action, \state') \cdot (1 - \valueVector_{\sg, \Phi}(\state'))\\
		&= \sum_{\state' \in \states} \transRel(\state, \action, \state') \cdot \vect{x}^*(\state').
	\end{align*}
	Hence \eqref{eq:opponent} is satisfied. Now suppose that \eqref{eq:strategy} is not satisfied. Then there exists a state $\state \in I^* \intersection \states_1$ such that
	\[
	\vect{x}^*(\state) < \sum_{\action \in \actions(\state)} \sum_{\state' \in \states} \scheduler_1^*(\state)(\action) \cdot \transRel(\state, \action, \state') \cdot \vect{x}^*(\state').
	\]
	This implies the following:
	\[
	\valueVector_{\sg, \Phi}(\state) > \sum_{\action \in \actions(\state)} \sum_{\state' \in \states} \scheduler_1^*(\state)(\action) \cdot \transRel(\state, \action, \state') \cdot \valueVector_{\sg, \Phi}(\state').
	\]
	Since $\Phi$ is a parity objective and thus prefix-independent, we get that
	\[
	\inf_{\scheduler_2 \in \schedulers_2}\prob_{\sg, \state}^{\scheduler_1^*, \scheduler_2}(\Phi) \leq \sum_{\action \in \actions(\state)} \sum_{\state' \in \states} \scheduler_1^*(\state)(\action) \cdot \transRel(\state, \action, \state') \cdot \valueVector_{\sg, \Phi}(\state')
	\]
	However, in total, this implies $\valueVector_{\sg, \Phi}(\state) > \inf_{\scheduler_2 \in \schedulers_2}\prob_{\sg, \state}^{\scheduler_1^*, \scheduler_2}(\Phi)$,
	contradicting the global optimality of $\scheduler_1^*$. Hence, \eqref{eq:strategy} is satisfied. Lastly, by definition of $I^*$ we have that for all $\state \in \states \setminus I^*$:
	\[
	\vect{x}^*(\state) = 1 - \valueVector_{\sg, \Phi}(\state) = 1.
	\]
	Hence $\scheduler_1^*$ follows $\vect{x}^*$. Finally, from \Cref{theorem:correctness-compliant-strat} and \Cref{lemma:conflict-free} we know that it is always possible to extract a strategy that follows $\stochasticTemplate^*$. From the first statement, we know that such a strategy is guaranteed to be globally optimal.
	\qed
\end{proof}

%% file: sections/appendix-experiments.tex
In this section we provide additional details concerning our implementation and experimental evaluation discussed in \Cref{section:experiments}. Our implementation and the considered models and properties are available at \url{https://doi.org/10.5281/zenodo.19929368} \cite{baier_2026_22012734}.

\subsection{Performance} \label{subsection:performance}
We now provide details concerning the performance of our proof-of-concept implementation. For the evaluation, we consider the following models (specified in the \textsc{Prism}-language):
\begin{itemize}
	\item \texttt{avoid} \cite{10.1007/978-3-030-53291-8_21} models a game in grid world, where an intruder tries to escape from an observer. We consider grid sizes $5 \times 5$, and $6 \times 6$. %
	\item \texttt{dice} \cite{kwiatkowska_prism_games_2020} represents a dice game between two players. We consider the model with $N \in \{10, 15, 20, 25\}$. %
	\item \texttt{hallway} \cite{10.1007/978-3-030-53291-8_21} models the interaction between a human and robot in a grid world. We consider the model with grid sizes $4 \times 4$ and $5 \times 5$.%
	\item \texttt{mdsm} \cite{ChenFKPS13} implements an energy management algorithm. We consider the model with $K \in \{8, 12, 16\}$. %
	\item \texttt{bigmec} \cite{KretinskyRSW22} is a scalable model that contains large maximal end components. We consider the model with $N \in \{5000, 10000, 15000\}$.
	\item \texttt{grid} \cite{BalsEKW24} is a grid world model. The original model from \cite{BalsEKW24} is an MDP. We turned the MDP into an SG by allowing the initial state to be chosen by the opponent player and consider grid sizes $16$, $32$, and $64$. 
\end{itemize}
Due to the prototypical nature of our implementation, the considered models are only of moderate size (between $257$ and $34645$ states).

\smallskip

\input{tables/reach.tex}
\input{tables/parity.tex}

\noindent\textbf{Reachability Objectives.} As an important special case, we first investigated the performance of our prototypical tool for reachability objectives. Specifically, we considered the models \texttt{avoid}, \texttt{dice}, \texttt{hallway}, and \texttt{mdsm} and reachability objectives that have been previously considered for these models \cite{kwiatkowska_prism_games_2020,10.1007/978-3-030-53291-8_21}. For \texttt{avoid} we consider the probability of eventually escaping. In the \texttt{dice} model we study the probability of a win for Player 1. In the \texttt{hallway} model the objective for the robot is to eventually reach the same location as the human. Lastly, in the \texttt{mdsm} model we consider the probability of reaching a specific configuration.

Our results are summarized in \Cref{tab:reachability}. For each case study we consider different configurations, resulting in different numbers of states $\card{\states}$ and state-action pairs $\card{\SA}$. The \emph{Strategy} column shows the time needed for \textsc{Prism-games} \cite{kwiatkowska_prism_games_2020} to synthesize an optimal strategy. The columns \emph{Invariant}, \emph{Certificate}, \emph{Template} show the times for computing the invariant, certificate and almost-sure strategy template, respectively. Finally, \emph{Total}\footnote{The sum of the runtimes of the individual steps may not exactly equal \emph{Total}, due to overhead not captured in the runtime of the individual steps.} describes the total time needed for synthesizing an \ASCert (excluding the time for synthesizing an initial strategy). It can be seen that \textsc{Prism-games} can compute a strategy within a few seconds. The runtimes for the synthesis of our \ASCerts range roughly from $3$ to $50$ seconds, where the computation of the certificate and template are the most time-consuming parts.

\smallskip

\noindent\textbf{Parity Objectives.} We considered parity objectives for the \texttt{bigmec}, \texttt{grid} and \texttt{hallway} models. For \texttt{bigmec} we hand-crafted two different parity objectives with $3$ and $5$ colors, respectively. For the \texttt{grid} model we took a property from \cite{BalsEKW24} and converted it to a parity objective with $3$ colors. In the \texttt{hallway} case study, two parity objectives with $3$ and $5$ colors are considered. The first property expresses that the human and robot share the same location infinitely often. The second property additionally expresses that both share the location infinitely often also when the robot is not damaged.

The results are summarized in \Cref{tab:parity-small} in the main body. For convenience, we have also included the table here in \Cref{tab:parity}. Recall that for parity objectives we use \textsc{Mungojerrie} to \emph{learn} a strategy. For the learning we needed to tune hyperparameters. In particular, the runtime of \textsc{Mungojerrie} depends on the choice of the hyperparameters. Thus, we omit the runtime for \textsc{Mungojerrie}. In the column \emph{\#Colors}, each value corresponds to one parity objective and indicates the number of colors of this parity objective. As before, the computation of certificates and templates is the most expensive part in the synthesis of our \ASCerts. We observe that the number of colors in the objective has an influence on the computation time for the template. The other steps seem mostly unaffected.

\smallskip

\noindent\textbf{Validation.} We also validated all synthesized \ASCerts (for both reachability and parity objectives). To this end, we first extracted a memoryless strategy $\scheduler_1$ from the \ASCert (as explained in \Cref{sec:extraction}) and subsequently computed the satisfaction probability in the induced MDP $\sg^\scheduler_1$ via \textsc{Storm} \cite{HenselJKQV22}. For all computed \ASCerts, we could confirm that the satisfaction probability is above the threshold $\lambda$ guaranteed by the \ASCert.

\smallskip

\noindent\textbf{Discussion.} In the synthesis of our \ASCerts, the computation of the certificate and almost surely winning strategy template are the most time-consuming steps. Recall that for the computation of the certificate we solve a linear program (in the size of the game). Upon inspection, we observed that the construction of the linear program consumes more than $80\%$ of the time needed for computing the certificate on average. For the computation of the template we construct a non-stochastic parity game using the gadgets from \cite{chatterjee_quantitative_games_2004}. The construction of these games requires around $60\%$ of the time (on average) needed for computing the template. Since the constructions of the linear program and non-stochastic parity game are currently implemented in Python, we see a lot of room for improvement when switching to a faster language.

Overall, we conclude that for the considered (moderately-sized) models our proof-of-concept can synthesize \ASCerts in a reasonable amount of time.

\subsection{Runtime Adaptation} \label{subsection:adaptation}
We now provide additional details concerning the runtime adaptation case study. Firstly, as mentioned in the main body, our model is based on the \texttt{hallway} model from \cite{10.1007/978-3-030-53291-8_21}. Our modified variant has $19248$ states and $65008$ state-action pairs.

\smallskip
\begin{figure}[t!]
	\centering
	
	\begin{subfigure}[b]{0.48\textwidth}
		\centering
		\includegraphics[width=\textwidth]{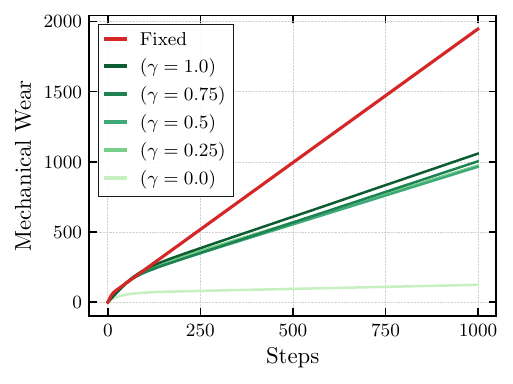}
		\caption{Permissiveness maximization in all states.}
		\label{subfig:permissiveness-all}
	\end{subfigure}
	\hfill
	\begin{subfigure}[b]{0.48\textwidth}
		\centering
		\includegraphics[width=\textwidth]{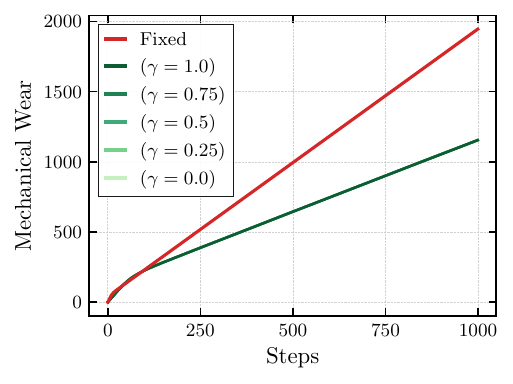}
		\caption{No permissiveness maximization.\\ \phantom{hello}}
		\label{subfig:permissiveness-none}
	\end{subfigure}
	
	\caption{Average mechanical wear for different configurations.}
	\label{fig:permissiveness-different-configurations}
\end{figure}

\noindent\textbf{Synthesis of Permissive Certificates.} Recall the linear program from \Cref{subsection:permissive-lp} for synthesizing certificates which maximize the overall permissiveness, where we can focus the maximization on a subset of states. In our experiments we computed \ASCerts $\stochasticTemplate_\gamma = (I, \vect{x}, \template)$ where $\vect{x}$ maximizes the overall permissiveness w.r.t. the following set of states:
\begin{description}
	\item[A:] In grid locations around the first puddle {\puddle}.
	\item[B:] In all grid locations.
\end{description}
The results for \textbf{A} have been presented in \cref{fig:runtime-adaptation} in \Cref{section:experiments} and enabled the robot to effectively reduce the mechanical wear at runtime. For \textbf{B}, the results are shown in \cref{subfig:permissiveness-all}. It can be seen that for all thresholds the mechanical wear can be reduced noticeably. In fact, the mechanical wear can be reduced even more compared to \textbf{A} in \cref{fig:runtime-adaptation} for $\gamma \neq 0.25$. However, for $\gamma = 0.25$, the reduction in mechanical wear is slightly less than in \textbf{A}. Overall, this shows that the flexibility provided by our synthesis procedure depends on the choice of states in which the permissiveness is maximized.

Lastly, we also synthesized \ASCerts where we did not maximize permissiveness at all. Specifically, we solved the LP in \Cref{subsection:permissive-lp} with zero as objective function, thereby allowing \textsc{Gurobi} to return any feasible solution. The results are depicted in \Cref{subfig:permissiveness-none} and show that for all probability thresholds the same reduction in mechanical wear can be achieved. In particular, unlike before, the higher failure tolerances (i.e., lower values for $\gamma$) are not leveraged to provide more flexibility to the robot. This highlights the usefulness of our permissiveness maximization heuristic.

%% file: tables/reach.tex
\begin{table}[t]
	\centering
	\setlength{\tabcolsep}{1.4pt}  %
	\caption{Runtimes for reachability objectives. All runtimes are wall-clock times and specified in seconds.}
	\label{tab:reachability}
	\begin{tabular}{l@{\hspace{2.3em}}cc@{\hspace{2.3em}}r@{\hspace{2.3em}}rrrrr}
		\toprule
		& & & & \multicolumn{4}{c}{\textsf{ASCert} Synthesis} \\
		Case Study & $\card{\states}$ & $\card{\SA}$ & Strategy & Invariant & Certificate & Template & Total \\
		\midrule
\multirow{2}{*}{\texttt{avoid}}& 8584 & 20270 & 1.594 & 0.919 & 7.151 & 1.902 & 10.082 \\
& 16748 & 39354 & 1.607 & 3.297 & 18.341 & 4.577 & 26.249 \\
\midrule
\multirow{4}{*}{\texttt{dice}}& 5755 & 7429 & 1.597 & 0.459 & 1.944 & 0.728 & 3.183 \\
& 12685 & 16549 & 1.584 & 1.961 & 4.139 & 2.613 & 8.730 \\
& 22315 & 29269 & 1.620 & 5.954 & 7.160 & 6.884 & 20.016 \\
& 34645 & 45589 & 1.639 & 14.184 & 11.661 & 15.563 & 41.434 \\
\midrule
\multirow{2}{*}{\texttt{hallway}}& 10240 & 26624 & 1.939 & 1.291 & 9.854 & 3.078 & 14.224 \\
& 25000 & 65000 & 4.217 & 7.165 & 24.558 & 11.605 & 43.361 \\
\midrule
\multirow{3}{*}{\texttt{mdsm}}& 16248 & 22416 & 1.612 & 3.055 & 6.892 & 5.713 & 15.733 \\
& 24888 & 34368 & 1.746 & 6.944 & 10.435 & 12.029 & 29.448 \\
& 33528 & 46320 & 1.646 & 12.268 & 13.645 & 20.645 & 46.607 \\
\bottomrule
	\end{tabular}
\end{table}

%% file: tables/parity.tex
\begin{table}[t]
	\centering
	\setlength{\tabcolsep}{1.4pt}  %
	\caption{Runtimes for parity objectives. All runtimes are wall-clock times and specified in seconds.}
	\label{tab:parity}
	\begin{tabular}{l @{\hspace{3em}} ccc @{\hspace{3em}} rrrrr}
		\toprule
		& & & & \multicolumn{4}{c}{\textsf{ASCert} Synthesis} \\
		Case Study & \#Colors & $\card{\states}$ & $\card{\SA}$ & Invariant & Certificate & Template & Total \\
\midrule
\multirow{9}{*}{\texttt{bigmec}}& \multirow{3}{*}{3} & 10003 & 21432 & 1.133 & 11.378 & 1.613 & 14.303 \\
&  & 20003 & 42860 & 4.301 & 28.648 & 4.297 & 37.351 \\
&  & 30003 & 64290 & 9.740 & 49.046 & 8.155 & 67.077 \\
\cmidrule{2-8}
& \multirow{3}{*}{5} & 10003 & 21432 & 1.217 & 11.398 & 1.650 & 14.269 \\
&  & 20003 & 42860 & 4.566 & 28.390 & 4.467 & 37.430 \\
&  & 30003 & 64290 & 10.128 & 48.760 & 8.559 & 67.459 \\
\midrule
\multirow{3}{*}{\texttt{grid}}& \multirow{3}{*}{3} & 257 & 1029 & 0.011 & 0.577 & 0.129 & 0.949 \\
&  & 1025 & 4100 & 0.051 & 2.307 & 0.490 & 2.850 \\
&  & 4097 & 16388 & 0.340 & 9.228 & 2.193 & 11.767 \\
\midrule
\multirow{4}{*}{\texttt{hallway}}& \multirow{2}{*}{3} & 10240 & 26624 & 1.308 & 9.978 & 2.553 & 13.839 \\
&  & 25000 & 65000 & 6.937 & 24.857 & 9.291 & 41.136 \\
\cmidrule{2-8}
& \multirow{2}{*}{5} & 10240 & 26624 & 1.331 & 9.936 & 3.687 & 14.962 \\
&  & 25000 & 65000 & 7.166 & 24.941 & 13.363 & 45.488 \\
\bottomrule
\end{tabular}
\end{table}

%% file: main.bbl
\begin{thebibliography}{10}
\providecommand{\url}[1]{\texttt{#1}}
\providecommand{\urlprefix}{URL }
\providecommand{\doi}[1]{https://doi.org/#1}

\bibitem{abate_quantitative_2025}
Abate, A., Giacobbe, M., Roy, D.: Quantitative supermartingale certificates.
  In: Piskac, R., Rakamaric, Z. (eds.) Computer Aided Verification. Lecture
  Notes in Computer Science, vol. 15932, pp. 3--28. Springer (2025).
  \doi{10.1007/978-3-031-98679-6\_1}

\bibitem{Alfaro97}
de~Alfaro, L.: Formal verification of probabilistic systems. Ph.D. thesis,
  Stanford University, {USA} (1997),
  \url{https://searchworks.stanford.edu/view/3910936}

\bibitem{AlshiekhBEKNT18}
Alshiekh, M., Bloem, R., Ehlers, R., K{\"{o}}nighofer, B., Niekum, S., Topcu,
  U.: Safe reinforcement learning via shielding. In: McIlraith, S.A.,
  Weinberger, K.Q. (eds.) Proceedings of the Thirty-Second {AAAI} Conference on
  Artificial Intelligence, (AAAI-18), the 30th innovative Applications of
  Artificial Intelligence (IAAI-18), and the 8th {AAAI} Symposium on
  Educational Advances in Artificial Intelligence (EAAI-18), New Orleans,
  Louisiana, USA, February 2-7, 2018. pp. 2669--2678. {AAAI} Press (2018).
  \doi{10.1609/AAAI.V32I1.11797}

\bibitem{concurrent_templates2026}
Anand, A., Baier, C., Chau, C., Kl{\"{u}}ppelholz, S., Mirzaei, A., Nayak,
  S.P., Schmuck, A.: Concurrent permissive strategy templates. In: Junges, S.,
  Katz, G. (eds.) Tools and Algorithms for the Construction and Analysis of
  Systems - 32nd International Conference, {TACAS} 2026, Held as Part of the
  International Joint Conferences on Theory and Practice of Software, {ETAPS}
  2026, Turin, Italy, April 11-16, 2026, Proceedings, Part {I}. pp. 377--397.
  Lecture Notes in Computer Science, Springer (2026).
  \doi{10.1007/978-3-032-22752-2\_20}

\bibitem{anand_computing_2023}
Anand, A., Mallik, K., Nayak, S.P., Schmuck, A.K.: Computing adequately
  permissive assumptions for synthesis. In: Tools and {{Algorithms}} for the
  {{Construction}} and {{Analysis}} of {{Systems}}. Lecture Notes in Computer
  Science, vol. 13994, pp. 211--228. Springer, Cham (2023).
  \doi{10.1007/978-3-031-30820-8\_15}

\bibitem{anand_quantitative_2025}
Anand, A., Nayak, S.P., Raha, R., Sa{\u g}lam, I., Schmuck, A.: Quantitative
  strategy templates. In: D'Souza, M., Komondoor, R., Srivathsan, B. (eds.)
  Automated Technology for Verification and Analysis - 23rd International
  Symposium, {ATVA} 2025, Bengaluru, India, October 27-31, 2025, Proceedings.
  Lecture Notes in Computer Science, vol. 16145, pp. 65--86. Springer (2025).
  \doi{10.1007/978-3-032-08707-2\_4}

\bibitem{anand2025followstarsdynamicomegaregular}
Anand, A., Nayak, S.P., Raha, R., Schmuck, A.K.: Follow the stars: Dynamic
  $\omega$-regular shielding of learned probabilistic policies. In: Proceedings
  of the 25th International Conference on Autonomous Agents and Multiagent
  Systems. pp. 2978–--2986. AAMAS '26, International Foundation for
  Autonomous Agents and Multiagent Systems, Richland, SC (2026).
  \doi{10.65109/WLFP2035}

\bibitem{anand_permissive_templates_2023}
Anand, A., Nayak, S.P., Schmuck, A.: Synthesizing permissive winning strategy
  templates for parity games. In: Enea, C., Lal, A. (eds.) Computer Aided
  Verification - 35th International Conference, {CAV} 2023, Paris, France, July
  17-22, 2023, Proceedings, Part {I}. Lecture Notes in Computer Science, vol.
  13964, pp. 436--458. Springer (2023). \doi{10.1007/978-3-031-37706-8\_22}

\bibitem{anand_templates_2024}
Anand, A., Nayak, S.P., Schmuck, A.: Strategy templates - robust certified
  interfaces for interacting systems. In: Akshay, S., Niemetz, A.,
  Sankaranarayanan, S. (eds.) Automated Technology for Verification and
  Analysis - 22nd International Symposium, {ATVA} 2024, Kyoto, Japan, October
  21-25, 2024, Proceedings, Part {I}. Lecture Notes in Computer Science, vol.
  15054, pp. 22--41. Springer (2024). \doi{10.1007/978-3-031-78709-6\_2}

\bibitem{qcomp2024}
Andriushchenko, R., Bork, A., Budde, C.E., Ceska, M., Grover, K., Hahn, E.M.,
  Hartmanns, A., Israelsen, B., Jansen, N., Jeppson, J., Junges, S.,
  K{\"{o}}hl, M.A., K{\"{o}}nighofer, B., Kret{\'{\i}}nsk{\'{y}}, J.,
  Meggendorfer, T., Parker, D., Pranger, S., Quatmann, T., Ruijters, E.,
  Taylor, L., Volk, M., Weininger, M., Zhang, Z.: Tools at the frontiers of
  quantitative verification. CoRR  \textbf{abs/2405.13583} (2024).
  \doi{10.48550/ARXIV.2405.13583}

\bibitem{andriushchenko2026umbunifiedmarkovbinary}
Andriushchenko, R., Hartmanns, A., Jeppson, J., Junges, S., Meggendorfer, T.,
  Parker, D., Quatmann, T., Weininger, M.: {UMB}: A unified {Markov} binary
  format for probabilistic model checking (extended version) (2026),
  \url{https://arxiv.org/abs/2606.17811}

\bibitem{baier_2026_22012734}
Baier, C., Cauquil, D., Chau, C., Klüppelholz, S., Schmuck, A.K.:
  Supplementary material for {ATVA} 2026 (2026). \doi{10.5281/zenodo.22012734}

\bibitem{baier_principles_2008}
Baier, C., Katoen, J.: Principles of Model Checking. The MIT Press, Cambridge,
  MA (2008)

\bibitem{BalsEKW24}
Bals, S., Evangelidis, A., Kret{\'{\i}}nsk{\'{y}}, J., Waibel, J.: {MULTIGAIN}
  2.0: {MDP} controller synthesis for multiple mean-payoff, {LTL} and
  steady-state constraints. In: {\'{A}}brah{\'{a}}m, E., Jr., M.M. (eds.)
  Proceedings of the 27th {ACM} International Conference on Hybrid Systems:
  Computation and Control, {HSCC} 2024, Hong Kong SAR, China, May 14-16, 2024.
  pp. 24:1--24:7. {ACM} (2024). \doi{10.1145/3641513.3650135}

\bibitem{belta2017formal}
Belta, C., Yordanov, B., Aydin~Gol, E.: Formal methods for discrete-time
  dynamical systems. Springer (2017)

\bibitem{bloem_shield_2015}
Bloem, R., K{\"o}nighofer, B., K{\"o}nighofer, R., Wang, C.: Shield synthesis:
  {R}untime enforcement for reactive systems. In: Tools and {{Algorithms}} for
  the {{Construction}} and {{Analysis}} of {{Systems}}. pp. 533--548. Lecture
  {{Notes}} in {{Computer Science}}, Springer, Berlin, Heidelberg (2015)

\bibitem{ChakarovS13}
Chakarov, A., Sankaranarayanan, S.: Probabilistic program analysis with
  martingales. In: Sharygina, N., Veith, H. (eds.) Computer Aided Verification
  - 25th International Conference, {CAV} 2013, Saint Petersburg, Russia, July
  13-19, 2013. Proceedings. Lecture Notes in Computer Science, vol.~8044, pp.
  511--526. Springer (2013). \doi{10.1007/978-3-642-39799-8\_34}

\bibitem{ChatterjeeGMZ22}
Chatterjee, K., Goharshady, A.K., Meggendorfer, T., Zikelic, D.: Sound and
  complete certificates for quantitative termination analysis of probabilistic
  programs. In: Shoham, S., Vizel, Y. (eds.) Computer Aided Verification - 34th
  International Conference, {CAV} 2022, Haifa, Israel, August 7-10, 2022,
  Proceedings, Part {I}. Lecture Notes in Computer Science, vol. 13371, pp.
  55--78. Springer (2022). \doi{10.1007/978-3-031-13185-1\_4}

\bibitem{ChatterjeeH06}
Chatterjee, K., Henzinger, T.A.: Strategy improvement and randomized
  subexponential algorithms for stochastic parity games. In: Durand, B.,
  Thomas, W. (eds.) {STACS} 2006, 23rd Annual Symposium on Theoretical Aspects
  of Computer Science, Marseille, France, February 23-25, 2006, Proceedings.
  Lecture Notes in Computer Science, vol.~3884, pp. 512--523. Springer (2006).
  \doi{10.1007/11672142\_42}

\bibitem{chatterjee_gist_2010}
Chatterjee, K., Henzinger, T.A., Jobstmann, B., Radhakrishna, A.: Gist: {A}
  solver for probabilistic games. In: Touili, T., Cook, B., Jackson, P.B.
  (eds.) Computer Aided Verification, 22nd International Conference, {CAV}
  2010, Edinburgh, UK, July 15-19, 2010. Proceedings. Lecture Notes in Computer
  Science, vol.~6174, pp. 665--669. Springer (2010).
  \doi{10.1007/978-3-642-14295-6\_57}

\bibitem{chatterjee_quantitative_games_2004}
Chatterjee, K., Jurdzi\'nski, M., Henzinger, T.A.: Quantitative stochastic
  parity games. In: Munro, J.I. (ed.) Proceedings of the Fifteenth Annual
  {ACM-SIAM} Symposium on Discrete Algorithms, {SODA} 2004, New Orleans,
  Louisiana, USA, January 11-14, 2004. pp. 121--130. {SIAM} (2004),
  \url{http://dl.acm.org/citation.cfm?id=982792.982808}

\bibitem{10.1007/978-3-030-53291-8_21}
Chatterjee, K., Katoen, J., Weininger, M., Winkler, T.: Stochastic games with
  lexicographic reachability-safety objectives. In: Lahiri, S.K., Wang, C.
  (eds.) Computer Aided Verification - 32nd International Conference, {CAV}
  2020, Los Angeles, CA, USA, July 21-24, 2020, Proceedings, Part {II}. pp.
  398--420. Lecture Notes in Computer Science, Springer (2020).
  \doi{10.1007/978-3-030-53291-8\_21}

\bibitem{ChatterjeeNZ17}
Chatterjee, K., Novotn{\'{y}}, P., Zikelic, D.: Stochastic invariants for
  probabilistic termination. In: Castagna, G., Gordon, A.D. (eds.) Proceedings
  of the 44th {ACM} {SIGPLAN} Symposium on Principles of Programming Languages,
  {POPL} 2017, Paris, France, January 18-20, 2017. pp. 145--160. {ACM} (2017).
  \doi{10.1145/3009837.3009873}

\bibitem{chatterjee_fixed_point_2025}
Chatterjee, K., Quatmann, T., Sch{\"{a}}ffeler, M., Weininger, M., Winkler, T.,
  Zilken, D.: Fixed point certificates for reachability and expected rewards in
  mdps. In: Gurfinkel, A., Heule, M. (eds.) Tools and Algorithms for the
  Construction and Analysis of Systems - 31st International Conference, {TACAS}
  2025, Held as Part of the International Joint Conferences on Theory and
  Practice of Software, {ETAPS} 2025, Hamilton, ON, Canada, May 3-8, 2025,
  Proceedings, Part {II}. Lecture Notes in Computer Science, vol. 15697, pp.
  130--151. Springer (2025). \doi{10.1007/978-3-031-90653-4\_7}

\bibitem{ChenFKPS13}
Chen, T., Forejt, V., Kwiatkowska, M.Z., Parker, D., Simaitis, A.: Automatic
  verification of competitive stochastic systems. Formal Methods Syst. Des.
  \textbf{43}(1),  61--92 (2013). \doi{10.1007/S10703-013-0183-7}

\bibitem{DBLP:conf/tacas/DragerFKPU14}
Dr{\"{a}}ger, K., Forejt, V., Kwiatkowska, M.Z., Parker, D., Ujma, M.:
  Permissive controller synthesis for probabilistic systems. In:
  {\'{A}}brah{\'{a}}m, E., Havelund, K. (eds.) Tools and Algorithms for the
  Construction and Analysis of Systems - 20th International Conference, {TACAS}
  2014, Held as Part of the European Joint Conferences on Theory and Practice
  of Software, {ETAPS} 2014, Grenoble, France, April 5-13, 2014. Proceedings.
  Lecture Notes in Computer Science, vol.~8413, pp. 531--546. Springer (2014).
  \doi{10.1007/978-3-642-54862-8\_44}

\bibitem{Durret2019}
Durrett, R.: Probability: Theory and Examples, 5th Edition. Cambridge
  University Press (2019). \doi{10.1017/9781108591034}

\bibitem{fijalkow2025gamesgraphslogicautomata}
Fijalkow, N., Aiswarya, C., Avni, G., Bertrand, N., Bouyer, P., Brenguier, R.,
  Carayol, A., Casares, A., Fearnley, J., Gastin, P., Gimbert, H., Henzinger,
  T.A., Horn, F., Ibsen-Jensen, R., Markey, N., Monmege, B., Novotný, P.,
  Ohlmann, P., Randour, M., Sankur, O., Schmitz, S., Serre, O., Skomra, M.,
  Sznajder, N., Vandenhove, P.: Games on graphs: From logic and automata to
  algorithms (2025), \url{https://arxiv.org/abs/2305.10546}

\bibitem{finkbeiner2016synthesis}
Finkbeiner, B.: Synthesis of reactive systems. In: Dependable Software Systems
  Engineering, pp. 72--98. IOS Press (2016)

\bibitem{funke_farkas_2020}
Funke, F., Jantsch, S., Baier, C.: Farkas certificates and minimal witnesses
  for probabilistic reachability constraints. In: {{TACAS}}. {{LNCS}}, vol.
  12078, pp. 324--345. Springer International Publishing, Cham (2020).
  \doi{10.1007/978-3-030-45190-5_18}

\bibitem{gurobi}
{Gurobi Optimization, LLC}: {Gurobi Optimizer Reference Manual} (2026),
  \url{https://www.gurobi.com}

\bibitem{hahn_mungojerrie_2023}
Hahn, E.M., Perez, M., Schewe, S., Somenzi, F., Trivedi, A., Wojtczak, D.:
  Mungojerrie: Linear-time objectives in model-free reinforcement learning. In:
  Sankaranarayanan, S., Sharygina, N. (eds.) Tools and Algorithms for the
  Construction and Analysis of Systems - 29th International Conference, {TACAS}
  2023, Held as Part of the European Joint Conferences on Theory and Practice
  of Software, {ETAPS} 2023, Paris, France, April 22-27, 2023, Proceedings,
  Part {I}. Lecture Notes in Computer Science, vol. 13993, pp. 527--545.
  Springer (2023). \doi{10.1007/978-3-031-30823-9\_27}

\bibitem{havelund2002synthesizing}
Havelund, K., Ro{\c{s}}u, G.: Synthesizing monitors for safety properties. In:
  Tools and Algorithms for the Construction and Analysis of Systems: 8th
  International Conference, TACAS 2002 Held as Part of the Joint European
  Conferences on Theory and Practice of Software, ETAPS 2002, Grenoble, France,
  April 8--12, 2002. pp. 342--356. Springer (2002)

\bibitem{HenselJKQV22}
Hensel, C., Junges, S., Katoen, J., Quatmann, T., Volk, M.: The probabilistic
  model checker storm. Int. J. Softw. Tools Technol. Transf.  \textbf{24}(4),
  589--610 (2022). \doi{10.1007/S10009-021-00633-Z}

\bibitem{HenzingerMSZ25}
Henzinger, T.A., Mallik, K., Sadeghi, P., Zikelic, D.: Supermartingale
  certificates for quantitative omega-regular verification and control. In:
  Piskac, R., Rakamaric, Z. (eds.) Computer Aided Verification - 37th
  International Conference, {CAV} 2025, Zagreb, Croatia, July 23-25, 2025,
  Proceedings, Part {II}. Lecture Notes in Computer Science, vol. 15932, pp.
  29--55. Springer (2025). \doi{10.1007/978-3-031-98679-6\_2}

\bibitem{DBLP:journals/corr/abs-2510-03481}
Huynh, K.V., Parker, D., Feng, L.: Robust permissive controller synthesis for
  interval mdps. CoRR  \textbf{abs/2510.03481} (2025).
  \doi{10.48550/ARXIV.2510.03481}

\bibitem{jantsch_dissertation_2022}
Jantsch, S.: Certificates and Witnesses for Probabilistic Model Checking. Ph.D.
  thesis, Dresden University of Technology, Germany (2022),
  \url{https://nbn-resolving.org/urn:nbn:de:bsz:14-qucosa2-804732}

\bibitem{konighofer2022correct}
K{\"o}nighofer, B., Bloem, R., Ehlers, R., Pek, C.: Correct-by-construction
  runtime enforcement in {AI}--a survey. In: Principles of Systems Design:
  Essays Dedicated to Thomas A. Henzinger on the Occasion of His 60th Birthday,
  pp. 650--663. Springer (2022)

\bibitem{KressGazitFainekosPappas2009}
Kress-Gazit, H., Fainekos, G.E., Pappas, G.J.: Temporal-logic-based reactive
  mission and motion planning. IEEE Transactions on Robotics  \textbf{25}(6),
  1370--1381 (2009). \doi{10.1109/TRO.2009.2030225}

\bibitem{KretinskyM19}
Kret{\'{\i}}nsk{\'{y}}, J., Meggendorfer, T.: Of cores: {A} partial-exploration
  framework for markov decision processes. In: Fokkink, W.J., van Glabbeek, R.
  (eds.) 30th International Conference on Concurrency Theory, {CONCUR} 2019,
  Amsterdam, The Netherlands, August 27-30, 2019. LIPIcs, vol.~140, pp.
  5:1--5:17. Schloss Dagstuhl - Leibniz-Zentrum f{\"{u}}r Informatik (2019).
  \doi{10.4230/LIPICS.CONCUR.2019.5}

\bibitem{KretinskyRSW22}
Kret{\'{\i}}nsk{\'{y}}, J., Ramneantu, E., Slivinskiy, A., Weininger, M.:
  Comparison of algorithms for simple stochastic games. Inf. Comput.
  \textbf{289}(Part),  104885 (2022). \doi{10.1016/J.IC.2022.104885}

\bibitem{KNP11}
Kwiatkowska, M., Norman, G., Parker, D.: {PRISM} 4.0: Verification of
  probabilistic real-time systems. In: Gopalakrishnan, G., Qadeer, S. (eds.)
  Proc. 23rd International Conference on Computer Aided Verification (CAV'11).
  LNCS, vol.~6806, pp. 585--591. Springer (2011)

\bibitem{kwiatkowska_prism_games_2020}
Kwiatkowska, M., Norman, G., Parker, D., Santos, G.: Prism-games 3.0:
  Stochastic game verification with concurrency, equilibria and time. In:
  Lahiri, S.K., Wang, C. (eds.) Computer Aided Verification - 32nd
  International Conference, {CAV} 2020, Los Angeles, CA, USA, July 21-24, 2020,
  Proceedings, Part {II}. Lecture Notes in Computer Science, vol. 12225, pp.
  475--487. Springer (2020). \doi{10.1007/978-3-030-53291-8\_25}

\bibitem{KwiatkowskaNP12}
Kwiatkowska, M.Z., Norman, G., Parker, D.: The {PRISM} benchmark suite. In:
  Ninth International Conference on Quantitative Evaluation of Systems, {QEST}
  2012, London, United Kingdom, September 17-20, 2012. pp. 203--204. {IEEE}
  Computer Society (2012). \doi{10.1109/QEST.2012.14}

\bibitem{lee1999runtime}
Lee, I., Kannan, S., Kim, M., Sokolsky, O., Viswanathan, M.: Runtime assurance
  based on formal specifications. In: International Conference on Parallel and
  Distributed Processing Techniques and Applications. pp. 279--287. PDPTA
  (1999)

\bibitem{lindemann2025formal}
Lindemann, L., Dimarogonas, D.V.: Formal Methods for Multi-Agent Feedback
  Control Systems. MIT Press (2025)

\bibitem{MajumdarS25}
Majumdar, R., Sathiyanarayana, V.R.: Sound and complete proof rules for
  probabilistic termination. Proc. {ACM} Program. Lang.  \textbf{9}({POPL}),
  1871--1902 (2025). \doi{10.1145/3704899}

\bibitem{majumdar_supermatingale_2024}
Majumdar, R., Sathiyanarayana, V.R., Soudjani, S.: Necessary and sufficient
  certificates for almost sure reachability. {IEEE} Control. Syst. Lett.
  \textbf{8},  2703--2708 (2024). \doi{10.1109/LCSYS.2024.3507279}

\bibitem{certifying_algorithms_2011}
McConnell, R.M., Mehlhorn, K., N{\"{a}}her, S., Schweitzer, P.: Certifying
  algorithms. Comput. Sci. Rev.  \textbf{5}(2),  119--161 (2011).
  \doi{10.1016/J.COSREV.2010.09.009}

\bibitem{AlthoffBeltaReviewFMCEforAutonomousDriving}
Mehdipour, N., Althoff, M., Tebbens, R.D., Belta, C.: Formal methods to comply
  with rules of the road in autonomous driving: State of the art and grand
  challenges. Automatica  \textbf{152},  110692 (2023).
  \doi{10.1016/j.automatica.2022.110692}

\bibitem{10221705}
Nayak, S.P., Egidio, L.N., Della~Rossa, M., Schmuck, A.K., Jungers, R.M.:
  Context-triggered abstraction-based control design. IEEE Open Journal of
  Control Systems  \textbf{2},  277--296 (2023).
  \doi{10.1109/OJCSYS.2023.3305835}

\bibitem{phalakarn_winning_2024}
Phalakarn, K., Pruekprasert, S., Hasuo, I.: Winning strategy templates for
  stochastic parity games towards permissive and resilient control. In:
  Anutariya, C., Bonsangue, M.M. (eds.) Theoretical Aspects of Computing -
  {ICTAC} 2024 - 21st International Colloquium, Bangkok, Thailand, November
  25-29, 2024, Proceedings. Lecture Notes in Computer Science, vol. 15373, pp.
  197--214. Springer (2024). \doi{10.1007/978-3-031-77019-7\_12}

\bibitem{phalakarn_templates_2025}
Phalakarn, K., Pruekprasert, S., Hasuo, I.: Strategy templates for almost-sure
  and positive winning of stochastic parity games towards permissive and
  resilient control. Theor. Comput. Sci.  \textbf{1057},  115535 (2025).
  \doi{10.1016/J.TCS.2025.115535}

\bibitem{ramadge1989control}
Ramadge, P.J., Wonham, W.M.: The control of discrete event systems. Proceedings
  of the IEEE  \textbf{77}(1),  81--98 (1989)

\bibitem{seshia2022toward}
Seshia, S.A., Sadigh, D., Sastry, S.S.: Toward verified artificial
  intelligence. Communications of the ACM  \textbf{65}(7),  46--55 (2022)

\bibitem{tabuada-2009-verification}
Tabuada, P.: Verification and control of hybrid systems. Springer, New York,
  NY, 2009 edn. (June 2009)

\bibitem{TakisakaOUH18}
Takisaka, T., Oyabu, Y., Urabe, N., Hasuo, I.: Ranking and repulsing
  supermartingales for reachability in probabilistic programs. In: Lahiri,
  S.K., Wang, C. (eds.) Automated Technology for Verification and Analysis -
  16th International Symposium, {ATVA} 2018, Los Angeles, CA, USA, October
  7-10, 2018, Proceedings. Lecture Notes in Computer Science, vol. 11138, pp.
  476--493. Springer (2018). \doi{10.1007/978-3-030-01090-4\_28}

\bibitem{TakisakaOUH21}
Takisaka, T., Oyabu, Y., Urabe, N., Hasuo, I.: Ranking and repulsing
  supermartingales for reachability in randomized programs. {ACM} Trans.
  Program. Lang. Syst.  \textbf{43}(2),  5:1--5:46 (2021).
  \doi{10.1145/3450967}

\bibitem{YIN2024100940}
Yin, X., Gao, B., Yu, X.: Formal synthesis of controllers for safety-critical
  autonomous systems: Developments and challenges. Annual Reviews in Control
  \textbf{57},  100940 (2024). \doi{10.1016/j.arcontrol.2024.100940}

\bibitem{zikelic_rasm_2023}
{\v Z}ikeli\'c, {\DJ}., Lechner, M., Henzinger, T.A., Chatterjee, K.: Learning
  control policies for stochastic systems with reach-avoid guarantees. In:
  Williams, B., Chen, Y., Neville, J. (eds.) Thirty-Seventh {AAAI} Conference
  on Artificial Intelligence, {AAAI} 2023, Thirty-Fifth Conference on
  Innovative Applications of Artificial Intelligence, {IAAI} 2023, Thirteenth
  Symposium on Educational Advances in Artificial Intelligence, {EAAI} 2023,
  Washington, DC, USA, February 7-14, 2023. pp. 11926--11935. {AAAI} Press
  (2023). \doi{10.1609/AAAI.V37I10.26407}

\end{thebibliography}
